\newcommand{\ALGtikzmarkcolor}{black}
\newcommand{\ALGtikzmarkextraindent}{4pt}
\newcommand{\ALGtikzmarkverticaloffsetstart}{-.5ex}
\newcommand{\ALGtikzmarkverticaloffsetend}{-.5ex}
\newcounter{ALG@tikzmark@tempcnta}
\newcommand\ALG@tikzmark@start{%
    \global\let\ALG@tikzmark@last\ALG@tikzmark@starttext%
    \expandafter\edef\csname ALG@tikzmark@\theALG@nested\endcsname{\theALG@tikzmark@tempcnta}%
    \tikzmark{ALG@tikzmark@start@\csname ALG@tikzmark@\theALG@nested\endcsname}%
    \addtocounter{ALG@tikzmark@tempcnta}{1}%
}
\def\ALG@tikzmark@starttext{start}
\newcommand\ALG@tikzmark@end{%
    \ifx\ALG@tikzmark@last\ALG@tikzmark@starttext
    \else
        \tikzmark{ALG@tikzmark@end@\csname ALG@tikzmark@\theALG@nested\endcsname}%
        \tikz[overlay,remember picture] \draw[\ALGtikzmarkcolor] let \p{S}=($(pic cs:ALG@tikzmark@start@\csname ALG@tikzmark@\theALG@nested\endcsname)+(\ALGtikzmarkextraindent,\ALGtikzmarkverticaloffsetstart)$), \p{E}=($(pic cs:ALG@tikzmark@end@\csname ALG@tikzmark@\theALG@nested\endcsname)+(\ALGtikzmarkextraindent,\ALGtikzmarkverticaloffsetend)$) in (\x{S},\y{S})--(\x{S},\y{E});%
    \fi
    \gdef\ALG@tikzmark@last{end}%
}
\apptocmd{\ALG@beginblock}{\ALG@tikzmark@start}{}{\errmessage{failed to patch}}
\pretocmd{\ALG@endblock}{\ALG@tikzmark@end}{}{\errmessage{failed to patch}}
\declaretheorem[name=Lemma, numberwithin = section]{lemma}
\declaretheorem[name=Theorem,sibling = lemma]{theorem}
\declaretheorem[name=Definition, sibling=lemma]{definition}
\declaretheorem[name=Corollary, sibling=lemma]{corollary}
\declaretheorem[name=Claim]{claim}
\crefname{claim}{Claim}{Claims}
\crefname{lemma}{Lemma}{Lemmas}
\crefname{theorem}{Theorem}{Theorems}
\crefname{proposition}{Proposition}{Propositions}
\crefname{question}{Question}{Questions}
\crefname{definition}{Definition}{Definitions}
\crefname{conjecture}{Conjecture}{Conjectures}
\crefname{observation}{Observation}{Observations}
\crefname{corollary}{Corollary}{Corollaries}
\crefname{remark}{Remark}{Remarks}
\crefname{scenario}{Scenario}{Scenarios}
\def\cqedsymbol{\ifmmode$\lrcorner$\else{\unskip\nobreak\hfil
\penalty50\hskip1em\null\nobreak\hfil$\lrcorner$
\parfillskip=0pt\finalhyphendemerits=0\endgraf}\fi}
\newenvironment{subproof}[1][\proofname]{%
  \begin{proof}[#1]%
}{%
  \end{proof}%
}
\setlist[itemize]{topsep=0ex,itemsep=0ex,parsep=0.25ex}
\setlist[enumerate]{topsep=0ex,itemsep=0ex,parsep=0.25ex}
\definecolor{bordeaux}{RGB}{100,0,50}
\definecolor{darkblue}{RGB}{25, 25, 112}
\newcommand{\mainR}{r}
\newcommand{\tinyR}{\rho}
\newcommand{\partition}{\mathcal{P}}
\newcommand{\WReach}[5]{\mathrm{WReach}_{#1}[#2,#3,#4,#5]}
\newcommand{\WReachS}[1]{\WReach{\mainR}{G}{\partition}{\preceq}{#1}}
\newcommand{\wcol}{\mathrm{wcol}}
\newcommand{\dom}{\mathrm{dom}}
\newcommand{\dist}{\mathrm{dist}}
\newcommand{\treewidth}{\mathrm{tw}}
\newcommand{\topm}{\mathrm{top}}
\title{Bounding $\varepsilon$-scatter dimension via metric sparsity%
\thanks{Marcin is supported by Polish National Science Centre SONATA BIS-12 grant number 2022/46/E/ST6/00143.}}
\author{Romain Bourneuf\thanks{\'Ecole Normale Sup\'erieure de Lyon, LIP, Lyon, France, \texttt{romain.bourneuf@ens-lyon.fr}}
\and 
Marcin Pilipczuk\thanks{Institute of Informatics, University of Warsaw, Poland, \texttt{m.pilipczuk@uw.edu.pl}}}
\date{}
\begin{document}
\maketitle
\begin{abstract}
    A recent work of Abbasi et al. [FOCS 2023] introduced the notion of \emph{$\varepsilon$-scatter dimension} of a metric space and
    showed a general framework for efficient parameterized approximation schemes (so-called EPASes) for a wide range of clustering problems
    in classes of metric spaces that admit a bound on the $\varepsilon$-scatter dimension.
    Our main result is such a bound 
    for metrics induced by graphs from any fixed proper minor-closed graph class.
    The bound is double-exponential in $\varepsilon^{-1}$ and the Hadwiger number
    of the graph class and is accompanied by a nearly tight lower bound that
    holds even in graph classes of bounded treewidth. 

    On the way to the main result, we introduce metric analogs of well-known graph
    invariants from the theory of \emph{sparsity}, including generalized coloring numbers
    and flatness (aka uniform quasi-wideness), and show bounds for these invariants
    in proper minor-closed graph classes.

    Finally, we show the power of newly introduced toolbox by showing a coreset
    for \textsc{$k$-Center} in any proper minor-closed graph class whose size is polynomial
    in $k$ (but the exponent of the polynomial depends on the graph class and $\varepsilon^{-1}$). 
\end{abstract}

\section{Introduction}
In this work, our focus is on the family of \emph{$k$-clustering} problems, where the
goal is to partition the input data points into $k$ clusters and choose a \emph{center} for
each cluster to minimize some objective function.
More precisely, we consider the \emph{Norm $k$-Clustering} problem, 
where the input
consists of a metric space $(V,\dist)$, a set of data points (called sometimes \emph{clients})
$P \subseteq V$, a set of potential centers (called sometimes \emph{facilities}) $F \subseteq V$, an integer $k$, 
and an objective function $f: \mathbb{R}^P \to \mathbb{R}$ that is a monotone norm,
and the goal is to minimize $f( (\dist(p, X))_{p \in P} )$ over all choices 
of centers $X \subseteq F$, $|X| \leq k$. Here, $\dist(p,X) = \min_{x \in X} \dist(p,x)$
is the distance from the point $p$ to its closest center. 

The \textsc{Norm $k$-Clustering} problem captures a wide range of clustering problems
appearing in many branches of computer science, from data mining through machine learning to computational geometry and optimization. 
For example, we get \textsc{$k$-Center} for $f( (a_p)_{p \in P}) = \max_{p \in P} a_p$,
\textsc{$k$-Median} for $f( (a_p)_{p \in P}) = \sum_{p \in P} a_p$, and
\textsc{$k$-Means} for $f( (a_p)_{p \in P}) = \sum_{p \in P} a_p^2$.

Unfortunately, most variants of \textsc{Norm $k$-Clustering} 
turn out to be rather difficult 
in important classes of metrics, such as high-dimensional Euclidean spaces~\cite{AwasthiCKS15,dasgupta-clustering}. 
However, in many applications the desired number of clusters, $k$, is small.
This makes the paradigm of \emph{parameterized complexity} applicable in the form of \emph{parameterized approximation}.
These considerations lead to the notion of
an \emph{Efficient Parameterized Approximation Scheme} (EPAS for short).
An EPAS takes on input an additional accuracy parameter $\varepsilon > 0$,
runs in time $f(k,\varepsilon) \mathrm{poly}(|V|)$ for a computable function $f$,
and returns a $(1+\varepsilon)$-approximate solution. 
EPASes for most basic \textsc{Norm $k$-Clustering} problems have been known 
since 2002 for Euclidean spaces~\cite{BadoiuHI02,KumarSS10}; results for other metrics include~\cite{BakerBHJK020,BravermanJKW21,KatsikarelisLP19}.

Abbasi et al.~\cite{scatter-dim} presented at FOCS 2023 a unifying concept explaining the existence
of EPASes for all \textsc{Norm $k$-Clustering} problems in many structured classes
of metric spaces. 
Their main conceptual contribution is the notion of \emph{$\varepsilon$-scatter dimension}
and a general framework that turns a bound on this invariant of a metric space into an
EPAS for \textsc{Norm $k$-Clustering}. That is, any class of metric spaces that admits
a (even purely existential, i.e., not algorithmic) bound on $\varepsilon$-scatter dimension
admits also an EPAS for any \textsc{Norm $k$-Clustering} problem. 

\begin{definition}
Let $(V,\dist)$ be a metric space and $\varepsilon > 0$ be an accuracy parameter.
A \emph{$\varepsilon$-ladder} is a sequence $(x_i,p_i)_{i=1}^\ell$ of pairs of points of~$V$
such that for some $r > 0$ we have: $\forall_{1 \leq j < i \leq \ell} \dist(p_j,x_i) \leq r$
yet $\forall_{1 \leq i \leq \ell} \dist(p_i,x_i) > (1+\varepsilon)r$.
The number of pairs $\ell$ is the \emph{length} of the $\varepsilon$-ladder
and the real $r$ is the \emph{width} of the $\varepsilon$-ladder. 
The \emph{$\varepsilon$-scatter dimension} of $(V,\dist)$ is the length of the longest
$\varepsilon$-ladder in $(V,\dist)$. 
\end{definition}

It is relatively easy to observe that $d$-dimensional Euclidean spaces
have $\varepsilon$-scatter dimension $(\Theta(\varepsilon^{-1}))^d$;
Abbasi et al.~\cite{scatter-dim} showed how to modify their framework to also capture high-dimensional
Euclidean spaces.

However, in many applications the metric space cannot be represented or well-approximated 
by a Euclidean space. A more versatile representation is a \emph{graph metric},
where $(V,\dist)$ is represented as an edge-weighted graph $G$ with $V(G) = V$
and $\dist(x,y)$ is the minimum distance between $x$ and $y$ in $G$. 
For example, distances in road networks can be represented as a graph metric with the underlying
graph being close to a planar graph. 
This motivates the study of properties of graph metrics induced by classes of sparse graphs,
such as planar graphs or, more generally, graphs with a fixed excluded minor. 

Abbasi et al.~\cite{scatter-dim} proved that graphs of bounded treewidth induce
metrics of bounded $\varepsilon$-scatter dimension via an elaborate argument.
Furthermore, they observed that combining the above with an advanced metric embedding result
of~\cite{Fox-EpsteinKS19}, one gets also a bound for graph metrics induced by planar graphs. 
Our main result is a clean and direct argument for any proper minor-closed graph class.
\begin{restatable}{theorem}{thmdim}\label{thm:dim-bound}
For every integer $h \geq 1$ and real $\varepsilon > 0$,
if $G$ is an edge-weighted $K_h$-minor-free graph, then the $\varepsilon$-scatter
dimension of the metric induced by $G$ is bounded by 
\[ \left(6c \cdot (9\varepsilon^{-1} + 2)^c\right)^{c+1}, \]
where 
\[ c = \binom{h-2 + 2 \cdot \lceil 36h\varepsilon^{-1} \rceil}{h-1} \cdot (12 + 72\varepsilon^{-1}) \cdot (h-1). \]
In particular, the $\varepsilon$-scatter dimension of $G$ is bounded
by $2^{(h\varepsilon^{-1})^{\mathcal{O}(h)}}$.
\end{restatable}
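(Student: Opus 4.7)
The plan is to develop metric versions of the classical graph-sparsity toolbox and combine the pieces, following the outline announced in the abstract. Concretely, I would introduce a metric analog of the weak $r$-coloring number $\wcol_r$, prove the standard minor-closed bound for it, derive a metric flatness (uniform quasi-wideness) statement from that bound, and finally use flatness to bound the length of any $\varepsilon$-ladder.

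For the first step, I would define $\wcol_r(G)$ via partitions $\partition$ of $V(G)$ ordered by $\preceq$, measuring for each vertex $v$ the number of parts that appear as the $\preceq$-minimum along some short path from $v$ in the weighted metric (this is exactly the abstraction suggested by the $\WReach{r}{G}{\partition}{\preceq}{v}$ notation set up in the preamble). The classical proofs bounding $\wcol_r$ in $K_h$-minor-free graphs via BFS-layering and chordal completions adapt to the weighted metric setting with a bound of the form $\binom{h-2+2r}{h-1}(h-1)$, which is precisely the binomial factor appearing in $c$ when $r$ is specialised to the scale $\lceil 36h\varepsilon^{-1}\rceil$ needed by the later argument; the remaining factor $(12+72\varepsilon^{-1})(h-1)$ in $c$ absorbs the overhead of turning reachable-set bounds into dominator bounds at this scale.

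Next, I would prove a metric flatness lemma: if $\wcol_r(G) \leq s$, then any sufficiently large set $A \subseteq V$ admits the removal of at most $s-1$ \emph{dominator} vertices after which $A$ contains a large subset of points at pairwise distance greater than $r$. This is the standard Ramsey-style extraction lifted to the metric setting: one iteratively picks a candidate $v \in A$, looks at the weakly $r$-reachable parts of $v$ in the optimising ordering, observes that at most $s-1$ of those parts can serve as common "meeting places" for other points of $A$, and peels off a scattered subfamily while paying dominators.

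Finally, I would apply the flatness lemma iteratively to a hypothetical $\varepsilon$-ladder $(x_i, p_i)_{i=1}^\ell$ of width $r$. Starting from a long ladder, flatness applied to the $p_i$-side at scale just below $\varepsilon r$ yields, after deleting up to $c$ dominators, a long sub-ladder whose $p_i$'s are pairwise $\varepsilon r$-scattered. Iterating such cleanings $c+1$ times shrinks $\ell$ by a polynomial factor in $\varepsilon^{-1}$ per iteration and consumes $c$ dominators each time, which lines up with the outer exponent $c+1$ and the polynomial factor $(9\varepsilon^{-1}+2)^c$ in the claimed bound. Once fully cleaned, the closeness relations $\dist(p_j, x_i) \leq r$ for $j<i$, the slack $\dist(p_i, x_i) > (1+\varepsilon)r$, and the pairwise separation of the $p_j$'s together force a collection of almost-vertex-disjoint shortest paths whose contraction witnesses a $K_h$-minor, yielding the contradiction. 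The main obstacle is this last implication: carving $h-1$ vertex-disjoint branch sets out of a scattered ladder, using the scattered structure to isolate balls around the $p_j$'s and the $(1+\varepsilon)$ triangle-inequality slack to route paths through the $x_i$'s without collisions; getting the dominator accounting aligned so that the flatness iterations terminate at depth exactly $c+1$ is where the precise double-exponential shape of the bound is determined.
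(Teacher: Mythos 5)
Your high-level plan matches the paper's architecture up through the flatness lemma, but the final step — where the bound is actually closed — is different from what the paper does and, as sketched, has a real gap.

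On the weak coloring number bound: the paper does not adapt BFS-layering or chordal completions. It takes the \emph{buffered cop decompositions} of Chang et al.~\cite{buffered-cop-dec}, constructs a partition into small-strong-diameter pieces by picking a $\Delta$-scattered net on each supernode's shortest-path skeleton, and then carries over the ``few ancestor supernodes within each scale'' counting from the sparse-cover literature (\cref{lem:few-supernodes-close}). The binomial factor in $c$ counts relevant ancestor supernodes; the $(12 + 72\varepsilon^{-1})(h-1)$ factor counts parts within a single supernode (at most $12+8r/\rho$ per root-to-leaf path of the skeleton, times $h-1$ leaves). Neither factor is an ``overhead of turning reachable-set bounds into dominator bounds'' as you describe — your framing of the origin of both factors in $c$ is off.

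The bigger issue is the endgame. The paper applies flatness \emph{once}, not iteratively; the $(c+1)$-exponent in the final bound is internal to the flatness statement (\cref{thm:wcol2flat} requires $|A| \geq (2mc)^{c+1}$, because each of $\leq c$ hub-peeling rounds shrinks $A$ by a factor $2mc$). After that single application, the paper picks $m = 3(r/\rho+2)^c$ and uses a \emph{pigeonhole on distance profiles} of the scattered parts to the (at most $c$) deleted hubs: three scattered parts $X_i, X_j, X_k$ with the same profile force the shortest $x_j\!\to\! p_i$ or $x_k\!\to\! p_i$ path to cross a hub, and the matching profile plus the triangle inequality then yields $\dist(x_i,p_i)\leq (1+\varepsilon)r$, a direct contradiction with the ladder property. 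Your proposal replaces this with iterating the cleaning $c+1$ times and then extracting a $K_h$-minor from the cleaned ladder. The iteration misattributes where $(c+1)$ comes from, and the minor-extraction step is a genuinely missing idea: having the $p_j$'s pairwise scattered in $G-S$ and each $x_i$ within distance $r$ of all earlier $p_j$'s does not in any obvious way yield vertex-disjoint branch sets pairwise connected by disjoint paths — the $x_i$'s are not controlled relative to one another, and the paths $x_i\!\to\! p_j$ overlap heavily as $i$ varies. You would need substantially new structure (and some way to escape the hubs you deleted) to carry it out, whereas the paper's distance-profile pigeonhole sidesteps the whole issue.
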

This in particular improves upon the triple-exponential (in $\varepsilon^{-1}$) bound for 
planar graphs and matches the double-exponential bound on treewidth of~\cite{scatter-dim}.
More importantly, it extends the scope of applicability of the framework of~\cite{scatter-dim}
to arbitrary proper minor-closed graph classes.
\begin{corollary}
    For every proper minor-closed graph class $\mathcal{G}$, 
    \textsc{Norm $k$-Clustering} admits an EPAS when restricted to metrics
    induced by graphs from $\mathcal{G}$.
\end{corollary}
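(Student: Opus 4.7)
The plan is to derive the corollary as an immediate black-box combination of \Cref{thm:dim-bound} with the EPAS framework of Abbasi et al.~\cite{scatter-dim}.

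First, I would use the fact that $\mathcal{G}$ is proper minor-closed to fix an integer $h = h(\mathcal{G})$ such that every $G \in \mathcal{G}$ is $K_h$-minor-free; this $h$ is a constant depending only on the class. Then, for any \textsc{Norm $k$-Clustering} instance with accuracy $\varepsilon > 0$ whose metric is induced by some $G \in \mathcal{G}$, I would apply \Cref{thm:dim-bound} with parameters $h$ and $\varepsilon$, obtaining that the $\varepsilon$-scatter dimension of $(V(G),\dist)$ is bounded by an explicit function $D(h,\varepsilon) = 2^{(h\varepsilon^{-1})^{\mathcal{O}(h)}}$ that depends only on $h$ and $\varepsilon$.

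Second, I would invoke the main framework result of~\cite{scatter-dim}: any class of metric spaces admitting an (even purely existential) bound on $\varepsilon$-scatter dimension as a function of $\varepsilon$ admits an EPAS for every \textsc{Norm $k$-Clustering} problem. Plugging in the bound $D(h,\varepsilon)$, the running time produced by the framework is of the form $f(k,\varepsilon,D(h,\varepsilon)) \cdot \mathrm{poly}(|V|)$, which collapses to $g_{\mathcal{G}}(k,\varepsilon) \cdot \mathrm{poly}(|V|)$ since $h$ is determined by $\mathcal{G}$. This matches the EPAS template and establishes the corollary.

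There is essentially no obstacle once \Cref{thm:dim-bound} is in hand. The only point worth explicitly verifying is that the framework of~\cite{scatter-dim} indeed accepts a purely existential scatter-dimension bound — so that no algorithm producing short ladders is required on our side — which is precisely the feature emphasized in the paragraph preceding \Cref{thm:dim-bound}. Consequently all the algorithmic content is outsourced to~\cite{scatter-dim}, and our contribution is reduced to supplying the numerical scatter-dimension bound for edge-weighted $K_h$-minor-free graphs.
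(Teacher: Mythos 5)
Your proposal is correct and follows exactly the route the paper intends (the paper states this corollary without a written proof precisely because it is the immediate combination of \cref{thm:dim-bound} with the framework of~\cite{scatter-dim}): fix $h$ with $K_h \not\preceq G$ for all $G \in \mathcal{G}$, apply \cref{thm:dim-bound} to bound the $\varepsilon$-scatter dimension by a function of $h$ and $\varepsilon$ alone, and feed this purely existential bound into the EPAS framework. Nothing is missing.
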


We accompany \cref{thm:dim-bound} with an almost tight lower bound that shows
that the bound on the $\varepsilon$-scatter dimension cannot be improved even in classes
of bounded treewidth (recall that graphs of treewidth at most $h$ are $K_{h+2}$-minor-free). 
\begin{restatable}{theorem}{thmdimlb}\label{thm:dim-lb}
For every integers $t,r \geq 1$ and real $0 < \varepsilon < \frac{1}{r+2}$, 
there exists an edge-weighted graph of treewidth at most $2t+2$ that 
induces a metric of $\varepsilon$-scatter dimension at least $2^{\binom{t+r}{t}}$. 
\end{restatable}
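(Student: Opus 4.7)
The plan is to give a direct, recursive construction of a family of edge-weighted graphs $G_{t, r}$ with treewidth at most $2t + 2$ that admits an $\varepsilon$-ladder of length $2^{\binom{t+r}{t}}$ in its induced metric, for every $\varepsilon < \tfrac{1}{r+2}$.

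The construction is driven by Pascal's identity $\binom{t+r}{t} = \binom{t+r-1}{t-1} + \binom{t+r-1}{t}$, which, at the level of the target ladder lengths $L(t, r) := 2^{\binom{t+r}{t}}$, becomes the multiplicative recurrence $L(t, r) = L(t-1, r) \cdot L(t, r-1)$ with base cases $L(0, r) = L(t, 0) = 2$. This suggests building $G_{t, r}$ from $L(t-1, r)$ copies of $G_{t, r-1}$ attached to a spine derived from $G_{t-1, r}$.

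\textbf{Base cases.} For $t = 0$ or $r = 0$, a weighted $3$-vertex path (of treewidth $1$) carries the desired length-$2$ ladder, since one may pair the two endpoints in both orders with the middle vertex as a pivot.

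\textbf{Recursive step.} Starting from $G_{t-1, r}$ (whose treewidth is at most $2t$ and whose ladder has length $L(t-1, r)$) and $G_{t, r-1}$ (whose treewidth is at most $2t + 2$ and whose ladder has length $L(t, r-1)$), I would form $G_{t, r}$ by creating $L(t-1, r)$ disjoint copies of $G_{t, r-1}$, indexed by the pairs of the ladder of $G_{t-1, r}$, and gluing each copy to a spine modelled on $G_{t-1, r}$ through a small ($O(1)$-size) identification of vertices, so that intra-copy distances and spine distances combine suitably. A length-$L(t-1, r) \cdot L(t, r-1)$ ladder for $G_{t, r}$ is then obtained by lexicographic interleaving: enumerate the pairs of the spine ladder in order, and within each, enumerate the pairs of the associated copy's ladder.

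\textbf{Treewidth.} A tree decomposition of $G_{t, r}$ is obtained by grafting each copy's tree decomposition (width $\leq 2t + 2$), augmented by a constant number of spine vertices used for the attachment, onto the decomposition of the spine (width $\leq 2t$). Since the attachment region has bounded size, the resulting decomposition has width $\leq 2t + 2$.

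\textbf{Weights and main obstacle.} The principal technical difficulty is the calibration of edge weights across the $t + r$ levels of recursion. The weights must be chosen tightly enough to keep $\dist(p_j, x_i) \leq r$ for all $j < i$, while preserving the strict inequality $\dist(p_i, x_i) > (1 + \varepsilon) r$ throughout. The hypothesis $\varepsilon < \tfrac{1}{r+2}$ should provide precisely the slack needed to propagate the ladder condition over $O(r)$ recursive levels without the inequalities collapsing: rough bookkeeping suggests that the accumulated additive error across $r$ levels is of order $r \cdot \varepsilon \cdot r$, which must remain strictly below $r$. Verifying the two inequalities by induction on $(t, r)$, using triangle inequalities across the spine attachments, is expected to be the delicate bulk of the proof.
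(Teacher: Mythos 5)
Your high-level plan is the same one the paper uses: a recursive construction driven by Pascal's identity, yielding the multiplicative recurrence $L(t,r) = L(t-1,r)\cdot L(t,r-1)$ and an interleaved (lexicographic) ladder. However, there are two issues, one of which is a genuine error.

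\textbf{The treewidth argument is broken as stated.} You take the spine to be $G_{t-1,r}$ (treewidth $\le 2t$) and the attached copies to be $G_{t,r-1}$ (treewidth $\le 2t+2$), and then augment each copy's tree decomposition by the attachment vertices. But the copy already has width $2t+2$; adding even one attachment vertex to all its bags forces width $\ge 2t+3$. The room to add attachment vertices has to be in the copies, not the spine, so the roles must be reversed: the spine should keep the larger budget and the glued-in copies should be the ones whose controlling parameter drops. Indeed, the paper's recursion builds $G(k+1,r+1,d)$ from a spine $G(k+1,r,d)$ (same $k$, treewidth $2k+2$, used as-is) with attached copies $G(k,r+1,d)$ (treewidth $2k$), each made universal to exactly two spine vertices; this leaves precisely two units of slack so the augmented decompositions come out to width $2k+2$. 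A further consequence is that vertex \emph{identification} (as you propose) is not the right attachment operation --- it creates metric shortcuts you cannot control; the paper instead adds edges from every vertex of the copy to the two spine vertices and drowns the unwanted ones in weight $\infty$, which both keeps the metric correct and is what the tree-decomposition argument actually exploits.

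\textbf{The weight calibration is the real content and is not present.} You defer it as ``the delicate bulk of the proof,'' and it genuinely is: a naive bookkeeping of additive errors is not enough. The paper has to rescale the spine's weights by a carefully chosen factor $1-2\gamma\varepsilon$, set the attachment edges to weight exactly $\gamma\varepsilon$ with $\gamma = \tfrac{1-\varepsilon}{2(r+1)\varepsilon} > \tfrac{1}{2}$, and recurse on the spine at a modified accuracy $\varepsilon' = \varepsilon/(1-2\gamma\varepsilon)$ for which one verifies $0 < \varepsilon' < 1$ and $r < (1-\varepsilon')/\varepsilon'$. It then proves a strengthened inductive invariant (all ``forward'' distances $\dist(x_j,p_i)$ are at least the ladder width, not just bounded above) so that shortest paths in the composite graph do not sneak through a wrong copy. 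This invariant is essential and does not follow from $O(r^2\varepsilon)<r$ estimates. Without it your inductive step has no way to rule out a path $z_i \to q_i$ that dips into a sibling copy of the current subgadget and comes back shorter than $(1+\varepsilon)$ times the width.

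So: the blueprint matches the paper, but (a) the spine/copy parameter roles must be swapped to make the treewidth count work, (b) adjacency-with-$\infty$-weights must replace vertex identification, and (c) the heart of the proof --- the rescaling scheme, the recursive accuracy parameter, and the strengthened ladder invariant --- is missing.
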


Finally, we show that our techniques can also be used to develop new coresets in 
graph metrics. 
A \emph{coreset} is a (possibly weighted) small subset of clients that
approximates well the objective function for any set of at most $k$ centers. 
Coresets have been the leading technique for designing algorithms for clustering problems
for over a decade since the seminal works of Chen~\cite{Chen09} and Feldman and Langberg~\cite{FeldmanL11}.
While we know very good bounds on coresets for most problems in many classes
of metric spaces (see e.g.~\cite{BravermanCJKST022,Cohen-AddadLSS22}),
most of the toolbox is inapplicable to the special case of the \textsc{$k$-Center} problem.
Furthermore, the following example shows that one needs to consider
restricted classes of metrics to get interesting coreset results
for \textsc{$k$-Center}: let the metric consist of clients $p_1,\ldots,p_n$ and potential centers $x_1,\ldots,x_n$ with
$\dist(p_i,x_i) = 1+\varepsilon$ for $i \in [n]$ and
$\dist(p_i,x_j) = 1$ for $i,j \in [n]$, $i \neq j$. Then, 
for $k=1$, one needs to remember all clients 
in a coreset of any precision better than $1+\varepsilon$.%
\footnote{We thank Chris Schwiegelshohn for enlightening us with this example.}
Among restricted classes of metrics, we are only aware of a coreset for \textsc{$k$-Center} in doubling metrics~\cite{AghamolaeiG18}.
We fill in this gap by providing a coreset for \textsc{$k$-Center} in proper minor-closed
graph classes whose size is polynomial in the size bound $k$ (for fixed class and $\varepsilon > 0$).
\begin{theorem}\label{thm:coreset}
Given a \textsc{$k$-Center} instance, where the metric is given as a graph metric
on a $K_h$-minor-free graph, and an accuracy parameter $\varepsilon > 0$, one
can in polynomial time identify a subset $C$ of at most $k^{(h\varepsilon^{-1})^{\mathcal{O}(h)}}$
clients such that for every set $X$ of at most $k$ potential centers
it holds that
\[ \max_{p \in P} \dist(p, X) \leq (1+\varepsilon) \max_{p \in C} \dist(p, X).\]
\end{theorem}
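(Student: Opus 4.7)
My plan is to build the coreset $C$ greedily by an iterative ``violation witness'' procedure and bound its size via the scatter-dimension bound of \cref{thm:dim-bound}. Initialize $C = \emptyset$; while there exist a set $X \subseteq V$ with $|X| \le k$ and a client $p^* \in P$ such that, with $r := \max_{c \in C} \dist(c, X)$, one has $\dist(p^*, X) > (1+\varepsilon) r$, add $p^*$ to $C$. On termination $C$ is a valid $(1+\varepsilon)$-coreset by construction, and each iteration runs in polynomial time since finding a witness reduces to a bicriteria decision variant of $k$-Center. Among valid witnesses at each step I pick one maximizing $r$; because any witness valid at step $t+1$ (with state $C_t$) is also valid at step $t$ (with state $C_{t-1} \subseteq C_t$, since the coverage condition is monotone in $C$), this guarantees that the widths $r_1 \ge r_2 \ge \cdots$ form a non-increasing sequence.

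\textbf{Ladder extraction.} Write the witness sequence as $(X_t, p_t^*, r_t)_{t=1}^T$. At each step $t$, every $p \in C_{t-1}$ has a nearest center $\chi_t(p) \in X_t$ with $\dist(p,\chi_t(p)) \le r_t$, giving a labelling $\chi_t : C_{t-1} \to X_t$ with at most $k$ classes. Restrict to iterations $t_1 < \cdots < t_m$ whose widths lie in a common multiplicative window $[R, (1+\varepsilon/4)R]$, and build a trie on $\{p_{t_s}^*\}_{s=1}^m$ by splitting from $t = t_m$ backwards according to $\chi_t$. The trie has $m$ leaves and branching factor at most $k$, so its deepest root-to-leaf path has length $\ell \ge \log_k m$. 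Unfolding that path yields a subsequence $(\tilde{x}_i, \tilde{p}_i)_{i=1}^\ell$ satisfying $\dist(\tilde{p}_j, \tilde{x}_i) \le r_{t_{s_i}} \le (1+\varepsilon/4)R$ for $j < i$, and $\dist(\tilde{p}_i, \tilde{x}_i) > (1+\varepsilon)r_{t_{s_i}} \ge (1+\varepsilon)R$; this is an $(\varepsilon/2)$-ladder of width $(1+\varepsilon/4)R$. Applying \cref{thm:dim-bound} at accuracy $\varepsilon/2$ forces $\ell \le D := 2^{(h\varepsilon^{-1})^{\mathcal{O}(h)}}$, whence $m \le k^D$ per window.

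\textbf{Main obstacles.} Two gaps remain between this sketch and the theorem. The first is the aggregation across radius windows, which naively contributes a factor $O(\log \Delta)$ in the aspect ratio $\Delta$ of the metric; I would eliminate it by a preliminary constant-factor $k$-Center approximation (e.g., Gonzalez' farthest-point seeding) that confines the useful range of $r_t$ to a $\mathrm{poly}(k/\varepsilon)$-wide interval, cutting the number of relevant windows to $O(\log(k/\varepsilon))$. The second, and in my view the real technical crux, is that the trie argument produces the weaker per-window bound $k^D$ with $D = 2^{(h\varepsilon^{-1})^{\mathcal{O}(h)}}$, whereas the theorem promises the sharper $k^{(h\varepsilon^{-1})^{\mathcal{O}(h)}} = k^{\log_2 D}$. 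Bridging this doubly-exponential gap in the exponent appears to require more than the scatter-dimension bound used as a black box: the natural route is to open up the proof of \cref{thm:dim-bound} and reuse the paper's metric analogues of generalized coloring numbers and flatness so as to extract, at each trie level, a ladder that \emph{doubles} in length (rather than growing by $1$), making the trie depth bounded by $\log_2 D$ rather than $D$. Once that sharper extraction is in place, combining it with the window-count bound and the polynomial-time implementation of each greedy iteration yields the claimed $|C| \le k^{(h\varepsilon^{-1})^{\mathcal{O}(h)}}$ together with the polynomial runtime.
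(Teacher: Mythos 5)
Your greedy ``violation witness'' construction and the radius-window reduction via a preliminary constant-factor approximation are both sound and match the paper's high-level structure (the paper also seeds with Gonzalez' algorithm to confine the useful radii, and also builds witness sequences per radius level). You also correctly identify the real gap: a single-center trie argument gives at best $k^D$ per window with $D$ being the scatter dimension, and since $D = 2^{(h\varepsilon^{-1})^{\mathcal{O}(h)}}$ this is doubly-exponentially weaker than the claimed $k^{(h\varepsilon^{-1})^{\mathcal{O}(h)}}$.

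However, the fix you suggest --- reworking the ladder extraction so the ladder ``doubles'' at each trie level, shrinking the trie depth to $\log_2 D$ --- is not what makes the argument go through, and it is not clear how to realize it: the scatter-dimension bound controls the length of a \emph{single} ladder, not a branching family of them, and nothing in the flatness machinery gives a reason why distinct trie branches would have to merge into one ladder of exponentially growing length. The paper's actual resolution sidesteps the trie entirely. Instead of pairs $(x_i, p_i)$ with a single center $x_i$, it builds witness sequences $(X_i, p_i)$ where $X_i$ is a \emph{set} of potential centers with $|X_i| \leq \lambda k$ and $\lambda = \mathrm{polylog}(k, c, \varepsilon^{-1})$; each $X_i$ is produced by a greedy set-cover step so that it covers all earlier clients $p_j$ within radius $r_t$ while $\dist(p_i, X_i) > (1+\delta)r_t$. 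Then it does \emph{not} invoke \cref{thm:dim-bound} as a black box; it reruns the flatness argument of \cref{lem:no-long-scatter} directly: after finding $\lambda k + 2$ clients with identical distance profiles to the separator $S$, a pigeonhole over the $\leq \lambda k$ elements of the last witness set $X_{j_{\lambda k+2}}$ produces two shortest paths sharing an endpoint $x$, and from there the usual triangle-inequality contradiction. The factor $\lambda k$ lands in the \emph{base} of the $(\cdot)^{c+1}$ bound rather than the exponent, so the per-window sequence length is $\mathrm{poly}(k)$ with degree $c+1 = (h\varepsilon^{-1})^{\mathcal{O}(h)}$. In short: your instinct to ``open up the flatness machinery'' is right, but the operative move is to generalize the ladder to set-valued centers (at a small multiplicative cost controlled by greedy set cover) and pigeonhole on the center set, not to attempt exponential growth within a trie.
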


\subsection{Our techniques: metric sparsity}
Our main technical contribution is to adapt a number of concepts and proof techniques
from \emph{sparsity} ---
the (abstract, mostly graph-theoretical) theory of sparse graph classes --- 
to the world of graph metrics. 

Sparsity is the theory of sparse graph classes developed over the last nearly 20 years;
see the textbook~\cite{sparsity-book} or the more modern lecture notes~\cite{sparsitynotes}. 
The main contribution of this theory is the identification of \emph{bounded expansion}
and \emph{nowhere dense} graph classes as wide generalizations of proper minor-closed graph
classes that properly capture the notion of ``being sparse'' and allow for efficient algorithms.
In particular, the notion of nowhere dense correctly captures the border of tractability
of model checking first-order formulae in subgraph-closed graph classes~\cite{GroheKS17}.

The main focus of sparsity is on unweighted simple graphs and interactions that
happen within constant distance --- which corresponds to the expressive power of
first-order formulae. From the family of clustering problems, directly it is only 
able to speak about partitioning into a constant number of clusters with constant radius.
For example, the existence of a solution to the \textsc{$k$-Center} problem in an unweighted graph,
with maximum distance $r$ from any point to its closest center can be expressed as a first-order formula
whose length depends only on $k$ and $r$.
We can infer from~\cite{GroheKS17} that this problem can be solved in $f(k,r) \cdot n^{1+o(1)}$ time in nowhere dense graph classes, where $n$ is the size of the input graph
and $f$ is some computable function. However, the techniques seem to break down severely 
if one considers edge weights and/or superconstant distances. 

We revisit this state of the matter and propose a different angle to adapt the 
rich graph-theoretical and algorithmic toolbox of sparsity to graph metrics. 
Our inspiration comes from the following observation.
The main algorithm of Abbasi et al.~\cite{scatter-dim} is built upon a two-decade
old algorithm of Badoui, Har-Peled, and Indyk~\cite{BadoiuHI02}.
Essentially the same algorithm, but in the context of sparsity, appeared in 2019
in~\cite{FabianskiPST19}. In~\cite{FabianskiPST19},
the analogs of $\varepsilon$-ladders are called \emph{semi-ladders} and,
from their perspective, this definition comes from stability theory. 
Along similar lines as~\cite{scatter-dim}, they observe that a bound on the length 
of semi-ladders implies tractability of \textsc{$k$-Center}.

This led us to revisit the sparsity proof pipeline that proves bounds on the length
of semi-ladders in sparse graph classes and attempt to cast them onto the world
of graph metrics. This quest has been completed successfully: we nontrivially
adapted the notion
of \emph{weak coloring numbers} and \emph{flatness} (aka \emph{uniform quasi-wideness})
and followed the established routes for (tightly)
bounding these invariants in minor-closed graph classes.

\paragraph{Weak coloring numbers.}
One of the most useful notions in the theory of graph classes of bounded expansion is
so-called \emph{generalized coloring numbers}, most notably \emph{weak coloring numbers},
whose aim is to generalize degeneracy orderings to larger (but still constant) distances.

Let $G$ be a (simple, unweighted) graph, $r \geq 0$ be an integer, and let
$\preceq$ be a total ordering on $V(G)$.
We say that $u \in V(G)$ is \emph{weakly $\mainR$-reachable} from $v \in V(G)$ if there exists
a path $P$ from $v$ to $u$ of length at most $\mainR$ in $G[\{w \in V(G)~|~u \preceq w\}]$. 
Graphs from sparse graph classes, such as proper minor-closed graph classes, admit
orderings $\preceq$ where every vertex only has a bounded (as a function of $\mainR$)
number of weakly $\mainR$-reachable vertices, and such orderings are very useful algorithmically. 

The crucial insight into adapting this definition (and other definitions)
to metric spaces is that we should order not individual vertices, but subgraphs of small diameter. This leads to the following definition. 
\begin{definition}[weakly reachable sets in graph metric spaces]
Let $G$ be an edge-weighted graph, $\mainR > 0$ be a real, $\partition$
be a partition of~$V(G)$, and $\preceq$ be a total order on $\partition$. 
We say that $Y \in \partition$ is \emph{weakly $\mainR$-reachable} from $X \in \partition$
if there exists a path from a vertex of $X$ to a vertex of $Y$ of length at most $\mainR$
in the graph $G[\bigcup \{Z \in \partition~|~Y \preceq Z\}]$. 
By $\WReachS{X}$ we denote the set of those $Y \in \partition$ that are weakly $\mainR$-reachable from $X$. 

The \emph{weak coloring number} of $G$, $\partition$, and $\preceq$ is defined as 
\[ \wcol_{\mainR}(G,\partition,\preceq) := \max_{X \in \partition} |\WReachS{X}|. \]
\end{definition}

In the context of sparsity, provably tight bounds for weak coloring numbers in proper
minor-closed graph classes were provided in~\cite{HeuvelMQRS17} using so-called \emph{cop decompositions}~\cite{Andreae86}. 
A recent work of Chang et al.~\cite{buffered-cop-dec} introduced a variant called
\emph{buffered cop decompositions} that are meant for graph metrics. 
Building upon the ideas of~\cite{HeuvelMQRS17}, but using the buffered version of cop decompositions of~\cite{buffered-cop-dec}, in \cref{sec:wcol} we prove the following.
\begin{restatable}{theorem}{thmwcol}\label{thm:wcol}
Let $h > 0$ be an integer.
Given a $K_h$-minor-free edge-weighted graph $G$ and a real $\tinyR > 0$,
one can in polynomial time compute a partition $\mathcal{P}$ of $V(G)$ into parts
of (strong) diameter at most $\tinyR$ and a total order $\preceq$ on $\mathcal{P}$ such
that for every $\mainR > \tinyR$, it holds that
\[ \wcol_{\mainR}(G,\partition, \preceq) \leq \binom{h-2 + 2 \cdot \left\lceil 4h \frac{\mainR}{\tinyR} \right\rceil} {h-1} \cdot \left(12+8 \frac{\mainR}{\tinyR}\right) \cdot (h-1) =: c(h, \mainR/\tinyR). \]
\end{restatable}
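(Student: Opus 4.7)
The plan is to adapt the approach of van den Heuvel et al.~\cite{HeuvelMQRS17} from unweighted graphs to graph metrics, replacing the classical cop decomposition of Andreae~\cite{Andreae86} with the buffered cop decomposition of Chang et al.~\cite{buffered-cop-dec}. The rough slogan is: a cop decomposition organizes vertices along a tree so that each bag is a union of $h-1$ shortest paths; in the weighted setting we instead organize \emph{parts of diameter~$\tinyR$} along a tree whose bags are unions of $h-1$ shortest paths that remain shortest inside a buffered zone, which lets us run essentially the same accounting as in the unweighted proof but at the granularity of $\tinyR$-sized pieces.

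First I would invoke the buffered cop decomposition at scale $\tinyR$ on the input $K_h$-minor-free graph $G$, obtaining a rooted tree decomposition $(T,(B_t)_{t\in V(T)})$ in which each bag $B_t$ is covered by at most $h-1$ paths $\pi_{t,1},\ldots,\pi_{t,h-1}$ that are shortest paths in $G$ restricted to a buffered zone around $B_t$. The buffered property guarantees that any shortest subpath of $G$ of length up to $\Theta(\mainR)$ which passes through the bag decomposes cleanly into intersections with the $h-1$ path-components of the bag and hops along the tree. Next I would construct the partition $\partition$. For each node $t$, let $N_t$ be the vertices introduced at $t$ (in $B_t$ but in no ancestor's bag). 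I would refine $N_t$ into pieces of diameter at most $\tinyR$ by cutting each $\pi_{t,i} \cap N_t$ into maximal subpaths of length at most $\tinyR$ and assigning each vertex of $N_t$ to the closest such subpath (breaking ties by index). The order $\preceq$ refines the ancestor-before-descendant order on tree nodes, with ties broken arbitrarily within a node; this immediately gives parts of diameter at most $\tinyR$.

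To bound $|\WReachS{X}|$ for $X \subseteq N_t$, note that any weakly $\mainR$-reachable part $Y \preceq X$ lies in $N_s$ for some (weak) ancestor $s$ of $t$, witnessed by a path $P$ from $X$ to $Y$ of length at most $\mainR$ whose vertices all belong to bags above or at $s$. I would count such $Y$ in two steps. The \emph{which-ancestor} count: since $P$ has length at most $\mainR$ and traversing between consecutive bag-levels costs at least one step of scale $\tinyR$ inside the buffered zone, only $O(\mainR/\tinyR)$ ancestors are eligible, and within $B_s$ the endpoint of $P$ lies on one of the $h-1$ paths, together contributing the factor $(12 + 8\mainR/\tinyR)\cdot(h-1)$. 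The \emph{which-part-inside-$N_s$} count: $Y$ is determined (up to bounded multiplicity absorbed into the above factors) by the collection of intersection intervals of $P$ with the $h-1$ paths $\pi_{s',i}$ along the $t$-to-$s$ tree-path; since each piece has diameter $\tinyR$ and the total length consumed by these intervals is at most $\mainR$, a standard stars-and-bars count distributing $2\lceil 4h\mainR/\tinyR\rceil$ unit intervals across $h-1$ paths yields the binomial factor $\binom{h-2+2\lceil 4h\mainR/\tinyR\rceil}{h-1}$.

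The main obstacle will be the second count: one must verify that distinct weakly reachable parts $Y \subseteq N_s$ really do produce distinguishable intersection patterns of $P$ with the $\pi_{s',i}$ and, crucially, that the total length of these intersections is controlled by $\mainR$ even though $P$ was a shortest path in $G$ rather than inside a single bag. This is precisely the purpose of the buffered property of~\cite{buffered-cop-dec}: within the buffered zone of $B_s$ the paths $\pi_{s,i}$ are still shortest paths up to distance $\Theta(\mainR)$ around the bag, so the trace of $P$ projects to genuine intervals on the $\pi_{s,i}$ whose lengths add up to at most a constant multiple of $\mainR$, validating the stars-and-bars accounting. The polynomial-time guarantee comes for free from the constructive version of the buffered cop decomposition.
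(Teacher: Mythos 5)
Your high-level plan is the right one and matches the paper's (buffered cop decomposition, partition each supernode into $\tinyR$-diameter pieces, ancestor-before-descendant ordering), but the accounting in the last step has a real flaw, and the roles of the two factors are swapped.

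The claim that ``only $O(\mainR/\tinyR)$ ancestors are eligible'' because ``traversing between consecutive bag-levels costs at least one step of scale $\tinyR$'' is incorrect. The buffer property of~\cite{buffered-cop-dec} only guarantees a gap of $\gamma=\Delta/h$ between a supernode and a \emph{non-adjacent} ancestor; an ancestor that is adjacent to the descendant can be at distance~$0$. Since each supernode can have up to $h-1$ adjacent ancestors (the tree-decomposition property), long chains of consecutive ancestors at distance~$0$ are possible, and a naive ``one $\tinyR$ per level'' argument breaks. The correct count of ancestors within distance $\mainR$ is exactly the binomial $\binom{h-2+2\lceil 4h\mainR/\tinyR\rceil}{h-1}$, which is the content of Lemma~\ref{lem:few-supernodes-close} (Lemmata~1 and~2 of~\cite{sparse-covers}) applied with $q = \lceil 4h\mainR/\tinyR\rceil$; you need to invoke that lemma rather than re-derive a weaker linear bound. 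Moreover, to apply the lemma and to talk about the skeleton being a shortest-path tree, you first have to show the witness path $P$ stays inside $G[\dom(\eta_Y)]$ (and that $\eta_Y$ is in fact an ancestor of $\eta_X$); this step uses the tree-decomposition property of the buffered cop decomposition and the definition of $\preceq$, and is missing from your sketch.

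Dually, your stars-and-bars argument is aimed at the wrong factor. Within a \emph{single} supernode $\eta$, the count of weakly reachable parts is the linear factor $(12+8\mainR/\tinyR)(h-1)$: the centers of parts in $\mathcal{P}_\eta$ live on the skeleton $T_\eta$, which is a single-source shortest-path tree with at most $h-1$ leaves, so on any root-to-leaf path the centers are $\Delta$-scattered along a shortest path of $G[\dom(\eta)]$; if more than $12+8\mainR/\tinyR$ of them were weakly $\mainR$-reachable from $X$, two of them $c_0,c_k$ would satisfy $\dist_{G[\dom(\eta)]}(c_0,c_k) > 2\mainR+3\tinyR$ while also being at distance at most $2\mainR+3\tinyR$ via $X$, a contradiction. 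Once you exchange which argument produces which factor and replace the ``one level per $\tinyR$'' heuristic by the actual application of Lemma~\ref{lem:few-supernodes-close}, your proposal aligns with the paper's proof.
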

One should think of $\mainR$ as being of the order of the relevant (e.g., the objective function
in the case of \textsc{$k$-Center}) distance from a client to a center,
while we would like every element of $\partition$ to be of small diameter: 
$\tinyR$ will be chosen smaller than the allowed approximation error of the designed algorithm. 
Thus, in our applications, the ratio $\frac{\mainR}{\tinyR}$ will be bounded by a function of the chosen approximation parameter $\varepsilon > 0$.

From the theory of sparsity, we know that there is a short step from weak coloring numbers to \emph{sparse covers}. For an edge-weighted graph $G$ and $\mainR > 0$, a \emph{$\mainR$-cover}
is a family $\mathcal{D}$ of subsets of $V(G)$ such that for every $v \in V(G)$ there exists
$D \in \mathcal{D}$ that contains all vertices within distance $\mainR$ from $v$. 
An $\mainR$-cover $\mathcal{D}$ has two quality parameters: 
the \emph{diameter blowup}, defined as $\frac{1}{\mainR} \max_{D \in \mathcal{D}} \mathrm{diameter}(G[D])$, and the \emph{overlap} or \emph{ply}, defined as
$\max_{v \in V(G)} |\{D \in \mathcal{D}~|~v \in D\}|$. 
\begin{lemma}\label{lem:wcol2cover}
    Let $G$ be an edge-weighted graph, $\mainR > \tinyR > 0$, 
    $\partition$ be a partition of $V(G)$ into pieces of weak diameter at most $\tinyR$, 
    and $\preceq$ be a total ordering of $\partition$.
    For $X \in \partition$, let
    \[ D(X) := \bigcup \{ Y \in \partition~|~X \in \WReach{2\mainR}{G}{\partition}{\preceq}{Y}\}\]
    and 
    \[ \mathcal{D} = \left\{D(X),~X \in \partition\right\}.\]
    Then, $\mathcal{D}$ is an $\mainR$-cover with diameter blowup at most $4+3\frac{\tinyR}{\mainR}$
    and overlap $\wcol_{2\mainR}(G,\partition,\preceq)$.
\end{lemma}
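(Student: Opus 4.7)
The plan is to verify the three assertions separately, as each follows from unfolding the relevant definition together with one elementary observation.

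For the covering property, given $v \in V(G)$ I would let $\mathcal{Y}_v := \{Y \in \partition : Y \cap B(v, \mainR) \neq \emptyset\}$ and take $X$ to be the $\preceq$-minimum element of $\mathcal{Y}_v$ (which exists because $\mathcal{Y}_v$ is nonempty and $\preceq$ is a total order, and $\partition$ is finite in all applications). It suffices to prove $\mathcal{Y}_v \subseteq D(X)$ as a subfamily, i.e.\ that every $Y \in \mathcal{Y}_v$ satisfies $X \in \WReach{2\mainR}{G}{\partition}{\preceq}{Y}$. Picking $y \in Y \cap B(v,\mainR)$ and $x \in X \cap B(v,\mainR)$ and concatenating a shortest $v$-$y$ path with a shortest $v$-$x$ path gives a walk of length at most $2\mainR$ from $y$ to $x$. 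The key observation is that every internal vertex of this walk lies within distance $\mainR$ of $v$ (subpaths of shortest paths are shortest paths), so its part belongs to $\mathcal{Y}_v$ and is therefore $\succeq X$ by minimality; the walk thus lives in $G[\bigcup\{Z \in \partition : X \preceq Z\}]$, witnessing the required weak reachability.

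For the diameter bound, fix $X \in \partition$ and take $u,v \in D(X)$ lying in parts $Y_u, Y_v \in D(X)$. By definition of $D(X)$ there exist $y_u \in Y_u$, $y_v \in Y_v$, and $x_u, x_v \in X$ with $\dist_G(y_u,x_u), \dist_G(y_v,x_v) \leq 2\mainR$, while the weak-diameter hypothesis on $\partition$ yields $\dist_G(u,y_u), \dist_G(y_v,v) \leq \tinyR$ and $\dist_G(x_u,x_v) \leq \tinyR$. The triangle inequality then gives
\[ \dist_G(u,v) \leq \tinyR + 2\mainR + \tinyR + 2\mainR + \tinyR = 4\mainR + 3\tinyR, \]
which after normalizing by $\mainR$ is exactly the claimed diameter blowup $4 + 3\tinyR/\mainR$.

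For the overlap bound, the point is simply that membership $v \in D(X)$ depends only on the part $Y_v \in \partition$ containing $v$: indeed $v \in D(X)$ iff $Y_v$ is one of the parts in $\bigcup$ defining $D(X)$, iff $X \in \WReach{2\mainR}{G}{\partition}{\preceq}{Y_v}$. Hence $|\{X \in \partition : v \in D(X)\}| = |\WReach{2\mainR}{G}{\partition}{\preceq}{Y_v}| \leq \wcol_{2\mainR}(G,\partition,\preceq)$, and distinct choices of $X$ can only collapse in $\mathcal{D}$, which can only decrease the overlap. No step of the argument looks like a genuine obstacle; the only place where care is needed is making sure in the covering part that the walk from $y$ to $x$ stays in the upward-closed subgraph, which is handled by the shortest-path observation above.
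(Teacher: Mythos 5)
Your proof is correct and follows essentially the same route as the paper's (same choice of $\preceq$-minimum part $X$ intersecting the ball around $v$, same triangle inequality, same observation for the overlap). If anything you are more explicit than the paper in the covering step, where you spell out why the concatenated walk lives in the upward-closed subgraph $G[\bigcup\{Z : X \preceq Z\}]$ — namely, every vertex on it is within distance $\mainR$ of $v$ and so its part is $\succeq X$ by minimality — a point the paper leaves implicit behind the phrase ``as witnessed by the concatenation of the shortest paths.''
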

\begin{proof}
    To see that $\mathcal{D}$ is an $\mainR$-cover, pick $v \in V(G)$ and let 
    $X$ be the $\preceq$-minimum element of $\partition$ that contains a vertex $x$ within distance
    $\mainR$ from $v$. Then, for every other vertex $y$ within distance $\mainR$ from $v$,
    if $Y \in \partition$ is the part containing $y$, then $X \in \WReach{2\mainR}{G}{\partition}{\preceq}{Y}$, as witnessed by the concatenation of the shortest paths from $v$ to $x$ and $y$. 

    Since the (weak) diameter of each part of $\partition$ is bounded by
    $\tinyR$ and each part $Y \in \partition$ in $D(X)$ is within distance at most $2\mainR$
    from $X$, the diameter of each $D(X)$ is bounded by $2\cdot 2\mainR + 3\tinyR$.
    This gives the diameter blowup bound. The overlap bound is immediate, as each part $Y \in \partition$ is contained in $D(X)$ if and only if $X \in \WReach{2\mainR}{G}{\partition}{\preceq}{Y}$.
\end{proof}
Pipelining \cref{thm:wcol} with \cref{lem:wcol2cover} reproves
the result of Filtser~\cite{sparse-covers} that $K_h$-minor-free graphs admit
$\mainR$-covers of blowup $4+\varepsilon$ and overlap $\mathcal{O}(1/\varepsilon)^h$
for every $\varepsilon > 0$. 
(In fact, the proof of Filtser~\cite{sparse-covers} follows very similar lines
to our proofs, but without the identification of weak coloring numbers as a
clear intermediate step.)

Filtser~\cite{sparse-covers} used covers to develop low-diameter decompositions.
One of the prominent applications of low-diameter decompositions is approximation
algorithms for cut problems, such as \textsc{Multicut}. 
Recently, Friedrich et al.~\cite{treewidth} showed a $\mathcal{O}(\log \treewidth)$-approximation
for \textsc{Multicut}. 
Based on their arguments, Filtser et al.~\cite{DBLP:journals/corr/abs-2407-12230}
showed $\mainR$-covers of constant blowup and overlap $\mathrm{poly}(\mathrm{treewidth})$.
Their techniques inspired us 
to show a slightly improved bound for graphs of bounded treewidth
(but with a partition into sets of only bounded weak diameter),
presented in \cref{sec:wcol-tw}.
\begin{restatable}{theorem}{thmwcoltw}\label{thm:wcol-tw}
Given an edge-weighted graph $G$, a real $\tinyR > 0$, and a tree decomposition
of $G$ of maximum bag size at most $k$, one can in polynomial time compute
a partition $\partition$ of $V(G)$ into sets of weak diameter at most $\tinyR$
and a total order $\preceq$ on~$\partition$ such that for every $\mainR > \tinyR$
we have
\[ \wcol_{\mainR}(G,\partition,\preceq) \leq 
\min \left( k \cdot 2^k \cdot \binom{k + \left\lceil 2\frac{\mainR}{\tinyR} \right\rceil}{k},
\left(2 \lceil 2\mainR / \tinyR \rceil + k + 1\right)^{3\lceil 2\mainR/\tinyR \rceil + 4}\right)
. \]
\end{restatable}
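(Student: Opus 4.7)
The plan is to adapt the classical tree-decomposition proof of the unweighted bound $\wcol_\mainR(G) \le \binom{\mainR+k}{k}$ to the metric setting, and to obtain the second bound of the theorem by a direct enumeration of discretised walks through the tree decomposition.

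\textbf{Construction of $\partition$ and $\preceq$.} I would root the given tree decomposition at an arbitrary bag and then construct $\partition$ greedily: processing bags top-down in BFS order, at each bag $B$ I would iterate over the still-unassigned vertices $v \in B$ in some canonical order and create a new part consisting of $v$ together with every yet-unassigned vertex within distance $\tinyR/2$ of $v$. Because each part is contained in a closed $\tinyR/2$-ball around its \emph{anchor} vertex $v$, the triangle inequality gives weak diameter at most $\tinyR$. The total order $\preceq$ is simply the order in which parts are created; in particular, parts whose anchors lie in higher bags come first.

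\textbf{Proof of the first bound.} Fix $X \in \partition$ and a part $Y \in \WReachS{X}$, witnessed by a path $\pi$ of length $\le \mainR$ using only parts $\succeq Y$. Let $B_Y$ be the (unique) bag at which the anchor of $Y$ was chosen. I would encode $Y$ by (i) its anchor inside $B_Y$ (factor $k$), (ii) the subset $S \subseteq B_Y$ of bag-vertices that $\pi$ visits (factor $2^k$), and (iii) a \emph{discretised passage profile} recording, for each of the $\le \lceil 2\mainR/\tinyR \rceil$ length-units of $\pi$ (each of size $\tinyR/2$), which element of $S \cup \{*\}$ the traveller currently sits closest to; viewing such profiles as monotone words of length $\lceil 2\mainR/\tinyR \rceil$ over $|S|+1 \le k+1$ symbols yields the factor $\binom{k+\lceil 2\mainR/\tinyR\rceil}{k}$. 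The injectivity of this encoding will rely on the separator property of $B_Y$: all parts anchored no later than $Y$ lie at $B_Y$ or above, so whenever $\pi$ moves between distinct subtrees of the tree decomposition rooted at $B_Y$ it must pass through a vertex of $S$, and this passage is witnessed by the corresponding symbol change in the profile.

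\textbf{Proof of the second bound and main obstacle.} For the second bound I would argue by direct enumeration of $\pi$'s trajectories: after discretising $\pi$ into at most $3\lceil 2\mainR/\tinyR\rceil + 4$ snapshots of length $\tinyR/2$, each snapshot lies in some bag, and the next snapshot is identified by one of at most $2\lceil 2\mainR/\tinyR \rceil + k + 1$ options (a destination bag-vertex, or a continuation along the same edge of the tree decomposition indexed by its distance-increment), which multiplies to the claimed polynomial. The main obstacle will be the injectivity of the profile-based encoding for the first bound: the witnessing path may shuttle between the (at most $k$) vertices of the separator bag $B_Y$ many times, and one must show that only the normalised multiset of $\tinyR/2$-chunks spent near each of these vertices matters, so that the number of profiles collapses from an exponential in the shuttles to the binomial $\binom{k+\lceil 2\mainR/\tinyR\rceil}{k}$.
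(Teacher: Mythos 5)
Your plan is genuinely different from the paper's: the paper does not build $\partition$ by greedily growing $\tinyR/2$-balls bag by bag. Instead, it lifts the tree decomposition to an auxiliary graph $H$ on pairs $(v,t)$ with $v \in \beta(t)$, and runs a $k$-round colouring algorithm (an adaptation of the low-diameter decomposition of Filtser et al.) in which, for each colour $i \in [k]$, it repeatedly takes the topmost bag with an uncoloured vertex and no colour-$i$ vertex and carves out a $\delta$-ball $X = \mathrm{Ball}_{H[U]}(U \cap \beta(t), \delta)$ inside a carefully maintained ``active'' region $U$. The $k$-colour structure is not cosmetic: it yields the two structural facts the counting needs, namely that distinct clusters of the same colour are disjoint and at distance more than $\delta$ from each other along the relevant root-to-bag path (Claim \ref{cl:same-color-far-away}), and that a cluster intersects at most $k$ clusters of any fixed higher colour (Claim \ref{cl:intersections}). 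These two facts are exactly what turn ``the witnessing path crosses a few coloured bags'' into the binomial $\binom{k+\lceil\mainR/\delta\rceil}{k}$ and the polynomial bound.

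The gap in your argument is the one you yourself flag, and it does not seem repairable within your framework. You want to encode $Y$ by (anchor in $B_Y$, a subset $S\subseteq B_Y$, a monotone word over $|S|+1$ symbols). First, this triple does not determine $B_Y$: nothing in the code records where $B_Y$ sits in $T$, and two far-apart bags can produce identical local data, so the encoding is not injective even before worrying about the profile. Second, and more fundamentally, the witnessing path can shuttle between the separator vertices of $B_Y$ arbitrarily many times, so the true ``passage profile'' is an arbitrary word, not a monotone one; collapsing to the multiset of chunks destroys precisely the information needed to distinguish different $Y$'s, and there is no reason the normalised multiset suffices. In the paper this issue does not arise because Claim \ref{cl:same-color-far-away} forces each crossing from one same-colour topmost bag to the next to cost more than $\delta$ in path length, so the number of such crossings is bounded by $\lceil\mainR/\delta\rceil$ and the word recording the colours of consecutive topmost bags is monotone by construction, not by hope. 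Your greedy ball construction provides no analogue of that $\delta$-separation between ``siblings,'' so the same monotone-word counting has no structural justification. The second bound has the same problem in sharper form: the paper bounds the number of ``restarts'' via the intersection bound of Claim \ref{cl:intersections} (at most $k$ intersecting clusters per colour class), and your enumeration-of-trajectories sketch has no surrogate for that bound. To make either count work you would need to redesign the partition so that parts created at comparable positions in $T$ come with a quantitative separation guarantee, which is essentially what the paper's $k$-colour clustering is for.
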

We remark that the second bound of \cref{thm:wcol-tw}, together with \cref{lem:wcol2cover},
reproves the aforementioned result of~\cite{DBLP:journals/corr/abs-2407-12230} that graphs
of bounded treewidth admit $\mainR$-covers of constant blowup and polynomial-in-treewidth
overlap.

\paragraph{Flatness.}
A second very prolific notion from the theory of sparsity is \emph{flatness}, also called \emph{uniform quasi-wideness}. 
A set $A \subseteq V(G)$ is \emph{$\mainR$-scattered} if, for every distinct $u,v \in A$
the distance from $u$ to $v$ is larger than $\mainR$. Note that this definition makes sense
both in unweighted simple graphs (where $\mainR$ is usually a small constant) and in
metric spaces (where $\mainR$ is a positive real). 

The notion of flatness is motivated by the following observation.
Fix a small integer $\mainR > 0$. In sparse graphs, large
sets of vertices contain big sets that are almost $\mainR$-scattered: they become $\mainR$-scattered after the deletion of a few ``hub-like'' vertices from the graph. 
Consider for example a star (a hub vertex with many degree-1 neighbors): no two leaves of the star are $2$-scattered, but after deleting the center of the star, one obtains a large $2$-scattered set consisting of all leaves. Flatness captures the intuition that such a star picture is essentially the only bad picture that can happen in sparse graphs.

Formally, we say that a graph class $\mathcal{G}$ is \emph{flat} if for every integer $\mainR > 0$ there exists $s =s(\mainR)$ such that for every integer $m \geq 0$ there exists $M = M(m,\mainR)$ such that the following holds. For every $G \in \mathcal{G}$ and $A \subseteq V(G)$ of size at least $M$, there exists $S \subseteq V(G)$ of size at most $s$ and $B \subseteq A \setminus S$ of size at least $m$ that is $\mainR$-scattered in $G-S$. 
The very important part of this definition is that $s$, the number of ``hub-like'' vertices to be deleted,
depends only on $\mainR$, but not on the desired number of vertices $m$. 

Nadara et al.~\cite{empirical-uqw} showed a short argument turning orderings
of small weak coloring number into flatness. 
Using the paradigm that individual vertices should become small-diameter sets,
in \cref{sec:flatness} we obtain the following metric analog.
\begin{restatable}{theorem}{thmflat}\label{thm:wcol2flat}
    Given an edge-weighted graph $G$, a real $\mainR > 0$, a partition $\mathcal{P}$ of $V(G)$,
    a total ordering $\preceq$ of $\partition$, an integer $m \geq 0$,
    and a set $A \subseteq \partition$ such that $|A| \geq (2mc)^{c+1}$ 
    where $c = \wcol_{\mainR}(G,\partition,\preceq)$, 
    one can in polynomial time find $S \subseteq \partition$ and $B \subseteq A \setminus S$
    such that $|S| \leq c$, $|B| \geq m$, and every two distinct $X_1,X_2 \in B$
    are at distance more than $\mainR$ in $G-\bigcup S$.
\end{restatable}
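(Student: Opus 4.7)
The plan rests on the following key lemma: if $X_1, X_2 \in \partition$ are connected by a path of length at most $\mainR$ in $G - \bigcup S$ and $Y$ is the $\preceq$-minimum part intersecting such a path, then the two sub-paths obtained by splitting at a vertex of $Y$ each stay in $\bigcup\{Z \in \partition : Y \preceq Z\}$ and have length at most $\mainR$, so $Y \in \WReachS{X_1} \cap \WReachS{X_2} \setminus S$. Consequently, it suffices to produce $S \subseteq \partition$ of size at most $c$ and $B \subseteq A \setminus S$ of size at least $m$ such that $\WReachS{X_1} \cap \WReachS{X_2} \subseteq S$ for every distinct $X_1, X_2 \in B$; the desired scattering will follow immediately.

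I would prove the existence of such $S$ and $B$ by induction on $c$ --- more precisely, on $\max_{X \in A} |\WReachS{X}| \leq c$, which drops to $c-1$ once we restrict attention to a subset of $\partition$. The base case $c = 1$ forces $\WReachS{X} = \{X\}$ for every $X$, so the sets are pairwise disjoint across distinct elements of $A$; hence $A$ is already $\mainR$-scattered in $G$ and we take $S = \emptyset$ and any size-$m$ subset for $B$.

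For the inductive step $c \geq 2$, I set the threshold $T := |A|/(2mc)$ and split into two cases. In the \emph{heavy-fiber} case, some $Y^{\ast} \in \partition$ lies in $\WReachS{X}$ for at least $T$ parts $X \in A$; I add $Y^{\ast}$ to $S$, restrict to $A'' := \{X \in A \setminus \{Y^{\ast}\} : Y^{\ast} \in \WReachS{X}\}$ of size at least $T - 1$, and pass to the graph $G - \bigcup\{Y^{\ast}\}$ with partition $\partition \setminus \{Y^{\ast}\}$ and the induced order. For $X \in A''$, the new weak reachability set is contained in $\WReachS{X} \setminus \{Y^{\ast}\}$, hence has size at most $c - 1$, and the inductive hypothesis applies because $T - 1 \geq (2m(c-1))^c$ follows directly from $|A| \geq (2mc)^{c+1}$. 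In the \emph{no-heavy-fiber} case, every $Y \in \partition$ belongs to fewer than $T$ of the sets $\WReachS{X}$ for $X \in A$, and I build $B$ greedily: iteratively pick $X_i \in A$ whose $\WReachS{X_i}$ is disjoint from $\WReachS{X_j}$ for all $j < i$. At step $i$, the forbidden elements (those sharing some $Y$ with a previously picked $X_j$) number less than $(i-1) c T$, which for $i \leq m$ is at most $|A|/2$; hence a valid $X_i$ always exists, and the resulting $B$ of size $m$ has pairwise disjoint $\WReachS{\cdot}$ sets and is therefore $\mainR$-scattered in $G$ itself, with $S = \emptyset$.

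The main technical point, and where I expect most care to be required, is the parameter arithmetic: verifying that the single bound $|A| \geq (2mc)^{c+1}$ is large enough both to ensure the greedy argument reaches $m$ picks (needing $|A| \gtrsim 2mcT$) and to permit the recursion in the heavy-fiber case (needing $T - 1 \geq (2m(c-1))^c$). Polynomial-time implementability is then immediate: the heavy-fiber test reduces to computing $\WReachS{X}$ for each $X \in A$ and finding the most-covered $Y^{\ast}$, while the no-heavy-fiber case is a direct greedy selection.
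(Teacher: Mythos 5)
Your proposal is correct and is essentially the paper's argument: the paper's Algorithm~\ref{alg:uqw} runs the same two phases iteratively (a first loop that repeatedly finds a ``hub'' $X$ weakly reachable in $G - \bigcup S$ from a $\geq |A|/(2mc)$ fraction of $A$, adds it to $S$, and restricts $A$; then a greedy loop producing $B$ with pairwise disjoint weak-reach sets), which is exactly your heavy-fiber recursion followed by your no-heavy-fiber greedy phase. The only cosmetic difference is that the paper bounds $|S|$ by observing $S \subseteq \bigcap_{Y \in A}\WReachS{Y}$ directly, whereas you track the drop in $\max_{X \in A}|\WReachS{X}|$ through the recursion; the parameter arithmetic in both cases checks out.
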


Combining it with \cref{thm:wcol}, and observing that
every element of $\partition$ given by \cref{thm:wcol}
can contain at most one vertex of a $\tinyR$-scattered set, 
we immediately obtain the following.%
\footnote{Recall that for a subset $A \subseteq V(G)$, the \emph{strong diameter}
of $A$ is defined as $\max_{u,v \in A} \dist_{G[A]}(u,v)$, while the \emph{weak diameter}
of $A$ is defined as $\max_{u,v \in A} \dist_G(u,v)$. That is, in the strong diameter definition
we measure the distance in $G[A]$, while in the weak diameter definition we measure the distance
in the whole $G$.}
\begin{restatable}[flatness of proper minor-closed graph metrics]{corollary}{corflat}\label{cor:flat}
  Given an edge-weighted $K_h$-minor-free graph $G$, two reals $\mainR > \tinyR > 0$, 
  an integer $m \geq 0$, and a $\tinyR$-scattered set $A \subseteq V(G)$ of size
  at least $(2mc)^{c+1}$, where $c = c(h,\mainR/\tinyR)$ is defined in \cref{thm:wcol}, 
  one can in polynomial time compute a set $S$ consisting of at most $c$ subsets of $V(G)$
  of strong diameter at most $\tinyR$ each and a set $B \subseteq A \setminus \bigcup S$ 
  of size at least $m$ that is $\mainR$-scattered in $G-\bigcup S$. 
\end{restatable}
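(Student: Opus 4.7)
The plan is to pipeline \cref{thm:wcol} and \cref{thm:wcol2flat} in the natural way, identifying each vertex of $A$ with the unique part of the partition containing it.

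First, I would invoke \cref{thm:wcol} on $G$ with parameter $\tinyR$ to obtain, in polynomial time, a partition $\partition$ of $V(G)$ whose parts have strong diameter at most $\tinyR$, together with a total ordering $\preceq$ on $\partition$ satisfying $\wcol_{\mainR}(G,\partition,\preceq) \leq c(h,\mainR/\tinyR)=c$. Next, I would use the hypothesis that $A$ is $\tinyR$-scattered to observe that no part of $\partition$ contains two distinct elements of $A$: if $u, v \in A$ lay in the same $X \in \partition$, the strong-diameter bound would give $\dist_G(u,v) \leq \dist_{G[X]}(u,v) \leq \tinyR$, contradicting the scattered assumption. Consequently, the set $A' := \{X \in \partition \mid X \cap A \neq \emptyset\}$ has size $|A'| = |A| \geq (2mc)^{c+1}$.

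Then I would apply \cref{thm:wcol2flat} with input $G$, $\mainR$, $\partition$, $\preceq$, the integer $m$, and the set $A'$. The size bound is met because $\wcol_{\mainR}(G,\partition,\preceq) \leq c$ and $(2mc')^{c'+1}$ is monotone in $c'$, so $|A'| \geq (2mc)^{c+1}$ suffices. The theorem delivers, in polynomial time, a subfamily $S \subseteq \partition$ with $|S| \leq c$ and a set $B' \subseteq A' \setminus S$ with $|B'| \geq m$ such that any two distinct $X_1, X_2 \in B'$ lie at distance more than $\mainR$ in $G - \bigcup S$. Every element of $S$ is then, by construction, a subset of $V(G)$ of strong diameter at most $\tinyR$, as required by the corollary.

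Finally, I would set $B := \bigcup_{X \in B'} (X \cap A)$; by the earlier observation, this is a bijection between $B'$ and $B$, so $|B| \geq m$. Any two distinct $u,v \in B$ lie in different parts $X_u, X_v \in B'$, so the inter-part distance bound gives $\dist_{G-\bigcup S}(u,v) > \mainR$, showing that $B$ is $\mainR$-scattered in $G - \bigcup S$; moreover $B \subseteq A \setminus \bigcup S$ since $B' \cap S = \emptyset$. I do not anticipate a real obstacle here: the only point requiring care is the translation between the two flavors of ``scattered'' (vertices versus parts), and this is handled cleanly by the strong-diameter guarantee of the partition from \cref{thm:wcol}, which ensures both that parts contain at most one element of $A$ and that the $S$ produced by \cref{thm:wcol2flat} consists of sets of diameter at most $\tinyR$.
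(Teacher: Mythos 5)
Your proposal is correct and follows essentially the same approach as the paper: invoke \cref{thm:wcol} to get the partition with strong-diameter-$\tinyR$ parts, note that $\tinyR$-scatteredness of $A$ gives an injection into parts, apply \cref{thm:wcol2flat} to the induced set of parts, and pull the result back to vertices. The only (harmless) addition is your explicit remark about monotonicity of $(2mc')^{c'+1}$ in $c'$, which the paper leaves implicit.
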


\paragraph{Bounding $\varepsilon$-scatter dimension.}
Armed with the flatness statement of \cref{cor:flat}, the proof of
\cref{thm:dim-bound} is relatively simple and presented in \cref{sec:ladders}. 

Grohe et al.~\cite{GroheKRSS18} presented a construction that shows that known 
(sparsity) bounds for weak coloring numbers in minor-closed graph classes are essentially optimal
and Nadara et al.~\cite{empirical-uqw} observed that the same construction also gives lower bounds
for flatness guarantees. 
In \cref{sec:lb} we modify this construction to prove also the lower bound 
of \cref{thm:dim-lb} for $\varepsilon$-scatter dimension. 

\paragraph{Further consequences.}
Our main technical contribution lies in the introduction of the metric analogs
of the main concepts from sparsity and showing that they can be used to obtain
an (almost tight, thanks to \cref{thm:dim-lb}) bound for $\varepsilon$-scatter
dimension in proper minor-closed graph classes. 

The natural next question is about further applications of the introduced toolbox. 
In \cref{sec:kcenter}, we show improvements for the special case
of \textsc{$k$-Center} and prove \cref{thm:coreset}. 
In short, the proof uses the developed notion of metric flatness to control 
the number of clients ``useful'' for an algorithm as in~\cite{BadoiuHI02}. 

\paragraph{Notation.}
We use the following notation from set theory: for a family $S$ of sets, $\bigcup S$
is a shorthand for $\bigcup_{A \in S} A$.

For a metric space $(V,\dist)$, and sets $X,Y \subseteq V$, we set
$\dist(X,Y) \coloneqq \min_{x \in X} \min_{y \in Y} \dist(x,y)$, and similarly $\dist(x,Y) \coloneqq \min_{y \in Y} \dist(x,y)$
for $x \in V$ and $Y \subseteq V$.

Let $G$ be a graph. Two sets $A,B \subseteq V(G)$ are \emph{adjacent} if there exists an edge with one endpoint in $A$
and the second endpoint in $B$. 
A \emph{tree decomposition} of $G$ is a pair $(T,\beta)$ where $T$ is a tree
and $\beta$ assigns to every $t \in V(T)$ a set $\beta(t) \subseteq V(G)$ called a \emph{bag}
such that (1) for every $v \in V(G)$, $\{t\in V(T)~|~v \in \beta(t)\}$ induces a nonempty connected
subgraph of $T$, and (2) for every $uv \in E(G)$, there exists $t  \in V(T)$ with $u,v \in \beta(t)$. The \emph{width} of $(T,\beta)$ is $\max_{t \in V(T)} |\beta(t)|-1$ and
the \emph{treewidth} of $G$ is the minimum width of a tree decomposition of $G$.

\section{Weak coloring numbers}\label{sec:wcol}
As discussed, to prove \cref{thm:wcol}, we need the buffered cop decompositions
of~\cite{buffered-cop-dec}. In the following definitions from~\cite{buffered-cop-dec},
$G$ is an edge-weighted graph. 
\begin{definition}
    A \emph{supernode} $\eta$ is a pair $(V_{\eta}, T_{\eta})$ where $V_{\eta} \subseteq V(G)$ and $T_{\eta}$ is a tree in $G[V_{\eta}]$ called the \emph{skeleton} of $\eta$. 
    $\eta$ has \emph{radius} $\Delta$ if every $v \in V_{\eta}$ is at distance at most $\Delta$ from $T_{\eta}$ in $G[V_{\eta}]$. 
\end{definition}

\begin{definition}
    A \emph{buffered cop decomposition} for $G$ is a pair $(\mathcal{Q}, \mathcal{T})$ where $\mathcal{Q}$ is a family of supernodes such that $\{V_{\eta}, \eta \in \mathcal{Q}\}$ is a partition of  $V(G)$ and $\mathcal{T}$ is a rooted tree, called the \emph{partition tree}, whose nodes are the supernodes in $\mathcal{Q}$. 
    For every supernode $\eta \in \mathcal{Q}$, we denote by $\dom(\eta)$ the set of all vertices of $G$ which belong to a set $V_{\zeta}$ such that $\zeta \in \mathcal{Q}$ is a descendant of $\eta$ in $\mathcal{T}$.
\end{definition}

\begin{definition}
    A \emph{$(\Delta, \gamma, w)$-buffered cop decomposition} for $G$ is a buffered cop decomposition $(\mathcal{Q}, \mathcal{T})$ satisfying the following properties: \begin{description}[itemsep=0px,topsep=0px]
        \item[Supernode radius.] Every supernode $\eta \in \mathcal{Q}$ has radius $\Delta$.
        \item[Shortest path skeleton.] For every supernode $\eta \in \mathcal{Q}$, the skeleton $T_{\eta}$ is a single-source shortest path tree in $G[\dom(\eta)]$ with at most $w$ leaves (not counting the root).
        \item[Supernode buffer.] Let $\eta \in \mathcal{Q}$ be a supernode and let $\zeta \in \mathcal{Q}$ be a supernode that is an ancestor of $\eta$ in $\mathcal{T}$. Either $V_{\eta}$ and $V_{\zeta}$ are adjacent in $G$ or $\dist_{G[\dom(\zeta)]}(V_{\eta}, V_{\zeta}) > \gamma$.
        \item[Tree decomposition.] For every $\eta \in \mathcal{Q}$, the set $\mathcal{A}_{\eta}$ of all supernodes $\zeta \in \mathcal{Q}$ such that $\zeta$ is an ancestor of $\eta$ in $\mathcal{T}$ and $V_{\zeta}$ and $V_{\eta}$ are adjacent in $G$ has size at most $w$ (including $\eta$). Furthermore, setting $W_{\eta} = \bigcup_{\zeta \in \mathcal{A}_{\eta}} V_{\zeta}$, we have that $(\mathcal{T}, (W_{\eta})_{\eta \in \mathcal{Q}})$ is a tree decomposition of $G$.
    \end{description}
\end{definition}

\begin{theorem}[\cite{buffered-cop-dec}] \label{th:buffered-cop-dec}
    Given a $K_h$-minor-free edge-weighted graph $G$ and a real $\mainR > 0$, 
    one can in polynomial time
    compute a $(\mainR, \mainR/h, h-1)$-buffered cop decomposition of $G$.
\end{theorem}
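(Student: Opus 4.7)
The plan is to adapt Andreae's classical cop-and-robber argument for $K_h$-minor-free graphs by ``inflating'' each cop position into a radius-$\mainR$ ball and carving out an additional buffer layer of width $\mainR/h$ at every recursive step. Andreae's theorem already yields a tree decomposition whose bags are unions of at most $h-1$ geodesics in successively peeled subgraphs; the idea is to grow each such geodesic into a supernode and pad it just enough to separate consecutive generations of the decomposition. Concretely, I would run a recursion on a ``territory'' $H$ (initially all of $V(G)$) together with a set $A$ of at most $h-1$ ``anchor'' vertices sitting on the boundary of $H$ that were placed by ancestor supernodes.

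Within a single recursive call producing a new supernode $\eta$, I would first build its skeleton $T_\eta$ as follows: pick an anchor as the root $r$ (or any vertex of $H$ if $A = \emptyset$) and grow a single-source shortest-path tree in $G[H]$ that reaches each of the remaining anchors along a distinct branch, yielding at most $|A|-1 \leq h-2$ leaves and thus matching the required count $w = h-1$. I would then define $V_\eta := \{v \in H : \dist_{G[H]}(v, T_\eta) \leq \mainR\}$, which gives supernode radius $\mainR$. To enforce a buffer of width $\mainR/h$, I would slightly inflate to $V_\eta^+ := \{v \in H : \dist_{G[H]}(v, T_\eta) \leq \mainR + \mainR/h\}$, delete $V_\eta^+$ from the territory, and recurse on each connected component $H'$ of $H \setminus V_\eta^+$; the new anchor set of the child is $A$ restricted to vertices adjacent to $H'$, together with the vertices of $V_\eta^+ \setminus V_\eta$ adjacent to $H'$. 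Any descendant supernode $\eta'$ then lies outside $V_\eta^+$, so the shortest path from $V_{\eta'}$ to $V_\eta$ inside $G[\dom(\eta)] = G[H]$ must either cross the buffer layer (giving length strictly greater than $\mainR/h$) or else consist of a single heavy edge (giving adjacency) --- precisely the two cases of the supernode-buffer condition.

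The main obstacle is proving that the anchor invariant $|A| \leq h-1$ --- and, equivalently, that $|\mathcal{A}_\eta| \leq h-1$ in the tree-decomposition bullet of the definition --- is preserved throughout the recursion. This is exactly where $K_h$-minor-freeness enters, in the spirit of Andreae: if the invariant failed, one could contract each of $h$ distinct ancestor skeletons together with the residual pieces enclosed by the anchors to exhibit a $K_h$ minor, a contradiction. Once the invariant is established the skeleton construction needs to reach at most $h-2$ other anchors from its root, so it has at most $h-1$ leaves as required. The tree-decomposition property of $(\mathcal{T}, (W_\eta)_\eta)$ then follows automatically, because every edge of $G$ is covered either inside a single supernode or between a supernode and one of its adjacent ancestors, and polynomial running time is immediate since the recursion tree has at most $|V(G)|$ nodes and each call performs a single Dijkstra computation in $G[H]$.
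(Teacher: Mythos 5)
The paper does not prove this statement: \cref{th:buffered-cop-dec} is cited verbatim from the Chang et al.\ paper on buffered cop decompositions, so there is no in-paper argument to compare against. That said, your sketch of an Andreae-style construction has two genuine gaps that would need to be closed before it could stand as a proof.

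First, the partition property fails as written. You set $V_\eta := \{v \in H : \dist_{G[H]}(v,T_\eta) \le \mainR\}$ but delete the larger set $V_\eta^+ := \{v : \dist_{G[H]}(v,T_\eta) \le \mainR + \mainR/h\}$ before recursing. The annulus $V_\eta^+ \setminus V_\eta$ is then assigned to no supernode at all, so $\{V_\eta : \eta \in \mathcal{Q}\}$ does not partition $V(G)$, contradicting the very first requirement of a buffered cop decomposition. You also cannot simply set $V_\eta := V_\eta^+$ without blowing up the supernode radius to $(1+1/h)\mainR$, which is not the claimed parameter. Handling the buffer region without either orphaning vertices or losing the radius bound is precisely the delicate part of the Chang et al.\ construction and cannot be waved away.

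Second, the anchor bookkeeping does not stay bounded. You set the new anchor set of a child $H'$ to be ``$A$ restricted to vertices adjacent to $H'$, together with the vertices of $V_\eta^+ \setminus V_\eta$ adjacent to $H'$.'' That second part can be arbitrarily large, so the invariant $|A| \le h-1$ is broken at the very first step. The $K_h$-minor-freeness argument you gesture at bounds the number of distinct \emph{ancestor supernodes} adjacent to the current territory (each contracted supernode gives one branch vertex of a potential clique minor), not the number of individual boundary \emph{vertices}; in Andreae's scheme one keeps one anchor per ancestor cop, not per boundary vertex, and one must also argue that redundant ancestors can be discarded to re-establish the bound before recursing. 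Your skeleton construction (root plus at most $h-2$ other anchors, hence at most $h-2 \le h-1$ leaves) is the right shape once this is repaired, but as stated the invariant you rely on is simply not maintained.
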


Buffered cop decompositions were recently used by Filtser~\cite{sparse-covers}
to show sparse covers of graphs from proper minor-closed graph classes. 
We will need the following lemma of~\cite{sparse-covers}, whose proof is 
in fact very similar to analogous proofs from the sparsity world, cf.~\cite{GroheKRSS18,HeuvelMQRS17}.

\begin{lemma}[Lemmata~1 and~2 of~\cite{sparse-covers}]\label{lem:few-supernodes-close}
    Let $G$ be an edge-weighted graph,
    $(\mathcal{Q}, \mathcal{T})$ be a $(\Delta, \gamma, w)$-buffered cop decomposition of $G$,
    and let $\eta \in \mathcal{Q}$ be a supernode. 
    Then, for every integer $q \geq 1$ there are at most $\binom{w + 2q - 1}{w}$ ancestors $\zeta$ of $\eta$ in $\mathcal{T}$ that satisfy
    \[ \dist_{G[\dom(\zeta)]}(V_{\eta}, V_{\zeta}) \leq q \cdot \gamma. \]
\end{lemma}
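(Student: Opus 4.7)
The plan is to adapt the classical argument of van den Heuvel et al.\ for bounding weakly reachable sets in proper minor-closed graph classes to the buffered cop decomposition setting: the skeleton of a supernode plays here the role of the ``branch set'' in the minor-theoretic argument, and the factor $w$ arises from the number of skeleton leaves rather than from the excluded clique minor.

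First I fix $\eta$ and list the close ancestors $\zeta_1, \ldots, \zeta_t$ of $\eta$ in $\mathcal{T}$ in order of increasing depth along the unique root-to-$\eta$ path. For each $\zeta_j$, I fix a witnessing shortest path $P_j$ in $G[\dom(\zeta_j)]$ from some $u_j \in V_\eta$ to some $v_j \in V_{\zeta_j}$ of length $d_j \leq q\gamma$. Since $T_{\zeta_j}$ is a single-source shortest-path tree in $G[\dom(\zeta_j)]$ with at most $w$ leaves, the vertex $v_j$ lies on one of at most $w$ root-to-leaf branches of $T_{\zeta_j}$.

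Next I assign to each close ancestor $\zeta_j$ a signature $\sigma(\zeta_j) \in \mathbb{Z}_{\geq 0}^w$ that records, in units of $\gamma$, the ``coordinates'' of $v_j$ along the branches of a common reference skeleton obtained by aligning the skeletons of all closer close ancestors (this is possible because the nesting $\dom(\zeta_1) \supseteq \dom(\zeta_2) \supseteq \cdots$ lets the skeletons be compared inside the outermost host graph). The factor of $2$ in the norm bound comes from measuring both the ``descent'' of $P_j$ from $V_\eta$ down to the entry point into $V_{\zeta_j}$ and the ``ascent'' back up along the skeleton of $\zeta_j$, each of which contributes at most $q$ to the coordinate sum.

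The main obstacle, and the step I would spend most effort on, is proving injectivity of $\sigma$. The idea is to combine the \emph{supernode buffer} property (which guarantees that non-adjacent ancestors are at $G[\dom(\cdot)]$-distance more than $\gamma$) with the \emph{tree decomposition} property (which forces the witnessing paths to pass through the bags $W_{\zeta}$ of the intervening ancestors in a controlled way). Whenever two close ancestors $\zeta_i \neq \zeta_j$ had coinciding signatures, the corresponding witnessing paths would imply a $G[\dom(\zeta_j)]$-path between $V_{\zeta_i}$ and $V_{\zeta_j}$ of length at most $\gamma$, contradicting the buffer property unless $V_{\zeta_i}$ and $V_{\zeta_j}$ are adjacent in $G$; but only $w$ ancestors can be adjacent to $V_\eta$, and a refined bookkeeping handles the adjacency case. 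Once injectivity is established, the number of close ancestors is at most the number of vectors in $\mathbb{Z}_{\geq 0}^w$ with $\ell_1$-norm at most $2q-1$, which by stars and bars is exactly $\binom{w + 2q - 1}{w}$, matching the claim.
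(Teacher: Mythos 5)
The paper does not actually prove this lemma; it imports it verbatim from Filtser's paper on sparse covers (Lemmata~1 and~2 there), which in turn adapts the Heuvel~et~al.\ cop-decomposition argument. Your proposal correctly identifies that lineage and the right high-level plan --- a signature vector in $\mathbb{Z}_{\geq 0}^w$ indexed by skeleton branches, injectivity via the buffer property, a stars-and-bars count --- so in spirit you are on the same track. The problem is that all three load-bearing steps are asserted rather than executed, and at least two of them are stated incorrectly.

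First, the signature is not well-defined. You propose to record ``coordinates along the branches of a common reference skeleton obtained by aligning the skeletons of all closer close ancestors.'' Each skeleton $T_{\zeta_j}$ is a shortest-path tree in its own host graph $G[\dom(\zeta_j)]$, and these host graphs differ; there is no canonical alignment between them, and you give no construction of one. The actual argument does not merge skeletons. It works per-ancestor: for a fixed close ancestor $\zeta$, the witnessing path lands near one of the $\leq w$ root-to-leaf branches of $T_\zeta$, and the signature coordinate on that branch is a quantized distance $\lfloor \dist/\gamma \rfloor$ (or similar) measured along that one branch. Until you specify what a ``coordinate'' is and with respect to which skeleton, the claimed domain $\mathbb{Z}_{\geq 0}^w$ and its norm bound have no content.

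Second, the injectivity/bookkeeping step invokes the wrong bound. If $\zeta_i \neq \zeta_j$ have equal signatures and the path-stitching yields a $G[\dom(\cdot)]$-path between $V_{\zeta_i}$ and $V_{\zeta_j}$ of length $\leq \gamma$, then the \textbf{Supernode buffer} property forces $V_{\zeta_i}$ and $V_{\zeta_j}$ to be adjacent \emph{to each other}. You propose to dispose of this case by noting ``only $w$ ancestors can be adjacent to $V_\eta$,'' but that bound (from the \textbf{Tree decomposition} property, on $|\mathcal{A}_\eta|$) concerns adjacency with $\eta$, not adjacency between two arbitrary ancestors $\zeta_i,\zeta_j$ on the root-to-$\eta$ path. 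The ``refined bookkeeping'' you defer to is precisely the part of the argument that does the work, and it is not the bound you cite. Finally, a small consistency check: you say each of the descent and ascent contributes at most $q$ to the coordinate sum, which gives $\ell_1$-norm $\leq 2q$, hence count $\binom{w+2q}{w}$, not the claimed $\binom{w+2q-1}{w}$; the missing $-1$ has to be justified, not asserted. As written, this is a plausible roadmap, not a proof.
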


Armed with the buffered cop decompositions, we are ready to prove \cref{thm:wcol}.

\thmwcol*

\begin{proof}
    Set $\Delta = \tinyR/4$.
    Since $G$ is $K_h$-minor-free, in polynomial time 
    we can compute a $(\Delta, \Delta/h, h~-~1)$-buffered cop decomposition $(\mathcal{Q}, \mathcal{T})$ of $G$ by \cref{th:buffered-cop-dec}.
    For every supernode $\eta \in \mathcal{Q}$, construct greedily a maximal set $S_{\eta} \subseteq V(T_{\eta})$ that is $\Delta$-scattered in $G[V_{\eta}]$; we have that for every vertex $u \in V(T_{\eta})$, there exists a vertex $v \in S_{\eta}$ such that $\dist_{G[V_{\eta}]}(u, v) \leq \Delta$.
    We build a partition $\mathcal{P}_{\eta} = \{X_v, v \in S_{\eta}\}$ of $V_{\eta}$ as follows: for every $u \in V_{\eta}$, let $v \in S_{\eta}$ be such that $\dist_{G[V_{\eta}]}(u, v)$ is minimum (breaking ties consistently, i.e., according to some fixed total order on $S_\eta$) and add $u$ to $X_v$.
    Finally, we set $\partition = \bigcup_{\eta \in \mathcal{Q}} \mathcal{P}_{\eta}$.
    
    Recall that $\mathcal{T}$ is a rooted tree whose nodes are the supernodes. Fix any order $\preceq$ on $\partition$ such that if $\eta$ is a proper ancestor of $\zeta$ in $\mathcal{T}$ and $X \in \mathcal{P}_{\eta}, Y \in \mathcal{P}_{\zeta}$ then $X \prec Y$. Observe that $\partition$ and $\preceq$ can indeed be computed in time $|V(G)|^{O(1)}$. We now show that they satisfy the desired properties.

    $(\mathcal{Q}, \mathcal{T})$ is a buffered cop decomposition so $\{V_{\eta}, \eta \in \mathcal{Q}\}$ is a partition of $V(G)$. Since every $\mathcal{P}_{\eta}$ is a partition of $V_{\eta}$ then $\partition$ is a partition of $V(G)$.
    Let $X \in \partition$. There exist $\eta \in \mathcal{Q}$ and $v \in S_{\eta}$ such that $X = X_v$. 
    Let $u \in X_v$. Then, $u \in V_{\eta}$ so by the supernode radius property there exists $w \in V(T_{\eta})$ such that $\dist_{G[V_{\eta}]}(u, w) \leq \Delta$.
    By construction of $S_{\eta}$, there exists $v' \in S_{\eta}$ such that $\dist_{G[V_{\eta}]}(w, v') \leq \Delta$.
    Thus, $v' \in S_{\eta}$ and $\dist_{G[V_{\eta}]}(u, v') \leq 2\Delta$,
    so since $u$ was added to $X_v$ then $\dist_{G[V_{\eta}]}(u, v) \leq \dist_{G[V_{\eta}]}(u, v') \leq 2\Delta$.
    Furthermore, every $u'$ in the shortest path from $u$ to $v$ in $G[V_{\eta}]$ was also added to $X_v$ since we broke ties consistently. Therefore, there is a path of length at most $2\Delta$ in $G[X_v]$ from $u$ to $v$. This proves that $G[X]$ has 
    diameter at most $4\Delta = \tinyR$.
    
    Fix some $\mainR > \tinyR$; we now prove the promised bound
    on $\wcol_{\mainR}(G,\partition,\preceq)$. 
    Fix $X \in \partition$ and let us inspect $\WReachS{X}$. 
    
    Consider $Y \in \WReachS{X}$. We have $Y \in \partition$, $Y \preceq X$ and there is a path $P$ of length at most $\mainR$ from $X$ to $Y$ in $G - \left( \bigcup_{Z \prec Y}Z\right)$.
    Let $\eta_X \in \mathcal{Q}$ (resp. $\eta_Y \in \mathcal{Q}$) be such that $X \in \mathcal{P}_{\eta_X}$ (resp. $Y \in \mathcal{P}_{\eta_Y}$).
    We first argue that $\eta_Y$ and $\eta_X$ are comparable in $\mathcal{T}$.
    By contradiction, suppose not and let $\eta$ be their lowest common ancestor in $\mathcal{T}$.
    We consider the tree decomposition $(\mathcal{T}, (W_{\zeta})_{\zeta \in \mathcal{Q}})$ given by the tree decomposition property of the buffered cop decomposition.
    $P$ is a path from $X \subseteq V_{\eta_X} \subseteq W_{\eta_X}$ to $Y \subseteq V_{\eta_Y} \subseteq W_{\eta_Y}$ and $\eta$ separates $\eta_X$ and $\eta_Y$ in $\mathcal{T}$ so there exists $v \in V(P) \cap W_{\eta}$. 
    There exists an ancestor $\zeta$ of $\eta$ such that $v \in V_{\zeta}$, and there exists $Z \in \mathcal{P}_{\zeta}$ such that $v \in Z$.
    Since $\zeta$ is an ancestor of $\eta$, $\zeta$ is a proper ancestor of $\eta_Y$ so $Z \prec Y$ by definition of $\preceq$.
    This contradicts the fact that $V(P) \subseteq V(G) - \left( \bigcup_{Z \prec Y}Z\right)$.
    
    Thus, $\eta_Y$ and $\eta_X$ are comparable in $\mathcal{T}$, and $Y \preceq X$ implies that $\eta_Y$ is an ancestor of $\eta_X$ in $\mathcal{T}$.
    We now prove that $V(P) \subseteq \dom(\eta_Y)$.
    By contradiction, suppose not and consider the vertex $v \in V(P) \setminus \dom(\eta_Y)$ closest to $X$. Let $\eta_v \in \mathcal{Q}$ be such that $v \in V_{\eta_v}$. 
    Since $v \notin \dom(\eta_Y)$ then $\eta_v$ is not a descendant of $\eta_Y$. However, $\eta_X$ is a descendant of $\eta_Y$ so $v \notin V_{\eta_X}$, hence $v \notin X$ and in particular $v$ is not the first vertex of $P$. 
    Let $u \in \dom(\eta_Y)$ be the predecessor of $v$ in $P$, and $\eta_u \in \mathcal{Q}$ be such that $u \in V_{\eta_u}$. Since $u \in \dom(\eta_Y)$, it follows that $\eta_u$ is a descendant of $\eta_Y$.
    By the tree decomposition property, there exists a supernode $\zeta$ such that $u, v \in W_{\zeta}$, hence $\eta_u$ and $\eta_v$ are ancestors of $\zeta$ in $\mathcal{T}$. Since $\eta_u$ is a descendant of $\eta_Y$ in $\mathcal{T}$ and not $\eta_v$ then $\eta_v$ is an ancestor of $\eta_u$. Therefore, $\eta_v$ and $\eta_Y$ are comparable in $\mathcal{T}$ and $\eta_v$ is not a descendant of $\eta_Y$ so $\eta_v$ is a proper ancestor of $\eta_Y$.
    Let $Z \in \mathcal{P}_{\eta_v}$ be such that $v \in Z$. Then, $Z \prec Y$ by definition of $\preceq$.
    This contradicts the fact that $V(P) \subseteq V(G) - \left( \bigcup_{Z \prec Y}Z\right)$.
    Therefore, $V(P) \subseteq \dom(\eta_Y)$ and thus $\dist_{G[\dom(\eta_Y)]}(X, Y) \leq \mainR$ so $\dist_{G[\dom(\eta_Y)]}(V_{\eta_X}, V_{\eta_Y}) \leq \mainR$.
    
    By \cref{lem:few-supernodes-close} for 
    $q = \lceil \frac{\mainR}{\Delta/h} \rceil = \lceil 4h\frac{\mainR}{\tinyR} \rceil$,
    the number of such supernodes $\eta_Y$ is bounded by
    \[ \binom{h-2+2 \cdot \left\lceil 4h \frac{\mainR}{\tinyR} \right\rceil}{h-1}. \]
    To finish the proof, it suffices to show that for every supernode $\eta$ it holds that
    \begin{equation}\label{eq:single-eta-bound} \left| \WReachS{X} \cap \partition_{\eta} \right| \leq \left(12+8\frac{\mainR}{\tinyR}\right) \cdot (h-1). 
    \end{equation}
    Assume the contrary; let $\eta$ violate~\eqref{eq:single-eta-bound}.
    For each $Y \in \mathcal{P}_{\eta}$, let $c_Y \in S_{\eta} \cap Y$ be the center
    vertex of $Y$; recall that $c_Y$ lies on $T_{\eta}$. 
    By the shortest path skeleton property, $T_{\eta}$ is a single-source shortest path tree
    in $G[\dom(\eta)]$ with at most $h-1$ leaves, so one of the root-to-leaf paths in $T_{\eta}$ contains more than $12 + 8\frac{\mainR}{\tinyR}$ vertices $c_Y$ for
    $Y \in \WReachS{X} \cap \partition_{\eta}$. 
    Write them as $c_0, \ldots, c_k$ according to the order in which they appear along this path and let $X_0, \ldots, X_k$ be the corresponding parts of $\mathcal{P}$
    (i.e., $c_i$ is a shorthand for $c_{X_i}$ and also $k \geq 12+8\frac{\mainR}{\tinyR}$).
    For $i \neq j$, we have $\dist_{G[\dom(\eta)]}(c_i, c_j) = \dist_{G[V(T_{\eta})]}(c_i, c_j)$ since $T_{\eta}$ is a single-source shortest path tree in $G[\dom(\eta)]$, and $c_i$ and $c_j$ are in the same root-to-leaf path. Therefore, $\dist_{G[\dom(\eta)]}(c_i, c_j) = \dist_{G[V_{\eta}]}(c_i, c_j) > \Delta$ by the construction of $S_{\eta}$.
    Since the $c_i$ are pairwise at distance greater than $\Delta$ in $G[\dom(\eta)]$ and are along a shortest path in $G[\dom(\eta)]$, an immediate induction shows that \[ \dist_{G[\dom(\eta)]}(c_0, c_k) > k \cdot \Delta \geq 2\mainR + 3\tinyR. \]
    However, 
    \[ \dist_{G[\dom(\eta)]}(c_0, c_k) \leq \tinyR + \dist_{G[\dom(\eta)]}(X_0, X) + \tinyR + \dist_{G[\dom(\eta)]}(X, X_k) + \tinyR \leq 2\mainR + 3\tinyR. \]
    This proves~\eqref{eq:single-eta-bound} and concludes the proof of \cref{thm:wcol}.
\end{proof}

\section{Flatness}\label{sec:flatness}
Having proven \cref{thm:wcol}, we are now ready to prove \cref{thm:wcol2flat}.
Its proof closely follows the lines of an analogous proof of~\cite{empirical-uqw}.

\thmflat*

\begin{proof}
    The algorithm is presented as \cref{alg:uqw}. 

    \begin{algorithm}
    \caption{Algorithm for \cref{thm:wcol2flat}}
    \label{alg:uqw}
    \begin{algorithmic}[1]
    \Require Edge-weighted graph $G$, $\mainR > 0$, 
    partition $\partition$ of $V(G)$,  ordering $\preceq$ of $\partition$,
    and $A \subseteq \partition$

    \State $S \gets \emptyset$
    
    \While{$\exists_{X \in \mathcal{P} \setminus S}
     |\{Y \in A_, X \in \WReach{\mainR}{G - \bigcup S}{\partition \setminus S}{\preceq}{Y}\}| \geq \frac{|A|}{2mc}$}
        \State $A \gets \{Y \in A, 
        X \in \WReach{\mainR}{G - \bigcup S}{\partition \setminus S}{\preceq}{Y}\}$
        \State $S \gets S \cup \{X\}$
    \EndWhile

    \State $B \gets \emptyset$
    \State $A' \gets A \setminus S$

    \While{$A' \neq \emptyset$}
        \State $X \gets$ arbitrary element of $A'$
        \State $B \gets B \cup \{X\}$
        \State $A' \gets \{Y \in A',
        \WReach{\mainR}{G - \bigcup S}{\partition\setminus S}{\preceq}{Y} \cap \WReach{\mainR}{G - \bigcup S}{\partition\setminus S}{\preceq}{X} = \emptyset\}$
    \EndWhile

    \State Return $S, B$

    \end{algorithmic}
    \end{algorithm}

    We initiate with $S = \emptyset$.
    The first loop iterates as long as there exists $X \in \partition \setminus S$
    that is weakly $r$-reachable in $G-\bigcup S$
    from at least $|A|/(2mc)$ elements of $A$. 
    Intuitively, this $X$ is a ``hub'' and should be put into $S$. 
    To limit the number of iterations of the first loop, we restrict $A$
    to only those $Y \in A$ from which $X$ is weakly $r$-reachable in $G-\bigcup S$. 
    
    Observe that at any point of the first loop,
    we have $S \subseteq \bigcap_{Y \in A} \WReachS{Y}$ and $A \neq \emptyset$.
    Thus, $|S| \leq c$ holds at any point of the algorithm,
    and the first loop iterates at most $c$ times.
    
    In every iteration of the first loop, the size of $A$ shrinks at most by a factor of $2mc$. 
    Hence, if initially $|A| \geq (2mc)^{c+1}$, then once we exit the first loop, 
    $|A| \geq 2mc \geq 2c$. Consequently, $A' := A \setminus S$ is of size
    at least $|A|-c \geq |A|/2$.

    In the second loop, we greedily construct a subset $B \subseteq A'$ 
    such that the sets $\WReach{\mainR}{G-\bigcup S}{\partition\setminus S}{\preceq}{X}$ for $X \in B$ are pairwise disjoint. 
    
    Let $X \in B$ and consider the iteration of the second loop during which $X$ was added to $B$. At the start of that iteration, we had $X \in A'$ so $X \notin S$. 
    Furthermore, if $Z \in \WReach{\mainR}{G - \bigcup S}{\partition\setminus S}{\preceq}{X}$, then there are at most $|A|/(2mc)$ elements $Y \in A$ such that $Z \in \WReach{\mainR}{G - \bigcup S}{\partition\setminus S}{\preceq}{Y}$ since the first loop has ended.
    Since $|\WReach{\mainR}{G - \bigcup S}{\partition\setminus S}{\preceq}{X}| \leq \wcol_{\mainR}(G, \partition, \preceq) = c$, there are at most $c \cdot |A|/(2mc)$ sets $Y \in A'$ such that $\WReach{\mainR}{G - \bigcup S}{\partition\setminus S}{\preceq}{X} \cap \WReach{\mainR}{G - \bigcup S}{\partition\setminus S}{\preceq}{Y} \neq \emptyset$, and $X$ is such a set.
    Thus, when we added $X$ to $B$, we removed at most $|A|/2m$ elements from $A'$, including $X$.
    Since initially $|A'| \geq |A|/2$ and we only stop when $A'$ is empty, we have $|B| \geq m$.
    
    Finally, let $X_1 \neq X_2 \in B$ and assume without loss of generality that $X_1$ was added to $B$ before $X_2$.
    By contradiction, suppose that there is a path $P$ of length at most $\mainR$ from $X_1$ to $X_2$ in $G - \bigcup S$. Let $Y \in \partition$ be the $\preceq$-minimum element that contains a vertex of $P$. Then, 
    $Y \in \WReach{\mainR}{G - \bigcup S}{\partition\setminus S}{\preceq}{ X_1} \cap \WReach{\mainR}{G -\bigcup S}{\partition\setminus S}{\preceq}{X_2}$, as witnessed respectively by the prefix and the suffix of $P$.
    Thus, if $X_2$ was in $A'$ when $X_1$ was added to $B$, $X_2$ was removed from $A'$ at that point, so $X_2$ cannot have been added to $B$ after $X_1$, a contradiction. 
\end{proof}

We conclude this section with verifying \cref{cor:flat}.

\corflat*

\begin{proof}
    First, apply the algorithm of \cref{thm:wcol} to $G$ and $\tinyR$, obtaining
    a partition $\partition$ and ordering $\preceq$. 
    Observe that, since every element of $\partition$ is of strong diameter 
    at most $\tinyR$ while $A$ is $\tinyR$-scattered,
    every element of $\partition$ contains at most one element of $A$
    and $A_\partition := \{ X \in \partition~|~X \cap A \neq \emptyset\}$ is
    of size at least $(2mc)^{c+1}$. 
    Apply \cref{thm:wcol2flat} to $G$, $\mainR$, $\partition$, $\preceq$, $m$,
    and $A_\partition$, obtaining sets $S \subseteq \partition$ and
    $B_\partition \subseteq A_\partition \setminus S$.
    Since every element of $A_{\partition}$ contains exactly one vertex of $A$,
    $B := A \cap \bigcup B_\partition$ is of size at least $m$ and, 
    as every two distinct sets of $B_\partition$ are at distance more than $\mainR$
    in $G-\bigcup S$, $B$ is $\mainR$-scattered in $G-\bigcup S$. 
    Thus, $S$ and $B$ are as promised.
\end{proof}
\section{Bounding the $\varepsilon$-scatter dimension}\label{sec:ladders}
With \cref{thm:wcol,thm:wcol2flat}, we can now conclude 
the proof of \cref{thm:dim-bound}. 
The following lemma is the last step.

\begin{lemma} \label{lem:no-long-scatter}
    Let $G$ be an edge-weighted graph, let $\varepsilon > 0$,
    and let $\mainR > \tinyR > 0$ be such that $\tinyR \leq \frac{\varepsilon}{3} \cdot \mainR$.
    Let $\partition$ be a partition of $V(G)$ whose every part has
    weak diameter at most $\tinyR$ 
    and $\preceq$ be an ordering of $\partition$ with $c := \wcol_{3\mainR}(G,\partition,\preceq)$. 
    Then, every $\varepsilon$-ladder of width $\mainR$ in $G$ has length less than
    \[ \ell := \left(6c \cdot \left(\frac{\mainR}{\tinyR} + 2\right)^c\right)^{c+1}. \]
\end{lemma}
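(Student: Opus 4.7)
The plan is a proof by contradiction. Assume a $\varepsilon$-ladder $(x_i, p_i)_{i=1}^\ell$ of width $\mainR$ exists with $\ell \geq (6c(\mainR/\tinyR+2)^c)^{c+1} = (2mc)^{c+1}$, where $m := 3(\mainR/\tinyR+2)^c$. I will use \cref{thm:wcol2flat} to peel off a small ``hub'' set $S \subseteq \partition$ and extract a large index set $I$ whose associated parts containing the $p_i$ are pairwise at distance $> 3\mainR$ in $G - \bigcup S$; then a pigeonhole on a quantised distance profile of the centers $x_j$ against the hubs will produce two indices in $I$ that together violate the ladder condition.

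A preliminary reduction shows that the parts $P_i \in \partition$ containing $p_i$ are pairwise distinct. Indeed, if $P_{i_1} = P_{i_2}$ for $i_1 < i_2$ then $\dist(p_{i_1}, p_{i_2}) \leq \tinyR$, and combined with $\dist(p_{i_1}, x_{i_2}) \leq \mainR$ and $\tinyR \leq (\varepsilon/3)\mainR$ this forces $\dist(p_{i_2}, x_{i_2}) \leq (1+\varepsilon/3)\mainR < (1+\varepsilon)\mainR$, contradicting the ladder. So $A := \{P_i~|~i \in [\ell]\}$ has size $\ell$, and I invoke \cref{thm:wcol2flat} on $A$ with reachability $3\mainR$ and parameter $m$. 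Its hypothesis $|A| \geq (2mc)^{c+1}$ is met, yielding $S$ with $|S| \leq c$ and $B \subseteq A \setminus S$ with $|B| \geq m$ such that distinct elements of $B$ are at distance $> 3\mainR$ in $G-\bigcup S$. Set $I := \{i~|~P_i \in B\}$, of size $\geq m$.

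The structural key is that for every $j \in I$ there is at most one $i \in I, i<j$ whose witnessing $p_i$--$x_j$ path (of length $\leq \mainR$) entirely avoids $\bigcup S$. Indeed, concatenating two such paths at $x_j$ would produce a walk of length $\leq 2\mainR$ in $G-\bigcup S$ between distinct parts of $B$, contradicting their $>3\mainR$ separation---this is also why I invoke flatness at reachability $3\mainR$ rather than $\mainR$. For all other $i$ the witnessing path crosses some part $S_k \in S$, giving $\dist_G(p_i,S_k) + \dist_G(x_j,S_k) \leq \mainR$. I then attach to each $j \in I$ the profile $\sigma_j \colon S \to \{0,1,\ldots,\lfloor \mainR/\tinyR\rfloor,\infty\}$ with $\sigma_j(S_k) := \lfloor \dist_G(x_j,S_k)/\tinyR\rfloor$ when this is at most $\mainR/\tinyR$, else $\infty$. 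There are at most $(\mainR/\tinyR+2)^c$ such profiles.

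The central claim is that each profile is shared by at most two indices of $I$. Assume $j_1 < j_2$ in $I$ share a profile and $j_1$ is not the exceptional predecessor of $j_2$; then the crossing yields $\dist_G(p_{j_1},S_k) + \dist_G(x_{j_2},S_k) \leq \mainR$ for some $S_k \in S$, the shared bucket at $k$ gives $\dist_G(x_{j_1},S_k) < \dist_G(x_{j_2},S_k) + \tinyR$, and $S_k$ has weak diameter $\leq \tinyR$; the triangle inequality via $S_k$ then gives
\[ \dist_G(p_{j_1},x_{j_1}) \leq \dist_G(p_{j_1},S_k) + \tinyR + \dist_G(x_{j_1},S_k) < \mainR + 2\tinyR \leq (1 + 2\varepsilon/3)\mainR < (1+\varepsilon)\mainR, \]
contradicting the ladder at $j_1$. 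Hence $I$ realises at least $|I|/2 \geq m/2 = \tfrac{3}{2}(\mainR/\tinyR+2)^c$ distinct profiles while only $(\mainR/\tinyR+2)^c$ exist, the sought contradiction. I expect the delicate point to be this profile step: the cumulative slack from triangle inequality (weak diameter of $S_k$ plus the rounding bucket) totals $2\tinyR$, which must stay strictly below $\varepsilon\mainR$, and the hypothesis $\tinyR \leq (\varepsilon/3)\mainR$ is tuned exactly to make this work with constants to spare.
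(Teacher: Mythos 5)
Your proof is correct and follows essentially the same strategy as the paper's: apply \cref{thm:wcol2flat} with reachability $3\mainR$, then pigeonhole on quantised distance profiles against the hub set $S$, and close with a triangle inequality through a set of $S$ that a short ladder path must cross. The variation is cosmetic and symmetric to the paper's — you peel off the parts containing the $p_i$'s and profile the centers $x_j$ by $\lfloor\dist_G(x_j,S_k)/\tinyR\rfloor$, whereas the paper peels off the parts $X_i$ containing the $x_i$'s and profiles those parts directly; accordingly, where you concatenate two $p_i$-to-$x_j$ paths at $x_j$ to justify the ``at most one exceptional predecessor'' step and then bound each profile class by $2$, the paper concatenates two $x$-to-$p_i$ paths at $p_i$ and extracts three indices sharing a profile. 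Both variants use the same slack budget ($3\tinyR \leq \varepsilon\mainR$; yours uses $2\tinyR$, the paper's $3\tinyR$) and the same final count, so this is the paper's argument in a dual dress.
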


\begin{proof}
    Let $m = 3 \cdot \left(\frac{\mainR}{\tinyR} + 2\right)^c$ so that $\ell = (2cm)^{c+1}$.
    By contradiction, suppose that $(x_1, p_1), \ldots, (x_{\ell}, p_{\ell})$ is an $\varepsilon$-ladder of width $\mainR$ in $G$.

    For every $i \in [\ell]$, let $X_i \in \partition$ be the unique set in $\partition$ such that $x_i \in X_i$.
    Let $A = \{X_1, \ldots, X_{\ell}\} \subseteq \partition$.
    Since for every $1 \leq i < j \leq \ell$ we have $\dist_G(x_i,p_i) > (1+\varepsilon)\mainR$
    but $\dist_G(x_j,p_i) \leq \mainR$, we have $\dist_G(x_i,x_j) > \varepsilon \mainR \geq 3\tinyR$. Hence, $\{x_i, i \in [\ell]\}$ is $3\tinyR$-scattered in $G$
    and, consequently, the sets $X_i$ are pairwise distinct. 

    We apply \cref{thm:wcol2flat} to $G$, $3\mainR$, $\mathcal{P}$, $\preceq$, $m$ and $A$: There exist $S \subseteq \partition$ and $B \subseteq A \setminus S$ such that $|S| \leq c, |B| \geq m$ and every two distinct $X_i, X_j \in B$ are at distance at least $3\mainR$ in $G - \bigcup S$.
    Write $S = \{S_1, \ldots, S_t\}$.

    We define for every $X \in B$ its \emph{distance profile} to $S$, as follows.
    For $X \in B$ and $S_i \in S$, set
    \[ \delta(X, S_i) := \begin{cases} \left\lfloor \frac{\dist_G(X, S_i)}{\tinyR} \right\rfloor  & \mathrm{if}\ \dist_G(X, S_i) \leq \mainR,\\ +\infty & \mathrm{otherwise} \end{cases} \]
    The distance profile of $X$ to $S$ is $\mathrm{prof}_S(X) = (\delta(X, S_1), \ldots, \delta(X, S_t))$.
    Observe that there are at most $(\frac{\mainR}{\tinyR} + 2)^c$ possible distance profiles to $S$.
    By the pigeonhole principle, there exist $i < j < k$ such that $X_i, X_j, X_k \in B$ and $\mathrm{prof}_S(X_i) = \mathrm{prof}_S(X_j) = \mathrm{prof}_S(X_k)$.

    Let $Q_j$ be a shortest path in $G$ from $x_j$ to $p_i$ and $Q_k$ a shortest path in $G$ from $x_k$ to $p_i$. Since $(x_1, p_1), \ldots, (x_{\ell}, p_{\ell})$ is an $\varepsilon$-ladder in $G$ of width $\mainR$, then both $Q_j$ and $Q_k$ have length at most $\mainR$, so their concatenation is a path of length at most $2\mainR$ in $G$ between $X_j$ and $X_k$. 
    Since $X_j$ and $X_k$ are at distance more than $3\mainR$ in $G-\bigcup S$, then at least one of $Q_j, Q_k$ intersects $\bigcup S$, say $Q_j$ without loss of generality.
    Let $S_l \in S$ be such that $Q_j$ intersects $S_l$ and $s \in S_l \cap V(Q_j)$. 
    Then $\dist_G(X_j, S_l) \leq \mainR$, and $\mathrm{prof}_S(X_i) = \mathrm{prof}_S(X_j)$ implies $\dist_G(X_i, S_l) \leq \dist_G(X_j, S_l) + \tinyR$.
    Using that $X_i$ and $S_l$ have diameter at most $\tinyR$, we therefore obtain 
    \begin{align*} 
    \dist_G(x_i, p_i) &\leq \tinyR + \dist_G(X_i, S_l) + \tinyR + \dist_G(s, p_i) \\
    &\leq \dist_G(X_j, S_l) + \dist_G(s, p_i) + 3\tinyR \\
    &\leq \dist_G(x_j, s) + \dist_G(s, p_i) + \varepsilon \mainR \\
    &= \dist_G(x_j, p_i) + \varepsilon \mainR \leq (1 + \varepsilon)\mainR.
    \end{align*}
    This is the final contradiction.
\end{proof}

\cref{thm:dim-bound} follows directly by plugging in the 
pair $(\partition, \preceq)$ obtained from \cref{thm:wcol}
for $G$ and $\tinyR := \frac{\varepsilon}{3} \mainR$ for arbitrary $\mainR > 0$ into \cref{lem:no-long-scatter}.

\section{Lower bound}\label{sec:lb}
This section is devoted to the proof of \cref{thm:dim-lb}.

For integers $k,r,d \geq 1$, we define inductively a graph $G(k,r,d)$ 
and a matching $M(k,r,d) \subseteq E(G(k,r,d))$ whose endpoints we call henceforth \emph{the set of twin leaf vertices} of $G(k,r,d)$. 
For a vertex $v$, by \emph{creating a twin of $v$} we mean the following operation: we create a new vertex $v'$ that is adjacent exactly to $v$ and all neighbors of $v$. 
\begin{itemize}
    \item $G(1, r, d)$ is built from the complete rooted $d$-ary tree of depth $r+1$ (i.e., any path from leaf to root has $r$ edges and $r+1$ vertices) by creating a twin $v'$ for every leaf $v$ of the tree; the edges $vv'$ over all leaves $v$ form the matching $M(1,r,d)$.
    \item $G(k, 1, d)$ is built from the complete rooted $d$-ary tree of depth $k+1$ by adding an edge $uv$ whenever $u$ and $v$ are in the ancestor-descendant relation and then creating a twin $v'$ for every leaf $v$ of the tree; the edges $vv'$ over all leaves $v$ form the matching $M(k,1,d)$.
    \item $G(k+1, r+1, d)$ is built as follows: start from a copy of $G(k+1, r, d)$ and for every edge $vv'$ of $M(k+1,r,d)$, create $d$ disjoint copies of $G(k, r+1, d)$ and add an edge between every vertex from a copy of $G(k,r+1,d)$ and both $v$ and $v'$. 
    The matching $M(k+1,r+1,d)$ is the union of all matchings $M(k,r+1,d)$ of
    the copies of $G(k,r+1,d)$ used in the construction.
\end{itemize}

\begin{lemma}\label{lem:lb-tw}
    For every $d, k, r \geq 1$, the graph $G(k, r, d)$ has treewidth at most $2k$.
\end{lemma}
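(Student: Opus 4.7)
The plan is to prove the claim by induction on the construction of $G(k,r,d)$, with base cases $k=1$ and $r=1$, and an inductive case where $k,r\geq 2$. After reindexing the recurrence, $G(k,r,d)$ for $k,r\geq 2$ is built from $G(k,r-1,d)$ by attaching $d$ vertex-disjoint copies of $G(k-1,r,d)$ to each edge $vv'\in M(k,r-1,d)$, with every vertex of each attached copy joined to both $v$ and $v'$.

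For the base case $k=1$, the graph $G(1,r,d)$ is a rooted tree together with a pendant twin for each leaf; each twin $v'$ forms a triangle with $v$ and the parent of $v$, giving treewidth exactly $2=2k$. For the base case $r=1$, the graph $G(k,1,d)$ is a complete $d$-ary tree with $k+1$ vertices per root-to-leaf path, with each such path made into a clique, plus one twin per leaf. The tree decomposition that assigns to each vertex $u$ the set of all ancestors of $u$ (together with $u$, and with the twin $u'$ if $u$ is a leaf) has bags of size at most $k+2$, giving treewidth at most $k+1\leq 2k$ for every $k\geq 1$.

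For the inductive step, suppose $k,r\geq 2$ and assume that $G(k,r-1,d)$ admits a tree decomposition $(T_{0},\beta_{0})$ of width at most $2k$, while $G(k-1,r,d)$ admits a tree decomposition of width at most $2(k-1)$. For each edge $vv'\in M(k,r-1,d)$ and each attached copy $H$ of $G(k-1,r,d)$, I would take a width-$2(k-1)$ decomposition of $H$, add the two vertices $v$ and $v'$ to every one of its bags (producing bags of size at most $(2k-1)+2=2k+1$), and then glue the resulting subtree onto $(T_{0},\beta_{0})$ as a child of any bag that contains both $v$ and $v'$. Such a bag exists because $vv'$ is an edge of $G(k,r-1,d)$. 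Adding $\{v,v'\}$ to every bag of the $H$-decomposition covers all join edges from $H$ to $\{v,v'\}$, and since distinct copies $H$ are vertex-disjoint from each other and from $G(k,r-1,d)$, and the shared vertices $v,v'$ re-appear in the tree only through the glue bag, the standard tree-decomposition axioms are preserved. The resulting decomposition of $G(k,r,d)$ has width at most $2k$, as desired.

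The main piece of care is in the attachment step: I must confirm that attaching many copies $H$ across all matching edges (and all $d$ copies per edge) still yields a valid tree decomposition. This is where the connectivity of the bags containing $v$ and $v'$ matters; it follows since I attach each subtree to a single bag of $(T_{0},\beta_{0})$ that already contains $v,v'$, so after gluing, the collection of bags containing $v$ (respectively $v'$) remains connected in the enlarged tree. No deeper obstacle appears, and the bound $k+1\leq 2k$ holding tightly at $k=1$ is what forces the coefficient $2$ in front of $k$ in the statement.
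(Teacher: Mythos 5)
Your proof is correct and follows essentially the same route as the paper: the same two base cases ($k=1$ with the triangle observation, $r=1$ with the ancestor-path decomposition giving width $k+1\leq 2k$), and the same inductive step of taking a decomposition of the smaller copies, adding the two attachment vertices $v,v'$ to every bag, and gluing to a bag of the outer decomposition that contains both (which exists because $vv'$ is an edge). The only cosmetic difference is that you reindex the recurrence so the inductive step reads $G(k,r,d)$ from $G(k,r-1,d)$ and $G(k-1,r,d)$, whereas the paper writes it as $G(k+1,r+1,d)$ from $G(k+1,r,d)$ and $G(k,r+1,d)$.
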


\begin{proof}
    We prove it by induction on $k+r$.
    
    For $k = 1$, recall that the complete $d$-ary tree of depth $r+1$ has a tree decomposition of width 1 where each bag contains at most one leaf.
    Hence, $G(1, r, d)$ has a tree decomposition with all bags of size at most 3, thus $\treewidth(G(1, r, d)) \leq 2$.
    
    For $r = 1$, recall that $G(k, 1, d)$ is built from the complete $d$-ary tree of depth $k+1$ by adding an edge $uv$ whenever $u$ and $v$ are comparable in the tree order and then creating a twin for every leaf of the tree. Let $H$ be the graph we have before creating the twins. Observe that $H$ has a path decomposition where each bag has size at most $k+1$ and contains exactly one leaf of the tree. Therefore, $G(k, 1, d)$ has treewidth at most $k+1 \leq 2k$.
    
    For the inductive step, consider $G(k+1, r+1, d)$ and assume the property holds for $G(k+1, r, d)$ and $G(k, r+1, d)$. Take a tree decomposition of $G(k+1, r, d)$ of width at most $2k + 2$ and observe that every pair of twin leaf vertices are adjacent, thus they appear in some common bag. For every copy of $G(k,r+1,d)$, let $v,v'$ be the two vertices of
        $G(k+1,r,d)$ it is adjacent to, take a tree decomposition of $G(k, r + 1, d)$
        with width at most $2k$, add $v$ and $v'$ to every bag of this tree decomposition
        and make its root adjacent to any bag of the tree decomposition of $G(k+1,r,d)$ that contains both $v$ and $v'$.
\end{proof}

\begin{lemma}\label{lem:lb-ell}
    For every~$0 < \varepsilon < 1$, 
    and integers $k,r \geq 1$ such that $r < \frac{1-\varepsilon}{\varepsilon}$,
    there exists an assignment of weights to the edges of $G(k,r,2)$ such that
    the resulting edge-weighted graph contains an $\varepsilon$-ladder $(x_i,p_i)_{i=1}^\ell$ of
    length $2^{\binom{k+r-2}{r-1}}$ and width $1$
    with the following additional properties: \begin{itemize}
        \item For every $i \in [\ell]$, $x_i p_i \in M(k,r,2)$.
        \item For every $i, j \in [\ell]$, $\dist(x_j, p_i) \geq 1$.
    \end{itemize}
\end{lemma}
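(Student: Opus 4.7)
The plan is to prove the lemma by induction on $k+r$. At each inductive stage I explicitly choose weights on the edges of $G(k,r,2)$ so that the promised $\varepsilon$-ladder of length $2^{\binom{k+r-2}{r-1}}$ exists, with each pair $(x_i,p_i)$ lying on a twin-leaf matching edge, width $1$, and the extra lower bound $\dist(x_j,p_i)\ge 1$ for all $i,j$.

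\textbf{Base cases ($k=1$ or $r=1$).} The target length is $2$, so I pick two twin-leaf edges $(a,a'),(b,b')$ of the underlying tree and weight their incident edges asymmetrically: each twin-leaf edge gets weight $>1+\varepsilon$, while on the path through the rest of the tree, $a$ (resp.\ $b'$) lies on a ``short'' side and $a'$ (resp.\ $b$) on a ``long'' side. This asymmetry makes $\dist(a,a')>1+\varepsilon$ and $\dist(b,b')>1+\varepsilon$ (width violated), $\dist(a',b)\le 1$ (width respected), and $\dist(a,b')\ge 1$ (extra condition). The constraint $r\varepsilon<1-\varepsilon$ is the amount of slack that can be distributed over the $r$ levels of the tree while keeping all distances within $[1,1+\varepsilon]$.

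\textbf{Inductive step ($k,r\ge 2$).} The graph $G(k,r,2)$ is built from the ``outer'' graph $G(k,r-1,2)$ by attaching two copies of the ``inner'' graph $G(k-1,r,2)$ to each edge of $M(k,r-1,2)$. By induction the outer admits an $\varepsilon$-ladder of length $L_1=2^{\binom{k+r-3}{r-2}}$ on some edges $(a_i,b_i)_{i=1}^{L_1}$, and each copy of the inner admits an $\varepsilon$-ladder of length $L_2=2^{\binom{k+r-3}{r-1}}$. Pascal's identity $\binom{k+r-3}{r-2}+\binom{k+r-3}{r-1}=\binom{k+r-2}{r-1}$ gives $L_1L_2=2^{\binom{k+r-2}{r-1}}$, the required length. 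For each $i\in[L_1]$ I select one of the two copies attached at $(a_i,b_i)$, call it $C_i$, and use its inner ladder $(x_j^i,p_j^i)_{j=1}^{L_2}$; the combined ladder is the family $\{(x_j^i,p_j^i)\}_{(i,j)}$ ordered lexicographically with $i$ slow and $j$ fast.

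\textbf{Weights.} I rescale the inner ladder inside each copy $C_i$ to have width exactly $1$, rescale the outer ladder to some small width $w_o<1$, and boost the weight of the direct edge $|a_ib_i|$ past $\varepsilon+w_o$ (only outer distances through that edge grow, so the outer ladder still works). The key device is \emph{asymmetric attachment} exploiting the twin structure of the inner matching pair: I set the $x_j^i$-to-$a_i$ edge to small weight $\alpha_a$ and its $b_i$-edge to a very large weight $L$; symmetrically, $p_j^i$ has small-weight $\alpha_b$ edge to $b_i$ and large-weight $L$ edge to $a_i$; every other vertex of $C_i$ has both attachment edges of weight $L$ so it cannot participate in a short detour. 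I choose $\alpha_a+w_o+\alpha_b=1$ and $L>1+\varepsilon$. Then the cross-copy width path $p_{j_1}^{i_1}\!\to\! b_{i_1}\!\to\! a_{i_2}\!\to\! x_{j_2}^{i_2}$ has total length $\alpha_b+\dist_o(b_{i_1},a_{i_2})+\alpha_a\le\alpha_b+w_o+\alpha_a=1$, while within-copy far shortcuts either pay the $L$ or go $p\!\to\! b_i\!\to\! a_i\!\to\! x$ with length $\alpha_b+|a_ib_i|+\alpha_a>1-w_o+\varepsilon+w_o=1+\varepsilon$. The extra condition $\dist(x_j^{i_2},p_{j_1}^{i_1})\ge 1$ uses the inductive extra condition on the outer ladder, $\dist_o(a_{i_2},b_{i_1})\ge w_o$ for \emph{all} $i_1,i_2$, giving $\alpha_a+w_o+\alpha_b=1$ on the analogous path.

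\textbf{Main obstacle.} The delicate point is the simultaneous balancing of three constraints on $\alpha_a,\alpha_b,w_o,|a_ib_i|$: the cross-copy width demands $\alpha_a+\alpha_b+w_o\le 1$, the within-copy far-through-$a_ib_i$ demands $\alpha_a+\alpha_b+|a_ib_i|>1+\varepsilon$, and the extra condition demands $\alpha_a+\alpha_b+w_o\ge 1$. With symmetric attachments these are incompatible; the asymmetric attachment, together with the ability to freely increase $|a_ib_i|$, resolves the tension. The condition $r<(1-\varepsilon)/\varepsilon$ is what guarantees that after the $r$ nested rescalings of the base-tree weights, the accumulated $\varepsilon$-slack still leaves positive room for $w_o$, $\alpha_a$, and $\alpha_b$ at every recursive level.
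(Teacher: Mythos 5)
Your high-level plan---induction on $k+r$, outer ladder from $G(k,r-1,2)$, inner ladders from copies of $G(k-1,r,2)$ attached asymmetrically, Pascal to multiply the lengths, lexicographic concatenation---is the paper's route. But your inductive step has a genuine gap: you keep the accuracy parameter $\varepsilon$ fixed across recursion levels, and the numbers do not close. Concretely, after rescaling the outer graph to width $w_o$, the inductive hypothesis (an $\varepsilon$-ladder of width $w_o$) only guarantees $\dist_o(a_i,b_i) > (1+\varepsilon)w_o$. The ``within-copy far'' route $p^i_j \to b_i \to (\text{outer}) \to a_i \to x^i_j$ then has length at least $\alpha_b + (1+\varepsilon)w_o + \alpha_a$, and you need this to exceed $1+\varepsilon$ while simultaneously having $\alpha_a + \alpha_b + w_o = 1$ for the cross-copy width and the extra lower bound. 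Substituting $\alpha_a+\alpha_b = 1 - w_o$ reduces the requirement to $\varepsilon w_o > \varepsilon$, i.e.\ $w_o > 1$, contradicting $w_o < 1$. Boosting the \emph{edge} $|a_ib_i|$ does not help: the outer graph $G(k,r-1,2)$ has many other $a_i$--$b_i$ paths (through the zero-weight tree skeleton), and the IH only bounds their length by $(1+\varepsilon)w_o$, not $\varepsilon + w_o$.

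The fix, which the paper uses, is to invoke the IH for the outer graph with a \emph{strictly larger} accuracy parameter $\varepsilon' = \varepsilon/(1-2\gamma\varepsilon)$, where $\gamma = (1-\varepsilon)/(2(r+1)\varepsilon)$, and then scale the outer weights by $1-2\gamma\varepsilon$. The hypothesis $r+1 < (1-\varepsilon)/\varepsilon$ is exactly what makes $0 < \varepsilon' < 1$ and $r < (1-\varepsilon')/\varepsilon'$, so the IH applies; and the choice of $\varepsilon'$ makes $(1+\varepsilon')(1-2\gamma\varepsilon) = 1 + \varepsilon - 2\gamma\varepsilon$, which together with attachment weights $\alpha_a=\alpha_b=\gamma\varepsilon$ gives exactly $1$ for cross-copy width and strictly more than $1+\varepsilon$ for within-copy far distances. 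Without that recursion in $\varepsilon$, the three constraints you identify as ``delicate to balance'' are actually inconsistent. A secondary point: your base-case description attributes the constraint $r\varepsilon < 1-\varepsilon$ to distributing slack over the $r$ tree levels, but in fact the base case needs no $\varepsilon$-dependence at all (the paper uses weights $\infty$, $1/2$, $0$); the constraint is consumed entirely by the recursive $\varepsilon'$ update.
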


\begin{proof}
    We prove it by induction on $k+r$. 
    
    For $k = 1$, recall that $G(1, r, 2)$ is obtained from the complete binary tree of depth $r+1$ by creating a twin for every leaf. Pick two arbitrary pairs of twin leaf vertices $(x, p)$ and $(x', p')$. 
    Give weight $\infty$ to all edges incident to $x$ and $p'$, weight $1/2$ to the edge from $p$ to its parent and to the edge from $x'$ to its parent, and weight 0 to all other edges. Then, $(x, p), (x', p')$ is an $\varepsilon$-ladder of length 2 and width 1 with the desired properties.
        
    For $r = 1$, recall that $G(k, 1, 2)$ is a supergraph of the complete binary tree of depth $k+1$ for which we created a twin for every leaf. Pick two arbitrary pairs of twin leaf vertices $(x, p)$ and $(x', p')$. 
    Give weight $\infty$ to all edges incident to $x$ and $p'$, weight $1/2$ to the edge from $p$ to its parent and to the edge from $x'$ to its parent, and weight 0 to all other edges. Then, $(x, p), (x', p')$ is an $\varepsilon$-ladder of length 2 and width $1$ with the desired properties.

    For the inductive step, consider $G(k+1, r+1, 2)$ and assume the property holds for $G(k+1, r, 2)$ and $G(k, r+1, 2)$.
    Recall that we assume that $0 < \varepsilon < 1$ satisfies $r +1 < \frac{1-\varepsilon}{\varepsilon}$.
        Let 
        \[ \gamma = \frac{1-\varepsilon}{2(r+1)\varepsilon} > \frac{1}{2}\quad\mathrm{and}\quad \varepsilon' = \frac{\varepsilon}{1-2\gamma\varepsilon}. \]
        First, note that
        \[ \gamma = \frac{1-\varepsilon}{2(r+1)\varepsilon} < \frac{1-\varepsilon}{2\varepsilon} \ \Longrightarrow\ 0 < \varepsilon < 1-2\gamma\varepsilon\ \Longrightarrow\ 0 < \varepsilon' < 1. \]
        Note also that 
        \[ \frac{\varepsilon'}{1-2\gamma r \varepsilon'} = \frac{\varepsilon}{1-2\gamma(r+1)\varepsilon} = \frac{\varepsilon}{1-(1-\varepsilon)} = 1\ \Longrightarrow\ 
        r < 2\gamma r = \frac{1-\varepsilon'}{\varepsilon'}. \]

    By the inductive hypothesis, there exists an assignment of weights
    to $G(k+1,r,2)$ and a $\varepsilon'$-ladder $(x_i,p_i)_{i=1}^{\ell_1}$ in $G(k+1,r,2)$ that satisfies the additional properties from the lemma, 
    with $\ell_1 = 2^{\binom{k+r-1}{r-1}}$.
    We scale the weights of $G(k+1,r,2)$ by a factor $(1 - 2\gamma\varepsilon)$.
    
    Recall that for every $i \in [\ell_1]$, $x_ip_i \in M(k+1,r,2)$
    and $G(k+1,r+1,2)$ features two copies of $G(k,r+1,2)$ complete to $x_i$ and $p_i$. 
    Pick one such copy; by the inductive assumption, this copy admits edge weights
    with an $\varepsilon$-ladder $(y^i_j,p^i_j)_{j=1}^{\ell_2}$
    that satisfies the conditions of the lemma
    and has length $\ell_2 = 2^{\binom{k+r-1}{r}}$. Assign such edge weights to this copy
    of $G(k,r+1,2)$ without any scaling. 
    
    Set the weight of the edge between $y^i_j$ and $x_i$ to $\gamma\varepsilon$, same for the edge between $p^i_j$ and $p_i$, and set all other weights of edges between vertices of $M(k+1,r,2)$ and copies of $G(k,r+1,2)$ to $\infty$.
    Set also to $\infty$ all weights of edges of copies of $G(k,r+1,2)$ that
    were not assigned earlier.

    We claim that the sequence 
    \[ (y^1_1, p^1_1), (y^1_2, p^1_2), \ldots, (y^1_{\ell_2}, p^1_{\ell_2}), (y^2_1, p^2_1), \ldots, (y^{\ell_1}_{\ell_2}, p^{\ell_1}_{\ell_2}) \]
    is an $\varepsilon$-ladder in $G(k+1, r+1, 2)$ which satisfies the conditions of the lemma.
    Write the sequence as $(z_1, q_1), \ldots, (z_\ell, q_\ell)$.
    
        Let $i < j \in [\ell]$. If $z_j$ and $q_i$ are in the same copy of $G(k, r+1, 2)$ then by induction assumption there exists a path from $z_j$ to $q_i$ of length at most $1$ inside this copy of $G(k, r+1, 2)$,
        so $\dist(z_j, q_i) \leq 1$.
        If not, let $x_J,p_J$ be the attachment points of the copy of $G(k, r+1, 2)$ that contains $z_j$ and $x_I,p_I$ that of the copy of $q_i$.
        As $i < j$, we have $I < J$, hence there is a path of length at most 1 from $x_J$ to $p_I$ in the copy of $G(k+1, r, 2)$. Since we scaled down the weights, this path has now length at most $(1-2\gamma\varepsilon)$. Therefore, 
        \[ \dist(z_j, q_i) \leq \dist(z_j, x_J) + \dist(x_J, p_I) + \dist(p_I, q_i) \leq \gamma\varepsilon + 1-2\gamma\varepsilon + \gamma\varepsilon \leq 1. \]

        We now move to lower bounding some distances. To this end, we need the following observation:
        \begin{equation}\label{eq:lb-upper-dist}
            \forall_{I \in [\ell_1]} \dist(x_I,p_I) > 1+\varepsilon-2\gamma\varepsilon.
        \end{equation}
        To prove~\eqref{eq:lb-upper-dist}, consider a shortest path $P$ from $x_I$ to $p_I$. 
        If $P$ does not visit any copy of $G(k,r+1,2)$, then it stays inside the 
        copy of $G(k+1,r,2)$ and its length is more than
        \[ (1+\varepsilon')(1-2\gamma\varepsilon) = 1+\varepsilon -2\gamma\varepsilon. \]
        Otherwise, $P$ visits a copy $H_J$ of $G(k,r+1,2)$. It enters and exits $H_J$ via
        $x_J$ and $p_J$; hence, it traverses $H_J$ from some $x^J_i$ to some $p^J_j$. 
        By the inductive hypothesis, $\dist_{H_J}(x^J_i, p^J_j) \geq 1 > 1 + \varepsilon -2\gamma\varepsilon$, as $\gamma > \frac{1}{2}$. This proves~\eqref{eq:lb-upper-dist}. 
        
        Now, let $i \in [\ell]$ and consider a shortest path $P$ from $z_i$ to $q_i$, using as few vertices as possible. Let $H_i$ be the copy of $G(k, r+1, 2)$ that contains $z_i$ and $q_i$ and let $x, p$ be its attachment points in $G(k+1,r,2)$.
        If $P$ visits neither $x$ nor $p$ then $P$ stays in $H_i$ so by induction assumption has length more than $1+\varepsilon$. 
        If $P$ visits $x$ but not $p$ then $P$ stays in $H_i \cup \{x\}$ so the vertex visited after $x$ is some $z_j \in V(H_i)$ (as other edges have infinite weight) and after this $P$ stays inside $H_i$. Thus, $P$ has length at least 
        \[\dist(z_i, x) + \dist(x, z_j) + \dist_{H_i}(z_j, q_i) \geq \gamma\varepsilon + \gamma\varepsilon + 1 > 1+\varepsilon,\] where we used the induction assumption to argue that $d_{H_i}(z_j, q_i) \geq 1$, and the fact that $\gamma > 1/2$.
        The same reasoning holds if $P$ visits $p$ but not $x$.
        Finally, suppose that $P$ visits both $x$ and $p$. Observe that $p$ is at distance exactly $\gamma\varepsilon$ from $q_i$ while $x$ is at distance $\gamma\varepsilon$ from $H_i$ so at distance at least $\gamma\varepsilon$ from $q_i$. Thus, by minimality of $P$, $P$ does not visit $x$ after visiting $p$. Therefore, using also~\eqref{eq:lb-upper-dist}, $P$ has length 
        \[ \dist(z_i, x) + \dist(x, p) + \dist(p, q_i) = 2\gamma\varepsilon + \dist(x, p) > 1+\varepsilon.\]

        We now prove that for every $i, j \in [\ell]$ we have $\dist(z_i, q_j) \geq 1$.
        If $z_i$ and $q_j$ are not in the same copy of $G(k, r+1, 2)$, let $(x_I, p_I)$ be the attachment points of the copy $H_I$ of $G(k,r+1,2)$ containing $z_i$, and $(x_J, p_J)$ be the attachment points of the copy $H_J$ containing $q_j$.
        Let $P$ be a shortest path from $z_i$ to $q_j$. Then, $P$ visits at least one of $\{x_I, p_I\}$ and at least one of $\{x_J, p_J\}$. 
        Consider the first vertex of $\{x_I, p_I\}$ which $P$ visits. If it is $p_I$ then the previous vertex was some $p^I_{i'} \in V(H_I)$, and by induction assumption $d_{H_I}(z_i, p^I_{i'}) \geq 1$ so $P$ has length at least 1.
        Similarly, if the last vertex of $\{x_J, p_J\}$ which $P$ visits is $x_J$ then the next vertex is some $y^J_{j'} \in V(H_J)$, and by induction assumption $d_{H_J}(y^J_{j'}, q_j) \geq 1$ so $P$ has length at least 1.
        Lastly, if $P$ visits both $x_I$ and $p_J$ then $P$ has length $2\gamma\varepsilon + \dist(x_I, p_J)$. By induction assumption, in the original copy of $G(k+1, r, 2)$, the distance between $x_I$ and $p_J$ was at least 1, so their distance in the copy of $G(k+1, r, 2)$ is now at least $1-2\gamma\varepsilon$.
        Furthermore, any path from $x_I$ to $p_J$ that visits some $H_{I'}$ needs to contain a subpath from $x_{I'}$ to $y_{I'}$ that, by~\eqref{eq:lb-upper-dist}, is longer than $1-2\gamma\varepsilon$. Thus, 
        $\dist(x_I, p_J) \geq 1-2\gamma\varepsilon$ so $P$ has length at least 1.
        
        In the other case, if $z_i$ and $q_j$ are in the same copy $H$ of $G(k, r+1, 2)$ with attachment points $(x, p)$, again let $P$ be a shortest path from $z_i$ to $q_j$. If $P$ visits neither $x$ nor $p$ then $P$ stays inside $H$ so has length at least 1 by induction assumption. 
        If the last vertex of $P$ outside $H$ is $x$, then the next vertex is some $z_{i'} \in V(H)$
        and the rest of $P$ stays inside $H$, and $\dist_H(z_{i'}, q_j) \geq 1$ by induction hypothesis so $P$ has length at least 1. 
        A similar argument holds if the first vertex of $P$ outside $H$ is $p$. 
        In the remaining case, $P$ visits both $x$ and $p$, and visits $x$ before $p$.
        Then, by~\eqref{eq:lb-upper-dist}, the length of $P$ is at least
        $\gamma\varepsilon + 1 + \varepsilon - 2\gamma\varepsilon + \gamma\varepsilon = 1+\varepsilon$. This finishes the proof that $\dist(z_i,q_j) \geq 1$. 

        To conclude the proof of the lemma, note that 
        \[ \ell = \ell_1\ell_2 = 2^{\binom{k+r-1}{r-1}} \cdot 2^{\binom{k+r-1}{r}} = 2^{\binom{k+r}{r}}.\]
\end{proof}

We are now ready to conclude the proof of \cref{thm:dim-lb}.

\thmdimlb*

\begin{proof}
    Observe that $\varepsilon < \frac{1}{r+2}$ implies $r + 1 < \frac{1-\varepsilon}{\varepsilon}$.
    Hence, $G(k+1,r+1,2)$ is of treewidth at most $2k+2$ by \cref{lem:lb-tw}
    and, by \cref{lem:lb-ell}, admits edge weights with an $\varepsilon$-ladder of length at least $2^{\binom{k+r}{k}}$, as desired.
\end{proof}
\section{A coreset for \textsc{$k$-Center}}\label{sec:kcenter}
This section is devoted to the proof of \cref{thm:coreset}.
The following statement contains precise size bounds.

\begin{theorem} \label{thm:coreset-full}
Given a \textsc{$k$-Center} instance with the metric given as a graph metric
on a $K_h$-minor-free graph $G$, and an accuracy parameter $0 < \varepsilon < 1$
one can in polynomial time compute a subset $S$ of clients with the following guarantee:
for every set $X$ of at most $k$ potential centers, it holds that
\[ \max_{\mathrm{client}\ p} \dist(p,X) \leq (1+\varepsilon) \max_{p \in S} \dist(p,X).\]
Furthermore, the size of $S$ is bounded by
\[ k + \frac{9}{\varepsilon^3} \cdot \left(2c(\Gamma \cdot k+2) \cdot \left(\frac{3}{\varepsilon} + 2\right)^c\right)^{c+1}
\] 
where $c = c(h, 9/\varepsilon)$ where $c(\cdot,\cdot)$ comes from \cref{thm:wcol}
and 
\[ \Gamma = 3(c+1)\left[2k+c(4 + 3/\varepsilon + 2k)\right]. \]
\end{theorem}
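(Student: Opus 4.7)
The plan is to construct $S$ iteratively in the spirit of Badoiu--Har-Peled--Indyk~\cite{BadoiuHI02}, maintaining a growing set of witness clients and bounding its final size via a $k$-generalized version of the ladder bound in \cref{lem:no-long-scatter}. First I would compute, via the classical greedy farthest-point heuristic, a seed $Y \subseteq P$ of size $k+1$ that is $r_Y$-scattered for some $r_Y \in [r^\star, 2r^\star]$, where $r^\star$ denotes the optimum $k$-Center radius on $P$. A standard pigeonhole argument then yields $\max_{y \in Y} \dist(y, X) \ge r^\star/2$ for every $X$ with $|X| \le k$; this accounts for the additive $k$ term in the bound and guarantees $r_i := \max_{p' \in S_{i-1}} \dist(p', X_i) \ge r^\star/2$ for every iteration pair $(X_i, p_i)$ that can fire.

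Given $Y$, I would invoke \cref{thm:wcol} with $\tinyR := \varepsilon r_Y / 3$ to obtain a partition $\partition$ and ordering $\preceq$ satisfying $\wcol_{3r_Y}(G, \partition, \preceq) \le c = c(h, 9/\varepsilon)$, and then run the main loop: repeatedly locate (in polynomial time via an approximate oracle for constrained $k$-Center on the current $S$) a pair $(X, p)$ with $\dist(p, X) > (1+\varepsilon) \max_{p' \in S} \dist(p', X)$ and append $p$ to $S$. To analyze the loop, I would group iterations by the geometric value of $r_i$ in windows of ratio $1+\varepsilon/3$. Iterations with $r_i \gg r_Y$ can be shown to contribute only a bounded count (for $X_i$ with $r(X_i) \gg r^\star$ the seed already forces $r(X_i)/\max_{y \in Y} \dist(y, X_i)$ close to $1$), while iterations with $r_i < r^\star/2$ never arise by the seed's lower bound. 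This leaves $O(\varepsilon^{-3})$ relevant windows (the extra $\varepsilon^{-2}$ factor absorbing discretization slack that aligns each window's width with the hypothesis of \cref{lem:no-long-scatter}), and within each window the sequence $(X_i, p_i)$ is a $\Theta(\varepsilon)$-ladder of width $\Theta(r_Y)$, except that each $X_i$ now carries up to $k$ centers.

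The main obstacle is extending \cref{lem:no-long-scatter} to ladder steps with up to $k$ centers. Within one window I would apply \cref{cor:flat} to the parts of $\partition$ containing the $p_i$'s to extract a hub set $H \subseteq \partition$ of size at most $c$ together with a subsequence of length $\ge m$ whose corresponding parts are pairwise at distance $> 3r_Y$ in $G - \bigcup H$. For each surviving index $i$ I would tag $X_i$ by an \emph{enriched distance profile}: beyond the rounded distances from $p_i$'s part to each hub (as in \cref{lem:no-long-scatter}), the profile records the placement of each of the at most $k$ centers of $X_i$ among the (size $\le c$) collection of parts weakly reachable from $p_i$'s part or from a hub. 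The number of such profiles is at most $(3/\varepsilon + 2)^c$ times a combinatorial factor in $c$, $k$, and $\varepsilon^{-1}$ matching the definition of $\Gamma$ in the statement; picking $m := (\Gamma k + 2)(3/\varepsilon + 2)^c$ then forces three indices $i_1 < i_2 < i_3$ with the same enriched profile. The shortcutting argument of \cref{lem:no-long-scatter} adapts naturally: a shortest path from $p_{i_1}$ to $X_{i_2}$ (of length $\le r_Y$) must traverse some hub $s$ by the scattering condition, and equality of enriched profiles lets one reroute the $s$-to-$X_{i_2}$ subpath into a comparable-length $s$-to-$X_{i_1}$ subpath via the matching center, yielding $\dist(p_{i_1}, X_{i_1}) \le (1+\varepsilon) r_Y$ and contradicting the ladder gap. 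Combining the per-window bound $(2cm)^{c+1}$ from \cref{thm:wcol2flat} with the $O(\varepsilon^{-3})$ window count and the seed of size $k$ yields the claimed bound on $|S|$.
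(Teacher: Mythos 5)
Your high-level architecture mirrors the paper's: a size-$k$ seed obtained by farthest-point greedy, $O(\varepsilon^{-3})$ geometric scales, a per-scale sequence of pairs $(X_i,p_i)$ whose length is bounded by a $k$-center generalization of \cref{lem:no-long-scatter} using \cref{thm:wcol}/\cref{thm:wcol2flat}, and distance profiles to hubs. However, two central pieces of the argument are not actually in place.

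First, the combinatorial core --- bounding the length of a sequence where each $X_i$ is a \emph{set} of centers --- does not go through as sketched. You propose ``enriched distance profiles'' that record the placement of the (up to $k$) centers of $X_i$, pick three indices $i_1<i_2<i_3$ with matching enriched profile, and then claim a shortest $p_{i_1}$-to-$X_{i_2}$ path must cross a hub ``by the scattering condition.'' But the scattering condition only bounds distances between the parts of $B$ containing the $p_i$'s in $G-\bigcup S$; a single path from $p_{i_1}$ to a center of $X_{i_2}$ has no reason to cross $\bigcup S$. To use scattering one must concatenate two short paths, say from $p_{i_1}$ and $p_{i_2}$ to $X_{i_3}$; this only gives a path if both reach the \emph{same} center of $X_{i_3}$, which is exactly why the paper takes $\lambda k+2$ indices with the same (plain) profile and applies pigeonhole among $\lambda k+1$ shortest paths into a center set of size at most $\lambda k$. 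Your enriched profile of $X_{i_1}$ says nothing about which center of $X_{i_3}$ each path reaches, so three matching indices do not suffice, and the ``rerouting via the matching center'' step is not justified. (Consequently $\Gamma$ is named but not derived; in the paper it upper-bounds a quantity $\lambda$ defined by a self-referential equation, which arises from the set-cover step below.)

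Second, the polynomial-time extension step is glossed over. You invoke ``an approximate oracle for constrained $k$-Center'' but do not say how it is realized. The paper builds each sequence greedily by a set-cover argument: given the current tail $\{p_1,\ldots,p_\ell\}$ and radius $\mainR_t$, it either finds a pair $(X,p)$ with $|X|\le k\log\ell$ and the required distance constraints, or certifies no pair with $|X|\le k$ exists. This is exactly what forces the center sets in the sequence to be of size up to $\lambda k$ rather than $k$, which in turn forces the $\lambda k+2$ indices in the pigeonhole, which in turn makes $\lambda$ satisfy $\lambda = \log\bigl[(2c(\lambda k+2)(3/\varepsilon+2)^c)^{c+1}\bigr]$. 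This feedback between the construction algorithm and the combinatorial bound is an essential part of the proof that your sketch does not capture.
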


\begin{proof}
    Start by computing a 2-approximate solution $\tilde{X}$, e.g., by the greedy
    algorithm of~\cite{Gonzalez85}.
    Let $\tilde{\mainR} := \max_{\mathrm{client}\ p} \dist(p, \tilde{X})$ be the 
    value of $\tilde{X}$. 

    Compute a maximal $2\tilde{\mainR}$-scattered set $Z$ of clients. 
    Note that $|Z| \leq k$.
    
    Let $\delta = \sqrt{1 + \varepsilon} - 1$ so that $(1 + \delta)^2 = 1 + \varepsilon$.
    Let $\displaystyle \mainR^\ast = \frac{\tilde{\mainR}}{2(1 + \varepsilon)}$.
    For every integer $t \geq 0$ such that $(1+\delta)^t \cdot \mainR^\ast \leq 2(1+\delta)\tilde{\mainR}/\varepsilon$, we define a set $P_t$ as follows.

    Let $\mainR_t =(1+\delta)^t \cdot \mainR^\ast$. Compute a sequence $(X_1, p_1), (X_2, p_2), \ldots, (X_{\ell}, p_{\ell})$ where each $p_i$ is a client and each $X_i$ is a subset of potential centers with the following properties: \begin{itemize}
        \item $\dist(p_i, X_i) > (1 + \delta)\mainR_t$ for every $i \in [\ell]$.
        \item $\dist(p_i, X_j) \leq \mainR_t$ whenever $i < j$.
        \item $|X_i| \leq \lambda \cdot k$ for every $i \in [\ell]$, where 
        $\lambda$ is defined by the equation
        \[ \lambda = \log\left[\left(2c(\lambda \cdot k+2) \cdot \left(\frac{3}{\varepsilon} + 2\right)^c\right)^{c+1}\right]. \]
        \item We could not extend this sequence with a pair $(X_{\ell + 1}, p_{\ell + 1})$ such that the first two bullets hold and $|X_{\ell + 1}| \leq k$.
    \end{itemize}
    Finally, set $P_t = \{p_1, \ldots, p_{\ell}\}$ and set
    $S = Z \cup \bigcup_{t \geq 0} P_t$.
    We claim that $S$ satisfies the desired properties. 

    \begin{claim}
    For every set $X$ of at most $k$ potential centers, 
    \[ \max_{\mathrm{client}\ p} \dist(p, X) \leq (1+\varepsilon) \max_{p \in S} \dist(p, X).\]
    \end{claim}

    \begin{subproof}
    Let $d = \max_{p \in S} \dist(p, X)$.
    Suppose first that $d > 2\tilde{\mainR} / \varepsilon$.
    Pick a client $p$. 
    By maximality of $Z$, there exists $p' \in Z$ within distance $2\tilde{\mainR}$ from
    $p$. Hence, $\dist(p, X) \leq d + 2\tilde{\mainR} \leq (1+\varepsilon)d$, as desired.

    Assume then $d \leq 2\tilde{\mainR} / \varepsilon$.
    By contradiction, assume there exists a client $p$ with $\dist(p, X) > (1+\varepsilon)d$.
    
    Let $t \geq 0$ be minimum such that $\mainR_t \geq d$ (such a $t$ exists since the last $r_t$ we consider is at least $2\tilde{\mainR} / \varepsilon \geq d$). 
    We claim that there exists a client $q$ with $\dist(q, X) > (1+\delta)\mainR_t$.
    If $t = 0$, then $\mainR_t = \tilde{\mainR}/[2(1 + \varepsilon)]$
    and the value of the optimum solution is at least $\tilde{\mainR}/2$, hence
    there exists a client $q$ such that $\dist(q, X) \geq (1 + \varepsilon)\mainR_t > (1 + \delta)\mainR_t$.
    If $t > 0$, by minimality of $t$, $d > \mainR_{t-1} = \mainR_t/(1+\delta)$.
    Hence, $\dist(p, X) > (1+\delta)\mainR_t$, so we can take $q=p$. 

    Consider the sequence we computed for $\mainR_t$, denote it by $(X_1, p_1), \ldots, (X_{\ell}, p_{\ell})$. By construction, for every $i \in [\ell]$, $p_i \in P_t \subseteq S$ so $\dist(p_i, X) \leq d \leq \mainR_t$. Furthermore, $\dist(q, X) > (1 + \delta)\mainR_t$ and $|X| \leq k$ so we could have extended the sequence with $(X, q)$, which is a contradiction. This finishes the proof of the claim.
    \end{subproof}

    \begin{claim} \label{cl:length-k-ladder}
        For every $t$, the sequence computed for $\mainR_t$ has length at most $\left(2c(\lambda \cdot k+2) \cdot (\frac{3}{\varepsilon} + 2)^c\right)^{c+1}$.
    \end{claim}

    \begin{subproof}
        This proof is a simple adaptation of the proof of \cref{lem:no-long-scatter}.
        Let $\tinyR = \varepsilon \cdot \mainR_t / 3, \mainR = 3\mainR_t$ and 
        note that $\frac{\mainR}{\tinyR} = \frac{9}{\varepsilon}$. 
        Set $m = (\lambda \cdot k+2) \cdot (\frac{3}{\varepsilon} + 2)^c$ so that $\left(2c(\lambda \cdot k+2) \cdot (\frac{3}{\varepsilon} + 2)^c\right)^{c+1} = (2cm)^{c+1}$
    
        By contradiction, suppose the sequence $(X_1, p_1), \ldots, (X_{\ell}, p_{\ell})$ computed for $\mainR_t$ has length $\ell \geq (2cm)^{c+1}$.

        By \cref{thm:wcol}, there exists a partition $\mathcal{P}$ of $V(G)$ such that every $Y \in \mathcal{P}$ has strong diameter 
        at most $\tinyR$, and a total order $\preceq$ on $\partition$ such that $\wcol_{\mainR}(G, \mathcal{P}, \preceq) \leq c$. 
        For every $i \in [\ell]$, let $Y_i \in \mathcal{P}$ be such that $p_i \in Y_i$, and let $A = \{Y_1, \ldots, Y_{\ell}\}$. Note that since $\dist(p_i, p_j) > \varepsilon \cdot \mainR_t$ whenever $i \neq j$ then $Y_i \neq Y_j$ whenever $i \neq j$ so $|A| \geq (2mc)^{c+1}$.
        By \cref{thm:wcol2flat}, there exist $S \subseteq \mathcal{P}$ and $B \subseteq A \setminus S$ such that $|S| \leq c, |B| = m$ and every two distinct $Y_i, Y_j \in B$ are at distance at least $\mainR$ in $G -\bigcup S$.

        We define for every $Y \in B$ its \emph{distance profile} to $S$, as follows.
        For $S_i \in S$, set $\delta(Y, S_i) = \lfloor \frac{\dist_G(Y, S_i)}{\tinyR} \rfloor$ if $\dist_G(Y, S_i) \leq \mainR_t$,
        and $\delta(Y, S_i) = \infty$ otherwise.
        Writing $S = \{S_1, \ldots, S_l\}$, the distance profile of $Y$ to $S$ is $\mathrm{prof}_{S}(Y) = (\delta(Y, S_1), \ldots, \delta(Y, S_l))$.
        Observe that there are at most $(\frac{3}{\varepsilon} + 2)^c$ possible distance profiles to $S$.
        By the pigeonhole principle, there exist $j_1 < j_2 < \ldots < j_{\lambda \cdot k+2}$ such that $Y_{j_1}, Y_{j_2}, \ldots, Y_{j_{\lambda \cdot k+2}} \in B$ and $\mathrm{prof}_{S}(Y_{j_1}) = \mathrm{prof}_{S}(Y_{j_2}) = \ldots = \mathrm{prof}_{S}(Y_{j_{\lambda \cdot k+2}})$.

        For $i \in [\lambda \cdot k+1]$, let $Q_{j_i}$ be a shortest path in $G$ from $X_{j_{\lambda \cdot k+2}}$ to $p_{j_i}$.
        Since there are $\lambda \cdot k+1$ such paths, and since $|X_{j_{\lambda \cdot k+2}}| \leq \lambda \cdot k$, two of them have the same endpoint in $X_{j_{\lambda \cdot k+2}}$, call it $x$.
        Both of them have length at most $\mainR_t$ by definition of the sequence so their concatenation forms a path of length at most $2\mainR_t < \mainR$, between two distinct sets in $B$. Therefore, one of them has to intersect $S$, say the path $Q_{j_i}$ to $p_{j_i}$. Let $S_{i'} \in S$ be such that $S_{i'}$ intersects $Q_{j_i}$, say at $s \in S_{i'}$.
        Then $\dist_G(Y_{j_{i}}, S_{i'}) \leq \mainR_t$, and $\mathrm{prof}_S(Y_{j_i}) = \mathrm{prof}_S(Y_{j_{\lambda \cdot k + 2}})$ implies $\dist_G(Y_{j_{\lambda \cdot k + 2}}, S_{i'}) \leq \dist_G(Y_{j_i}, S_{i'}) + \tinyR$.
        Using that $Y_{j_{\lambda \cdot k + 2}}$ and $S_{i'}$ have diameter at most $\tinyR$, we therefore obtain 
        \begin{align*}    
        \dist_G(p_{{j_{\lambda \cdot k + 2}}}, X_{j_{\lambda \cdot k + 2}}) &\leq \tinyR + \dist_G(Y_{j_{\lambda \cdot k + 2}}, S_{i'}) + \tinyR + \dist_G(s, x) \\&\leq \dist_G(Y_{j_i}, S_{i'}) + \dist_G(s, x) + 3\tinyR \\&\leq \dist_G(p_{j_i}, s) + \dist_G(s, x) + \varepsilon \mainR_t \\&= \dist_G(p_{j_i}, x) + \varepsilon \mainR_t \leq (1+\varepsilon)\mainR_t.
        \end{align*}
        This is a contradiction.
    \end{subproof}

    \begin{claim} \label{cl:number-k-ladders}
         \[ |S| \leq k + \frac{9}{\varepsilon^3} \cdot \left(2c\left(\lambda \cdot k+2\right) \cdot \left(\frac{3}{\varepsilon} + 2\right)^c\right)^{c+1}. \]
    \end{claim}

    \begin{subproof}
        We saw that $|Z| \leq k$, so by \cref{cl:length-k-ladder}, it suffices to prove that we compute at most $\frac{9}{\varepsilon^3}$ sequences, i.e. that $(1 + \delta)^\frac{9}{\varepsilon^3}  \cdot \mainR^\ast > 2(1 + \delta)\tilde{\mainR}/\varepsilon$.
        \begin{align*}
            (1 + \delta)^M  \cdot \mainR^\ast > 2(1 + \delta)\tilde{\mainR}/\varepsilon &\iff (1 + \delta)^{M-2} \cdot \tilde{\mainR} > 4(1 + \delta) \tilde{\mainR} / \varepsilon \\
                &\iff (1 + \delta)^{M - 3} > 4/\varepsilon \\
                &\iff (M - 3) \log(1 + \delta) > 2 + \log(1 / \varepsilon) \\
                &\iff (M - 3) \cdot \frac{1}{2} \log(1 + \varepsilon) > 2 + \log(1 / \varepsilon) \\
                &\iff M > 3 + \frac{2}{\log(1 + \varepsilon)}(2 + \log(1 / \varepsilon))
        \end{align*}
        However, $\log(1/\varepsilon) \leq 1 / \varepsilon$ and $\log(1 + \varepsilon) \geq \varepsilon^2$ so $3 + \frac{2}{\log(1 + \varepsilon)}(2 + \log(1 / \varepsilon)) \leq 3 + \frac{2}{\varepsilon^2}(2 + \frac{1}{\varepsilon})$. Finally, using that $\varepsilon < 1$, we get $3 + \frac{2}{\log(1 + \varepsilon)}(2 + \log(1 / \varepsilon)) < \frac{9}{\varepsilon^3}$ so we indeed compute at most $\frac{9}{\varepsilon^3}$ sequences, which concludes the proof.
    \end{subproof}

    To obtain the promised bound for $|S|$, note that $\lambda \leq \Gamma$.

    Finally, we verify that the sequences $(X_i,p_i)$ can be computed in polynomial time.
    To this end, we use a greedy algorithm for set cover.

    \begin{claim} \label{cl:extend-sequence}
        Given a set $P = \{p_1, \ldots, p_{\ell}\}$ of clients and a radius $\mainR' > 0$, 
        in polynomial time,
        we can either compute a pair $(X, p)$ where $X$ is a set of at most $k \cdot \log(\ell)$
        potential centers 
        and $p$ is a client such that $\dist(p_j, X) \leq \mainR'$ for every $j \in [\ell]$ and $\dist(p, X) > (1 + \delta)\mainR'$, or correctly conclude that there is no such pair $(X, p)$ with $|X| \leq k$.
    \end{claim}

    \begin{subproof}
        For every potential center $v$, let $E_v = \{q \in P ~|~ \dist(q, v) \leq \mainR'\}$.
        For every client $p \notin P$, we consider the set system $\mathcal{S}_p = (P, \{E_v ~|~ \dist(v, p) > (1 + \delta)\mainR'\})$.
        We greedily build a set cover $H$ of $\mathcal{S}_p$ by repeatedly adding to $H$ the set from $\mathcal{S}_p$ which contains the maximal number of elements of $P$ which are not yet covered by $H$, until all elements are covered. If $\mathcal{S}_p$ has a set cover of size $h$, the set $H$ returned by the algorithm will have size at most $h \cdot \log(\ell)$ since the number of elements of $P$ not yet covered drops by a factor at least $(1 - 1/h)$ every time we add a set to $H$.

        If for some client $p \notin P$ we find a set cover $H = \{E_{x}, x \in X\}$ of $\mathcal{S}_p$ of size at most $k \cdot \log(\ell)$, we return the pair $(X, p)$. Otherwise, we return that there is no such pair $(X, p)$ with $|X| \leq k$.

        Indeed, if $H = \{E_{x}, x \in X\}$ is a set cover of $\mathcal{S}_p$ then every $x \in X$ satisfies $\dist(x, p) > (1 + \delta) \cdot \mainR'$, so $\dist(p, X) > (1 + \delta) \cdot \mainR'$, and for every $p_j \in P$, there exists $E_x \in H$ such that $p_j \in E_x$, so $\dist(p_j, X) \leq \dist(p_j, x) \leq \mainR'$.
        On the other hand, if there exists such a pair $(X, p)$ with $|X| \leq k$, then $\{E_x, x \in X\}$ forms a set cover of $\mathcal{S}_p$ of size at most $k$ so the greedy algorithm would return a set cover of size at most $k \cdot \log(\ell)$. Thus, if for every $p \in V(G)$ the set cover we find is larger than $k \cdot \log(\ell)$, we correctly report that there is no such pair $(X, p)$ with $|X| \leq k$.
    \end{subproof}

    It suffices to show that given a partial sequence $(X_1, p_1), \ldots, (X_{\ell}, p_{\ell})$ for some radius $\mainR_t$, in polynomial time, we can either extend it with a pair $(X_{\ell+1}, p_{\ell + 1})$ or correctly conclude that there is no such pair with $|X_{\ell + 1}| \leq k$.
    To do so, we simply apply \cref{cl:extend-sequence} to the set $P = \{p_1, \ldots, p_{\ell}\}$ and $\mainR' = \mainR_t$. 
    \cref{cl:length-k-ladder} implies $\ell \leq \left(2c(\lambda \cdot k+2) \cdot (\frac{3}{\varepsilon} + 2)^c\right)^{c+1}$, so $\log(\ell) \leq \log \left[\left(2c(\lambda \cdot k+2) \cdot (\frac{3}{\varepsilon} + 2)^c\right)^{c+1}\right] = \lambda$.
    
    If we get a pair $(X, p)$, we can simply set $(X_{\ell + 1}, p_{\ell + 1}) = (X, p)$, and otherwise we correctly conclude that there is no such pair with $|X| \leq k$.

    This concludes the proof of \cref{thm:coreset-full}.
\end{proof}

\section{Improved bounds for bounded treewidth graphs}\label{sec:wcol-tw}
In this section, we prove \cref{thm:wcol-tw},
following the ideas of~\cite{treewidth,DBLP:journals/corr/abs-2407-12230}.

\thmwcoltw*

\begin{proof}
    Let $(T, \beta)$ be a tree decomposition of $G$ where each bag has size at most $k$.
    Root $T$ in an arbitrary node.

    We define a graph $H$ as follows: $V(H) = \{(v, t) \in V(G) \times V(T)~|~v \in \beta(t)\}$ and $(u, t)$ and $(v, t')$ are adjacent in $H$ if and only if either $t = t'$, or $u=v$ and $tt' \in E(T)$. Every edge $(u, t)(v, t') \in E(H)$ has weight $\dist_G(u, v)$.
    Note that $\dist_H((u,t),(v,t')) = \dist_G(u,v)$.
    We will sometimes refer to vertices in $\beta(t) \times \{t\}$ as vertices in $\beta(t)$.
    
    For a set $Y \subseteq V(H)$, we define $\pi_G(Y) = \{u \in V(G)~|~\exists_{t \in V(T)} (u, t) \in Y\}$.
    
    For every $t \in V(T)$, let $\mathrm{dom}(t)$ be the set of all vertices $(v, t')$ of $H$ where $t'$ is a descendant of $t$ in $T$.
    In the process of the algorithm, 
    we define two families $\mathcal{F}, \mathcal{P}$ of subsets respectively of $V(H)$ and of $V(G)$ and for every $X \in \mathcal{F} \cup \mathcal{P}$ we define a \emph{color} $\mathrm{col}(X) \in [k]$ and a \emph{topmost bag} $\mathrm{top}(X) \in V(T)$.
    We also define colors of vertices of $H$; initially all vertices are uncolored.
    During the course of the algorithm, the set of colored vertices is always equal to $\bigcup \mathcal{F}$.
    
    At every step of the algorithm, for every $t \in V(T)$, by $\mathrm{wdom}(t)$
    we denote the set of all vertices $(v,t') \in \mathrm{dom}(t)$ that do not belong
    to a set $X \in \mathcal{F}$ with $\mathrm{top}(X)$ being an ancestor of $t$. 
    Note that, in particular, $\mathrm{wdom}(t)$ contains all uncolored
    vertices of $\mathrm{dom}(t)$.
    The algorithm is presented as \cref{alg:cover-tw}.
    (This is essentially the same clustering algorithm as in~\cite{DBLP:journals/corr/abs-2407-12230}.)
    
    \begin{algorithm}
    \caption{Algorithm for \cref{thm:wcol-tw}}
    \label{alg:cover-tw}
    \begin{algorithmic}[1]
    \Require An edge-weighted graph $G$, its rooted tree decomposition $(T, \beta)$, a real $\delta > 0$
    
    \State $\mathcal{F} \gets \emptyset$, $\mathcal{P} \gets \emptyset$
    
    \For {$i = 1$ to $k$}
        \While{$\exists t \in V(T)$ s.t. no vertex in $\beta(t)$ has color $i$ and some vertex in $\beta(t)$ is uncolored}
            \State $t \gets$ closest to the root of $T$ such node \label{line:choice-t}
            \State \parbox[t]{\linewidth}{$U \gets \mathrm{wdom}(t)$}
            \State $W \gets$ $U \cap \beta(t)$
            \State $X \gets \mathrm{Ball}_{H[U]}(W, \delta)$
            \State $\mathrm{col}(X) \gets i, \mathrm{top}(X) \gets t$
            \State $\mathcal{F} \gets \mathcal{F} \cup \{X\}$
            \State Color every uncolored vertex in $X$ with color $i$
            \For{$x \in W$}
                \State $Y_x \gets \pi_G(\mathrm{Ball}_{H[U]}(x, \delta))$
                \State $\mathrm{col}(Y_x) \gets i, \mathrm{top}(Y_x) \gets t$
                \State $\mathcal{P} \gets \mathcal{P} \cup \{Y_x\}$
            \EndFor
        \EndWhile
    \EndFor
    
    \State Return $\mathcal{F}, \mathcal{P}$
    
    \end{algorithmic}
    \end{algorithm}

    Whenever we create a set $X$ to be added to $\mathcal{F}$, $X$ consists of uncolored vertices and vertices of sets $X' \in \mathcal{F}$ that were created earlier
    but such that $\mathrm{top}(X')$ is a proper descendant of $t$.
    We also have the following observation.
    \begin{claim}\label{cl:intersections}
        If $X,X' \in \mathcal{F}$ are distinct with $X \cap X' \neq \emptyset$
        and $\mathrm{col}(X) \geq \mathrm{col}(X')$, then 
        \begin{enumerate}
            \item $\mathrm{col}(X') < \mathrm{col}(X)$;
            \item $\mathrm{top}(X)$ is a proper ancestor of $\mathrm{top}(X')$;
            \item $X \cap X' \cap \beta(\mathrm{top}(X')) \neq \emptyset$.
        \end{enumerate}
        In particular, for every $X \in \mathcal{F}$ and every $\mathrm{col}(X) < i \leq k$, 
        there are at most $|\beta(\mathrm{top}(X)) \cap X| \leq k$ elements
        $X' \in \mathcal{F}$ of color $i$ that intersect $X$. Hence, for every $X \in \mathcal{F}$, there are at most
        at most $(k-\mathrm{col}(X))k \leq k^2$ elements $X' \in \mathcal{F}$
        with $\mathrm{col}(X') > \mathrm{col}(X)$ that intersect $X$.
    \end{claim}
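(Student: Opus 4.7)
The plan is to establish points 1, 2, and 3 of the claim in order, each leveraging the preceding ones, and then deduce the "in particular" bound as an immediate corollary. Throughout I rely on three facts: (i) every $X \in \mathcal{F}$ satisfies $X \subseteq \mathrm{wdom}(\mathrm{top}(X))$ at the moment of its creation; (ii) the algorithm always selects the candidate bag closest to the root of $T$; and (iii) the set $\beta(t') \times \{t'\}$ separates $V(H) \setminus \dom(t')$ from the strict interior of $\dom(t')$ in $H$, because any $H$-path induces a walk in $T$ on consecutive bags that must cross $t'$ to move between those two sides.

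For point 1, assume for contradiction $\mathrm{col}(X) = \mathrm{col}(X') = i$ and $X \cap X' \neq \emptyset$, with $X \neq X'$ created at distinct tops $t \neq t'$ in phase $i$. A case analysis on the tree-relationship of $t$ and $t'$ dispatches every possibility: if one top is a proper ancestor of the other, then either the ancestor-top set's ball already colored the descendant's bag (disqualifying it as a phase-$i$ candidate) or the descendant's $\mathrm{wdom}$ excludes the ancestor's set; and if $t, t'$ are incomparable, their $\dom$'s are disjoint. In every case $X \cap X' = \emptyset$, a contradiction. For point 2, point 1 upgrades the hypothesis to $\mathrm{col}(X) > \mathrm{col}(X')$, so $X'$ already exists when $X$ is created. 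A parallel case analysis on $(\mathrm{top}(X), \mathrm{top}(X'))$ shows that the only surviving configuration is $\mathrm{top}(X)$ being a proper ancestor of $\mathrm{top}(X')$: equality or reverse ancestry would exclude $X'$'s vertices from $\mathrm{wdom}(\mathrm{top}(X))$, and incomparability would force $\dom(\mathrm{top}(X)) \cap \dom(\mathrm{top}(X')) = \emptyset$.

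The technical heart is point 3. Write $t = \mathrm{top}(X)$, $t' = \mathrm{top}(X')$ and pick $v \in X \cap X'$; its bag is a descendant of $t'$ since $X' \subseteq \dom(t')$. A shortest path $P$ in $H[U]$ from $W = U \cap \beta(t)$ to $v$ must, by the separator property, cross $\beta(t') \cap U$ at some vertex $v^{\ast}$, and then $v^{\ast} \in X \cap \beta(t')$. The task reduces to showing $v^{\ast} \in W' := U' \cap \beta(t')$, where $U'$ is the value of $\mathrm{wdom}(t')$ at phase $\mathrm{col}(X')$. Since $v^{\ast} \in \mathrm{wdom}(t)$ at phase $\mathrm{col}(X)$, no set containing $v^{\ast}$ can have top ancestral to $t$; the only obstruction to $v^{\ast} \in W'$ is the possible existence of an intermediate set $X^{\ast\ast} \ni v^{\ast}$ whose top lies strictly between $t$ and $t'$. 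I plan to dispose of this by induction on $\mathrm{col}(X) - \mathrm{col}(X')$: in the obstructed case the pair $(X, X^{\ast\ast})$ satisfies the claim's hypotheses at a strictly smaller color gap, so inductively it provides a boundary witness in $X \cap X^{\ast\ast} \cap \beta(\mathrm{top}(X^{\ast\ast}))$, which lets me reroute the shortest path and push the crossing $v^{\ast}$ deeper along the $t$-to-$t'$ chain in $T$. After finitely many shifts no absorbing intermediate set remains, yielding a valid $v^{\ast} \in W'$. Making this descent precise --- ensuring each shift strictly reduces a well-founded measure on the candidate intermediate sets --- is the step I expect to be the main obstacle.

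Once points 1--3 are in hand, the "in particular" bound is immediate. Fix $X$ and a color $i > \mathrm{col}(X)$, and consider any color-$i$ set $X' \in \mathcal{F}$ with $X \cap X' \neq \emptyset$. Applying point 3 with the roles of $X$ and $X'$ swapped (valid since $\mathrm{col}(X') > \mathrm{col}(X)$) yields a vertex of $X \cap X' \cap \beta(\mathrm{top}(X))$. By point 1 distinct color-$i$ sets are disjoint, so these witnesses are pairwise disjoint subsets of $X \cap \beta(\mathrm{top}(X))$, a set of size at most $|\beta(\mathrm{top}(X))| \leq k$. Summing over $i \in \{\mathrm{col}(X)+1, \ldots, k\}$ gives the claimed bound of at most $(k - \mathrm{col}(X)) \cdot k \leq k^2$ intersecting sets in total.
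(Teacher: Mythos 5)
Your treatment of points 1 and 2 and the deduction of the ``in particular'' bound are sound in spirit and close to what the paper does (the paper's versions are a bit crisper --- for point 1 it uses a WLOG on creation order and the observation that the algorithm picks candidate nodes in non-decreasing depth, and for point 2 simply that $X'$ was created earlier so all its vertices were already colored and hence any such vertex in $X$ must come from a set whose top is a proper descendant of $\mathrm{top}(X)$ --- but your case analysis reaches the same conclusions). The real issue is point 3, which you yourself flag as the main obstacle, and the gap there is genuine.

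The induction you propose does not get off the ground. You want to apply the claim inductively to $(X, X^{\ast\ast})$ where $X^{\ast\ast}$ is the obstruction set containing the crossing vertex $v^{\ast}$, arguing that this has a strictly smaller color gap. But $X^{\ast\ast}$ must already be in $\mathcal{F}$ when $X'$ is created (that is exactly what it means for $X^{\ast\ast}$ to exclude $v^{\ast}$ from $\mathrm{wdom}(t')$ at that moment), and since the algorithm processes colors $1, 2, \ldots, k$ in order, creation order is color-monotone, so $\mathrm{col}(X^{\ast\ast}) \leq \mathrm{col}(X')$. Consequently $\mathrm{col}(X) - \mathrm{col}(X^{\ast\ast}) \geq \mathrm{col}(X) - \mathrm{col}(X')$, i.e.\ the color gap is at least as large, not strictly smaller, so the induction is not well-founded. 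Moreover, even if it were, it is unclear how a witness in $X \cap X^{\ast\ast} \cap \beta(\mathrm{top}(X^{\ast\ast}))$ lets you ``push the crossing deeper'': $\mathrm{top}(X^{\ast\ast})$ is a proper ancestor of $t'$, so you would be moving toward the root rather than toward $t'$.

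The paper avoids the recursion altogether by choosing an extremal set up front. Fix $x \in X \cap X'$ and a path $P$ of length at most $\delta$ from $W$ to $x$ inside $H[U]$. Pick $Z \in \mathcal{F}$ with $\mathrm{top}(Z)$ as close to the root as possible subject to: $\mathrm{top}(X)$ is an ancestor of $\mathrm{top}(Z)$, $\mathrm{top}(Z)$ is an ancestor of $\mathrm{top}(X')$, and $\mathrm{col}(Z) \leq \mathrm{col}(X')$; such a $Z$ exists because $X'$ itself is a candidate. Then $\beta(\mathrm{top}(Z))$ separates $w$ and $x$ in $H$; letting $z$ be the last vertex of $P$ in $\beta(\mathrm{top}(Z))$, the extremality of $Z$ (together with $V(P) \subseteq U$) forces $z \in Z$, and a second use of extremality shows that the subpath $P'$ of $P$ from $z$ to $x$ stays inside the set $U'$ used when $Z$ was created, so $x \in \mathrm{Ball}_{H[U']}(W', \delta) = Z$. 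Finally, if $Z \neq X'$ then $\mathrm{top}(Z)$ would be a proper ancestor of $\mathrm{top}(X')$ created no later than $X'$, so $x$ could not be in $\mathrm{wdom}(\mathrm{top}(X'))$ when $X'$ was created, contradicting $x \in X'$; hence $Z = X'$ and $z$ is the desired vertex of $X \cap X' \cap \beta(\mathrm{top}(X'))$. This extremal choice does the work that your induction was supposed to do, without needing a decreasing measure.
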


    \begin{subproof}
        For the first item, suppose by contradiction that $\mathrm{col}(X) = \mathrm{col}(X')$. 
        Up to renaming $X$ and $X'$, we can assume without loss of generality that $X'$ was created by the algorithm before $X$. 
        Since $\mathrm{col}(X) = \mathrm{col}(X')$, \cref{line:choice-t} implies that $\mathrm{top}(X')$ is closer to the root than $\mathrm{top}(X)$.
        Thus, when the algorithm creates $X$, all vertices in $X'$ are already colored, and $\mathrm{top}(X')$ is not a proper descendant of $\mathrm{top}(X)$, therefore no vertex in $X'$ can be in $X$. 
        This contradicts $X \cap X' \neq \emptyset$.
        
        For the second item, since $\mathrm{col}(X') < \mathrm{col}(X)$ then $X'$ was created before $X$.
        Thus, all vertices in $X'$ were already colored before $X$ was created, so $X \cap X' \neq \emptyset$ implies that $\mathrm{top}(X')$ is a proper descendant of $\mathrm{top}(X)$.
        
        For the third item, since $X \cap X' \neq \emptyset$ there exists $x \in X \cap X'$. 
        Consider the sets $U, W$ which the algorithm was considering when $X$ was created.
        By definition of $X$, there exists $w \in W$ such that $x \in \mathrm{Ball}_{H[U]}(w, \delta)$.
        Let $P$ be a path from $w$ to $x$ of length at most $\delta$ in $H[U]$.
        Let $t \in V(T)$ be the unique node of $T$ such that $x = (\cdot, t)$. 
        Since $x \in X'$ then $t$ is a descendant of $\mathrm{top}(X')$.
                
        Consider a set $Z \in \mathcal{F}$ being as close to the root as possible such that $\mathrm{top}(X)$ is an ancestor of $\mathrm{top}(Z)$, which is itself an ancestor of $\mathrm{top}(X')$, and $\mathrm{col}(Z) \leq \mathrm{col}(X')$ (such a set $Z$ exists since $X'$ is a candidate).
        Therefore, $\mathrm{top}(Z)$ separates $\mathrm{top}(X)$ and $\mathrm{top}(X')$ in $T$ so $\beta(\mathrm{top}(Z))$ separates $w$ and $x$ in $H$, and thus $V(P) \cap \beta(\mathrm{top}(Z)) \neq \emptyset$.
        Let $z$ be the last vertex in $\beta(\mathrm{top}(Z))$ visited by $P$.
        If $z \notin Z$ then $z$ was already colored when $Z$ was created, so there exists a set $Y \in \mathcal{F}$ such that $\mathrm{col}(Y) \leq \mathrm{col}(Z)$ and $z \in Y$.
        Since $z \in Y$ and $z \in \beta(\mathrm{top}(Z))$ then $\mathrm{top}(Y)$ is a proper ancestor of $\mathrm{top}(Z)$.
        By maximality of $\mathrm{top}(Z)$, $\mathrm{top}(Y)$ cannot be a descendant of $\mathrm{top}(X)$, so since $\mathrm{top}(Y)$ is an ancestor of $\mathrm{top}(Z)$ then $\mathrm{top}(Y)$ is a proper ancestor of $\mathrm{top}(X)$.
        Therefore, $z \in Y$ would imply that $z \notin U$, contradicting $V(P) \subseteq U$.
        Thus, $z \in Z$.
        
        Let $P'$ be the restriction of $P$ from $z$ to $x$.
        Since $z$ is the last vertex in $\beta(\mathrm{top}(Z))$ visited by $P$ and since $\mathrm{top}(Z)$ is a descendant of $\mathrm{top}(X)$ and an ancestor of $t$ then $V(P') \subseteq \mathrm{dom}(\mathrm{top}(Z))$.
        Let $U', W'$ be the sets which the algorithm was considering when $Z$ was created.
        Since $z \in Z \cap \beta(\mathrm{top}(Z))$ then $z \in W'$.
        By contradiction, suppose $V(P') \not \subseteq U'$.
        Then, there exists $y \in V(P') \setminus U'$.
        Since $y \in V(P')$ then $y \in \mathrm{dom}(\mathrm{top}(Z))$ so $y \notin U'$ implies that there exists a set $Y \in \mathcal{F}$ which was created before $Z$ such that $y \in Y$ and $\mathrm{top}(Y)$ is a proper ancestor of $\mathrm{top}(Z)$. 
        The maximality of $\mathrm{top}(Z)$ then implies that $\mathrm{top}(Y)$ is a proper ancestor of $\mathrm{top}(X)$, which contradicts that $y \in U$.
        Therefore, $V(P') \subseteq U'$ and thus $V(P') \subseteq \mathrm{Ball}_{H[U']}(W', \delta) = Z$, and in particular $x \in Z$.
        
        If $Z \neq X'$ then $\mathrm{top}(Z)$ is a proper ancestor of $\mathrm{top}(X')$, and $\mathrm{col}(Z) \leq \mathrm{col}(X')$ implies that $Z$ was already created when $X'$ was created, so $x$ was not in $\mathrm{wdom}(\mathrm{top}(X'))$ when $X'$ was created, which implies that $x \notin X'$, a contradiction.
        Therefore, $Z = X'$ and thus $z \in X \cap X' \cap \beta(\mathrm{top}(X'))$.

        The last statement follows from the first item, which implies that any two distinct sets in $\mathcal{F}$ of the same color are disjoint, and from the third item.
    \end{subproof}
    
    We claim that at the end of the algorithm $V(H) = \bigcup \mathcal{F}$, that is,
    all vertices are colored. 
    By contradiction, suppose some $(v, t) \in V(H)$ does not get colored.
    For every $i \in [k]$, when we are in the $i$-th iteration of the For loop, there is always an uncolored vertex in $\beta(t)$ (namely $(v, t)$), so since we exit the While loop there must be some vertex $(v_i, t)$ of color $i$. Every vertex receives at most one color so $(v_1, t), \ldots, (v_k, t), (v, t)$ are pairwise distinct and belong to $\beta(t)$, contradicting that $|\beta(t)| \leq k$. Thus, $V(H) = \bigcup\mathcal{F}$.

    Every set $X \in \mathcal{F}$ is defined as $\mathrm{Ball}_{H[U]}(U \cap \beta(\topm(X)), \delta)$ for some set $U$. Thus, every connected component of $H[X]$ contains a vertex of $\beta(\topm(X))$ and for every $x \in X$, $\dist_{H[X]}(x, \beta(\topm(X))) \leq \delta$. Furthermore, $X \subseteq \dom(\topm(X))$.

    We say that $X \in \mathcal{F}$ \emph{generates} $Y \in \mathcal{P}$ if $Y$ was added to $\mathcal{P}$ in the iteration of the While loop where $X$ was added to $\mathcal{F}$. Note that if $X$ generates $Y$ then $\mathrm{top}(X) = \mathrm{top}(Y)$, and $Y \subseteq \pi_G(X)$.
    We observe that for every $X \in \mathcal{F}$ and $(v,t) \in X$, 
    there exists $Y \in \mathcal{P}$ generated by $X$ with $v \in Y$. Indeed,
    consider the point during \cref{alg:cover-tw} when $X$ was defined and added to $\mathcal{F}$. 
    Let $U, W$ be the sets stored by the algorithm at that point. 
    Then, $X = \mathrm{Ball}_{H[U]}(W, \delta)$ so there exists $x \in W$ such that $(v, t) \in \mathrm{Ball}_{H[U]}(x, \delta)$ so $v \in \pi_G(\mathrm{Ball}_{H[U]}(x, \delta)) = Y_x \in \mathcal{P}$. 
    In particular, we have $V(G) = \bigcup_{Y \in \mathcal{P}} Y$.
    
    We say that a node $s \in V(T)$ is a \emph{topmost node of color $i$} if there exists $X \in \mathcal{F}$ such that $\mathrm{top}(X) = s$ and $\mathrm{col}(X) = i$. If there exists such an $X$, it is unique (as in a single iteration of the While loop, the algorithm puts into $X$ all uncolored vertices of $\beta(s)$) and we denote it by $X(s)$.

    \begin{claim} \label{cl:same-color-far-away}
        Let $X_1, X_2 \in \mathcal{F}$ be such that $\mathrm{col}(X_1) = i \geq \mathrm{col}(X_2)$.
        Let $t_1 = \mathrm{top}(X_1)$ and $t_2 = \mathrm{top}(X_2)$ and suppose $t_1$ is a proper ancestor of $t_2$.
        Let $P$ be a path in $H[\mathrm{dom}(t_1)]$ from $X_1 \cap \beta(t_1)$ to $X_2 \cap \beta(t_2)$ that does not visit any $X \in \mathcal{F}$ such that $\mathrm{top}(X)$ is a proper ancestor of $t_1$ and $\mathrm{col}(X) < i$.
        Then, either $P$ has length more than $\delta$ or $X_1 \cap X_2 \neq \emptyset$.
    \end{claim}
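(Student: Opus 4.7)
My plan is to assume $|P| \le \delta$ and derive $X_1 \cap X_2 \ne \emptyset$. Let $U_1 = \mathrm{wdom}(t_1)$ and $W_1 = U_1 \cap \beta(t_1)$ be the sets the algorithm computes at the moment it creates $X_1$, so that $X_1 = \mathrm{Ball}_{H[U_1]}(W_1, \delta)$. The first vertex of $P$ lies in $X_1 \cap \beta(t_1) \subseteq W_1$, so if one can show $V(P) \subseteq U_1$, then walking along $P$ from its start certifies that every vertex of $P$ is at $H[U_1]$-distance at most $\delta$ from $W_1$, hence belongs to $X_1$. In particular the endpoint of $P$, which lies in $X_2 \cap \beta(t_2)$, will also lie in $X_1$, giving $X_1 \cap X_2 \ne \emptyset$.

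The core step is thus to prove $V(P) \subseteq U_1$, which I plan to do by contradiction. Let $v$ be the first vertex of $P$ outside $U_1$; then $v$ belongs to some $X'' \in \mathcal{F}$ with $\mathrm{top}(X'')$ a (weak) ancestor of $t_1$ and $X''$ created strictly before $X_1$, so $\mathrm{col}(X'') \le i$. I would then split into three subcases. The simplest is when $\mathrm{top}(X'')$ is a proper ancestor of $t_1$ and $\mathrm{col}(X'') < i$: then $v \in V(P) \cap X''$ directly contradicts the hypothesis on $P$. If $\mathrm{top}(X'') = t_1$, then $\mathrm{col}(X'') = i$ would force $X'' = X_1$, which is not yet created, so $\mathrm{col}(X'') < i$; but at the earlier step where $X''$ was made, the set $W''$ contained all then-uncolored vertices of $\beta(t_1)$ (since any previously colored vertex of $\beta(t_1)$ already lay in some set of $\mathcal{F}$ with top a weak ancestor of $t_1$), and all of them then received color $\mathrm{col}(X'') < i$. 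Hence every vertex of $\beta(t_1)$ is colored when $X_1$ is being created, contradicting the reason $t_1$ was chosen, namely that some vertex of $\beta(t_1)$ is still uncolored.

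The main obstacle is the remaining subcase, $\mathrm{col}(X'') = i$ with $\mathrm{top}(X'')$ a proper ancestor of $t_1$. I plan to handle it by a separation argument in $H$. A short check of the edge set of $H$ shows that $\beta(t_1)$ separates $\mathrm{dom}(t_1) \setminus \beta(t_1)$ from $V(H) \setminus \mathrm{dom}(t_1)$: if $(u,t)$ with $t$ a strict descendant of $t_1$ is adjacent in $H$ to $(u',t')$, then either $t=t'$ or $t'$ is adjacent to $t$ in $T$, and in both cases $t'$ is a descendant of $t_1$. Now, since $v \in X'' = \mathrm{Ball}_{H[U'']}(W'', \delta)$ with $U'' = \mathrm{wdom}(\mathrm{top}(X''))$ and $W'' \subseteq \beta(\mathrm{top}(X''))$, one can pick a path $Q$ in $H[U'']$ of length at most $\delta$ from some vertex of $W''$ to $v$, and observe that every vertex on $Q$ is at $H[U'']$-distance at most $\delta$ from $W''$, hence lies in $X''$. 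But the start of $Q$ lies in $\beta(\mathrm{top}(X''))$, which sits outside $\mathrm{dom}(t_1)$ since $\mathrm{top}(X'')$ is a strict ancestor of $t_1$, while $v$ lies in $\mathrm{dom}(t_1)$, so $Q$ must pass through some vertex $w \in \beta(t_1)$. Then $w \in X'' \cap \beta(t_1)$, so $w$ has color $i$ at the moment $X_1$ is being created, again contradicting the eligibility of $t_1$. This completes the case analysis, establishes $V(P) \subseteq U_1$, and thus the claim.
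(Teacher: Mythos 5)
Your overall strategy is the same as the paper's: assume $|P|\le\delta$, show that every vertex of $P$ lies in the set $U_1=\mathrm{wdom}(t_1)$ that the algorithm uses when it creates $X_1$, and then conclude $V(P)\subseteq\mathrm{Ball}_{H[U_1]}(W_1,\delta)=X_1$, so the endpoint of $P$ in $X_2$ witnesses $X_1\cap X_2\neq\emptyset$. The paper states the key step $V(P)\subseteq U_1$ tersely (``from the restriction on the vertices of $P$''); you spell out a three-way case analysis, which is a useful thing to do, and your first two subcases are fine (subcase A is indeed vacuous because a set with $\mathrm{top}=t_1$ would fully color $\beta(t_1)$ and preclude $t_1$ being chosen again).

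However, your handling of the third subcase ($\mathrm{col}(X'')=i$ with $\mathrm{top}(X'')$ a proper ancestor of $t_1$) has a gap. You find a vertex $w\in X''\cap\beta(t_1)$ lying on the ball-path $Q$ inside $X''$, and then conclude that ``$w$ has color $i$ at the moment $X_1$ is being created.'' This does not follow from $w\in X''$: the algorithm's instruction is to \emph{color every uncolored vertex in $X$ with color $i$}, so a vertex of $X''$ that was already colored when $X''$ was formed keeps its earlier (necessarily smaller) color. And $w$ must in fact already have been colored at that moment, otherwise $X''$ would assign it color $i$ and $t_1$ would then fail the while-loop condition for iteration $i$, contradicting the very existence of $X_1$. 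So the only case that can actually arise is the one your argument does not cover: $w$ was pre-colored by some $Z$ with $\mathrm{col}(Z)<i$ and $\mathrm{top}(Z)$ strictly between $\mathrm{top}(X'')$ and $t_1$. Since $w$ then has color strictly less than $i$, there is no contradiction with $t_1$ being chosen, and since $w$ lies on $Q$ rather than on $P$, you also cannot invoke the hypothesis that $P$ avoids sets of color $<i$ with top a proper ancestor of $t_1$. The subcase therefore remains open; some additional argument (about $Z$, or about the structure of $P$, e.g.\ the paper's preliminary reduction to a monotone path visiting each bag at most twice) is needed to close it.
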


    \begin{subproof}
        Without loss of generality, we can assume that $P$ is a shortest such path, and among them one with as few edges as possible.
        Let $Q = (t_1 = s_0, s_1, \ldots, s_k = t_2)$ be the path from $t_1$ to $t_2$ in $T$.
    
        First, since every edge of $H$ is a shortest path between its endpoints and since every $H[\beta(s)]$ is a clique, $P$ having as few edges as possible implies that $P$ visits at most 2 vertices in every $\beta(s)$.
        For the same reason, if $P$ visits two vertices in $\beta(s)$ then these two vertices are consecutive in $P$.
        This implies that if $P$ visits $\beta(s)$ then $s \in V(Q)$, and $P$ visits $\beta(s_0)$ then $\beta(s_1)$ and so on until $\beta(s_k)$.
    
        Consider the iteration of \cref{alg:cover-tw} where $X_1$ was defined and added to $\mathcal{F}$, and let $U, W$ be the sets the algorithm was considering at that moment. The node $t$ under consideration is $t_1$.
        By definition, $W = U \cap \beta(t_1)$ and $X_1 \subseteq U$.
        Observe that $W = \beta(t_1) \cap X_1$, i.e., $X_1$ gets all vertices
        of $\beta(t_1)$ that are uncolored prior to this iteration.
        
        In particular, from the restriction on the vertices of $P$ we infer that $P$ needs to visit a vertex of $W$ before it can visit any vertex outside $\mathrm{dom}(t_1)$.
        Hence, $V(P) \subseteq \mathrm{dom}(t_1)$, and, again from the restriction on the vertices of $P$, we have $V(P) \subseteq U$. 
        If $P$ has length at most $\delta$ then $V(P) \subseteq \mathrm{Ball}_{H[U]}(W, \delta) = X_1$, which would imply $X_1 \cap  X_2 \neq \emptyset$, as desired.
    \end{subproof}
    
    Fix an ordering $\trianglelefteq$ on $\mathcal{F}$ such that if $\mathrm{top}(X_1)$ is a proper ancestor of $\mathrm{top}(X_2)$ then $X_1 \triangleleft X_2$.
    Fix an ordering $\preceq$ on $\mathcal{P}$ such that if $X_1 \in \mathcal{F}$ generates $Y_1 \in \mathcal{P}$, $X_2 \in \mathcal{F}$ generates $Y_2 \in \mathcal{P}$ and $X_1 \triangleleft X_2$ then $Y_1 \prec Y_2$. This way, if $\mathrm{top}(Y_1)$ is a proper ancestor of $\mathrm{top}(Y_2)$ then $Y_1 \prec Y_2$. 
    For $Y \in \mathcal{P}$, let $Y_\ast := Y \setminus \bigcup_{Y' \prec Y} Y'$,
    let $\mathcal{P}_\ast = \{Y_\ast, Y \in \mathcal{P}\}$.
    The ordering $\preceq$ naturally projects to an ordering of $\mathcal{P}_\ast$,
    and somehow abusing the notation we also denote this projection by $\preceq$.

    Since every $Y \in \mathcal{P}$ is defined as $\pi_G(\mathrm{Ball}_{H[U]}(x,\delta))$, 
    every such $Y$ has (strong) diameter at most $2\delta$
    and hence every $Y_\ast \in \mathcal{P}_\ast$ has weak diameter at most $2\delta$.

    Setting $\delta = \tinyR / 2$, it suffices to prove that for every $Y_\ast \in \mathcal{P}_\ast$ and every $\mainR > \tinyR$ it holds that
    \begin{equation}\label{eq:wcol-tw}
    \wcol_{\mainR}(G, \mathcal{P}_\ast, \preceq, Y_\ast) \leq 
    \min \left( k2^k\binom{k+\lceil \mainR / \delta \rceil}{k}, 
    k \cdot \left(2 \lceil \mainR/\delta \rceil + k + 1 \right)^{3 \lceil \mainR / \delta \rceil + 3}
    \right).
    \end{equation}

    Fix $\mainR > \tinyR$, $Y_\ast \in \mathcal{P}_\ast$, let $Y$ be its corresponding part in $\mathcal{P}$, $t =\mathrm{top}(Y)$ and $X = X(t)$.
    \begin{claim}
        Let $Y_\ast' \in \WReach{\mainR}{G}{\mathcal{P}_\ast}{\preceq}{Y_\ast}$,
        $Y' \in \mathcal{P}$ be its corresponding part,
        and $t' = \mathrm{top}(Y')$.
        Then, $t'$ is a topmost node which is an ancestor of $t$ and in $H - \left(\bigcup_{Z \triangleleft X(t')} Z\right)$, the distance between $X(t') \cap \beta(t')$
        and $X$ is at most $\mainR + \delta$.
    \end{claim}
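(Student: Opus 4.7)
The plan is to unpack the weak-reachability hypothesis into a concrete path $P_G$ in $G$ of length at most $\mainR$, running from some $v_0 \in Y_\ast$ to some $v_m \in Y_\ast'$ and using only vertices in $\bigcup\{Y'' \in \mathcal{P}_\ast : Y_\ast' \preceq Y''\}$. Write $X'$ for the unique set of $\mathcal{F}$ generating $Y'$; then $\mathrm{top}(X') = t'$, so the topmost-node part of the claim is immediate and $X(t') = X'$. I will lift $P_G$ to a walk $P_H$ in $H$ canonically: for each edge $v_i v_{i+1}$ of $P_G$, pick a bag of $T$ containing both endpoints and route through it using the zero-weight ``same vertex, adjacent bags'' edges of $H$. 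Since in-bag edges of $H$ carry weight equal to the corresponding $G$-distance, the total weight of $P_H$ is at most that of $P_G$, hence at most $\mainR$. I start $P_H$ at some $(v_0, s_0) \in X$ (such a pair exists because $v_0 \in Y \subseteq \pi_G(X)$) and end at some $(v_m, s_m') \in X'$; then I extend $P_H$ by at most $\delta$ inside $H[X']$ from $(v_m, s_m')$ to $W' = X(t') \cap \beta(t')$, using the fact that $X'$ is a $\delta$-ball around $W'$ in the appropriate induced subgraph. The concatenation has length at most $\mainR + \delta$.

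The heart of the argument is showing that $P_H$ avoids $\bigcup_{Z \triangleleft X'} Z$. Suppose toward contradiction that some $(v, s) \in V(P_H)$ lies in $Z$ with $Z \triangleleft X'$. Since the first coordinates of $V(P_H)$ coincide with $V(P_G)$, the vertex $v$ lies in some $Y'' \in \mathcal{P}_\ast$ with $Y_\ast' \preceq Y''$. On the other hand, $(v, s) \in Z$ yields $v \in \pi_G(Z)$, so by the observation preceding the claim there is $Y_0 \in \mathcal{P}$ generated by $Z$ with $v \in Y_0$; the defining property of $\preceq$ then forces $Y_0 \prec Y'$. Letting $Y^v$ be the $\preceq$-minimum element of $\mathcal{P}$ containing $v$, we obtain $Y^v \preceq Y_0 \prec Y'$; since the parts of $\mathcal{P}_\ast$ are pairwise disjoint, the unique $\mathcal{P}_\ast$-part containing $v$ must be $Y^v_\ast \prec Y_\ast'$, contradicting $v \in Y''$. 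With avoidance in hand I rule out the two alternatives to $t'$ being an ancestor of $t$: if $t'$ were a proper descendant of $t$, then $X \triangleleft X'$ forces $Y \prec Y'$ by the defining property of $\preceq$ and contradicts $Y' \preceq Y$; if $t$ and $t'$ were incomparable in $T$, then the bag sequence of $P_H$, which is a walk in $T$ from a descendant of $t$ to a descendant of $t'$, must traverse $\mathrm{LCA}(t, t')$, a strict ancestor of $t'$. Any vertex of $P_H$ whose second coordinate is a strict ancestor of $t'$ belongs to some $Z \in \mathcal{F}$ with $\mathrm{top}(Z)$ likewise a strict ancestor of $t'$, hence $Z \triangleleft X'$, contradicting the avoidance.

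The step I expect to be the main obstacle is verifying that the length-$\delta$ extension inside $H[X']$ also avoids $\bigcup_{Z \triangleleft X'} Z$. By \cref{cl:intersections}, an intersection $X' \cap Z \neq \emptyset$ with $Z \triangleleft X'$ forces $\mathrm{col}(Z) > \mathrm{col}(X')$ and $\mathrm{top}(Z)$ to be a strict ancestor of $t'$, and such an intersection must meet $\beta(t')$. I plan to handle this by choosing the extension as a shortest path in $H[X']$ routed through $X' \setminus \bigcup_{Z \triangleleft X'} Z$, leveraging that $X'$ sits inside the $\mathrm{wdom}(t')$ captured at $X'$'s construction (which excludes every earlier $X''$ with $\mathrm{top}(X'')$ an ancestor of $t'$) and that any later conflicting $Z$ must reach $X'$ through $\beta(t')$. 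This should confine the forbidden intersections to a thin region near $\beta(t')$ that the extension can bypass without exceeding length $\delta$, completing the bound $\mainR + \delta$.
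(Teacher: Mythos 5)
Your high-level plan matches the paper's: unpack weak reachability into a path $P$ in $G$, lift it canonically to a walk $P_H$ in $H$ of the same (or smaller) length, use the projection through $\mathcal{P}$ to show that $P_H$ avoids $\bigcup_{Z \triangleleft X'}Z$, and use a bag-sequence/LCA argument in $T$ to rule out incomparability of $t$ and $t'$. These parts of your write-up are correct and essentially identical to the paper's argument (the paper phrases the comparability step by picking $u \in V(P)\cap\beta(t_0)$ directly rather than explicitly lifting to $H$, but the content is the same).

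The genuine gap is exactly the step you flag at the end and do not prove: that the length-$\le\delta$ extension from $P_H$'s endpoint in $X'$ to $W' = X'\cap\beta(t')$ can be taken inside $H - \bigcup_{Z \triangleleft X'}Z$. You observe correctly (via Claim~\ref{cl:intersections}) that any $Z\triangleleft X'$ intersecting $X'$ must have $\mathrm{col}(Z)>\mathrm{col}(X')$ and $\mathrm{top}(Z)$ a proper ancestor of $t'$, with the intersection meeting $\beta(t')$; but ``this should confine the forbidden intersections to a thin region near $\beta(t')$ that the extension can bypass'' is a hope, not an argument --- such a $Z$ is created \emph{after} $X'$, so it is not excluded from $\mathrm{wdom}(t')$ at $X'$'s construction, and nothing in your write-up rules out that a shortest path in $H[X']$ from the endpoint of $P_H$ to $W'$ runs through $Z$, nor that every vertex of $W'$ reachable within $\delta$ lies in some such $Z$. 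You need a concrete argument here (for instance, showing that a suitably chosen lift of $\gamma$ and a carefully routed extension cannot meet such a $Z$, or re-deriving avoidance for the extension's vertices via their membership in $Y' = \pi_G(\mathrm{Ball}_{H[U']}(x_0,\delta))$). As written, your proof is incomplete precisely at this point. It is worth noting that the paper's own proof is also terse here --- it asserts ``thus the distance from $X$ to $X'\cap\beta(t')$ is at most $\mainR+\delta$'' without spelling out why the $\delta$-long tail stays inside $H - \bigcup_{Z\triangleleft X'}Z$ --- so you have put your finger on a step that the authors treat as obvious but that genuinely requires justification.
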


    \begin{subproof}
        Let $P$ be a path of length at most $\mainR$ from $Y_\ast$ to $Y_\ast'$ in
        $G-\left(\bigcup_{Z_\ast \prec Y_\ast'} Z_\ast\right)$.
        Then, $P$ is a path of length at most $\mainR$ from $Y$ to $Y'$ in
        $G-\left(\bigcup_{Z \prec Y'} Z\right)$.
        Let $X' \in \mathcal{F}$ generate $Y'$. Then, 
        $\mathrm{top}(X') = \mathrm{top}(Y') = t'$ so $t'$ is a topmost node and $X' = X(t')$.
        
        We first argue that $t$ and $t'$ are in an ancestor-descendant relation in $T$.
        By contradiction, suppose not and let $t_0$ be their lowest common ancestor in $T$.
        Let $\alpha$ be the endpoint of $P$ in $Y$. Since $\alpha \in Y \subseteq \pi_G(X) \subseteq \pi_G(\mathrm{dom}(t))$, there exists a node $t_{\alpha}$ of $T$ such that $(\alpha, t_{\alpha}) \in \mathrm{dom}(t)$, so $\alpha \in \beta(t_{\alpha})$ and $t_{\alpha}$ is a descendant of $t$.
        Similarly, if $\gamma$ is the endpoint of $P$ in $Y'$, there exists a descendant $t_{\gamma}$ of $t'$ such that $\gamma \in \beta(t_{\gamma})$.
        Then, $t_0$ is a node of the unique path from $t_\alpha$ to $t_\gamma$ in $T$ so there exists $u \in V(P) \cap \beta(t_0)$. 
        Since $(u, t_0) \in V(H)$, there exists $F \in \mathcal{F}$ such that $(u, t_0) \in F$. Therefore, $(u, t_0) \in \mathrm{dom}(\mathrm{top}(F))$ so $\mathrm{top}(F)$ is an ancestor of $t_0$.
        There exists $Z \in \mathcal{P}$ generated by $F$ such that $u \in Z$.
        Since $u \in V(P)$ then $Z \succeq Y'$. 
        However, $\mathrm{top}(Z) = \mathrm{top}(F)$ is an ancestor of $t_0$, so a proper ancestor of $t'$. Thus, $\mathrm{top}(Z)$ is a proper ancestor of $\mathrm{top}(Y')$ in $T$, so $Z \prec Y'$, a contradiction.
        Thus, $t$ and $t'$ are comparable in $T$, and $Y' \preceq Y$ implies that $t'$ is an ancestor of $t$.
        
        Finally, $P$ witnesses that in $H - \left(\bigcup_{Z \triangleleft X'} Z\right)$
        the distance from $X$ to $X'$ is at most $\mainR$, 
        thus the distance from $X$ to $X' \cap \beta(t')$ is at most $\mainR+\delta$.
    \end{subproof}

    \begin{claim}
        The number of topmost nodes $t'$ which are ancestors of $t$ such that in $H - \left(\bigcup_{Z \triangleleft X(t')} Z\right)$
        the distance between $X(t') \cap \beta(t')$ and $X$ is at most $\mainR + \delta$ is bounded by
        \begin{enumerate}
            \item $2^k\binom{k+\lceil\mainR/\delta\rceil}{k} \leq (2\lceil \mainR/\delta \rceil + 2k)^k$, and
            \item $(2 \lceil \mainR/\delta \rceil + k + 1)^{3\lceil \mainR/\delta \rceil + 3}$.
        \end{enumerate}
    \end{claim}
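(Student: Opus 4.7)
\begin{subproof}
The plan is to encode each qualifying pair $(t', X(t'))$ by a compact combinatorial fingerprint, argue that distinct pairs receive distinct fingerprints, and then count the fingerprints. Throughout, set $q \coloneqq \lceil \mainR/\delta \rceil$, and for every qualifying topmost ancestor $t'$ fix a shortest witness path $P_{t'}$ in $H_{t'} \coloneqq H - \bigcup_{Z \triangleleft X(t')} Z$ from $X$ to $X(t') \cap \beta(t')$; its total weight is at most $\mainR + \delta \leq (q+1)\delta$. Recall that inter-bag edges of $H$ carry weight zero, so all of this weight is concentrated on the intra-bag edges of $P_{t'}$.

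For the first bound, I would define the fingerprint of $t'$ as a pair $(F_{t'}, \sigma_{t'})$, where $F_{t'} \subseteq \beta(t)$ is the set of vertices in the bag at $t$ from which some vertex of $X(t') \cap \beta(t')$ is reachable in $H_{t'}$ within weight $(q+1)\delta$, and $\sigma_{t'}$ is the rounded distance profile that records, for each $v \in F_{t'}$, the integer $\lfloor \dist_{H_{t'}}((v,t), X(t') \cap \beta(t'))/\delta \rfloor \in \{0,1,\ldots,q\}$. Since $|\beta(t)| \leq k$, the subset $F_{t'}$ contributes at most $2^k$ possibilities, while the distance information (viewed as a multiset of size at most $k$ with entries in $\{0,\ldots,q\}$) contributes at most $\binom{k+q}{k}$ possibilities via a standard stars-and-bars count. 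For injectivity, suppose two distinct qualifying pairs at $t'_1 \triangleleft t'_2$ share a fingerprint. Then, matching the common starting vertex and its truncated distance, one concatenates an appropriate segment of $P_{t'_2}$ with a reversed segment of $P_{t'_1}$ to obtain a path in $H[\mathrm{dom}(t'_1)]$ from $X(t'_1) \cap \beta(t'_1)$ to $X(t'_2) \cap \beta(t'_2)$ that has length at most $\delta$ and avoids every $Z \in \mathcal{F}$ with $\mathrm{top}(Z)$ a proper ancestor of $t'_1$ and color strictly less than $\mathrm{col}(X(t'_1))$. Invoking \cref{cl:same-color-far-away} then forces $X(t'_1) \cap X(t'_2) \neq \emptyset$, which combined with \cref{cl:intersections} yields a contradiction on the color/topmost structure of $\mathcal{F}$.

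For the second bound, I would instead encode each $t'$ by the combinatorial trace of $P_{t'}$: a sequence of at most $3q+3$ symbols drawn from an alphabet of size at most $2q+k+1$, where each symbol records either the next intra-bag vertex visited (at most $k$ options within the currently active bag), a tree-edge traversal, or one of a bounded number of distance-scale markers. The length bound $3q+3$ follows from the weight budget $(q+1)\delta$ together with the observation that each positive-weight intra-bag segment can be accounted for by a constant number of symbols after suitable bookkeeping. Two qualifying $t'$ with the same trace again admit a short stitched path in the right restricted subgraph, and the same application of \cref{cl:same-color-far-away} and \cref{cl:intersections} rules this out.

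The main obstacle is the injectivity step in both bounds: given two candidate witness paths $P_{t'_1}$ and $P_{t'_2}$, one has to decompose them along the separator bag $\beta(t'_1)$ in the tree decomposition, verify that the stitched path lies entirely in $H[\mathrm{dom}(t'_1)]$ and avoids every forbidden set, and only then apply \cref{cl:same-color-far-away} to extract the contradiction. Once this distinctness is in hand, the combinatorial counts for the two bounds reduce to routine stars-and-bars and alphabet-power estimates, and the claim's $\min$ follows by taking whichever of the two fingerprints is cheaper.
\end{subproof}
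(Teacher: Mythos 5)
The fingerprint idea in your first bound does not give the injectivity you need, and that is a genuine gap. Your fingerprint records, for each $v \in \beta(t)$, the truncated distance from $(v,t)$ to $X(t')\cap\beta(t')$. Suppose two qualifying $t'_1,t'_2$ (with $t'_1$ a proper ancestor of $t'_2$) agree on this fingerprint. You then claim to "concatenate an appropriate segment of $P_{t'_2}$ with a reversed segment of $P_{t'_1}$" to obtain a path of length at most $\delta$ between $X(t'_1)\cap\beta(t'_1)$ and $X(t'_2)\cap\beta(t'_2)$. But if both paths start at the same vertex $(v,t)$ and have truncated distance, say, $5$, their concatenation is a walk through $(v,t)$ whose length is roughly the \emph{sum} of the two distances, i.e.\ up to about $10\delta$, not at most $\delta$. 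Agreement of rounded distances controls the \emph{difference} of the path lengths, not the length of the glued path, and there is no obvious way to stitch a prefix of one with a suffix of the other while keeping the endpoints where you want them. Nothing in the fingerprint forces $X(t'_1)\cap X(t'_2)\neq\emptyset$; two ancestors of $t$ can easily present the same rounded distance profile from $\beta(t)$ while their topmost sets are disjoint. The fact that your count $2^k\binom{k+q}{k}$ matches the paper's formula appears coincidental: the paper derives this number from counting sequences $(a_1,\dots,a_k)$ with $\sum_i \max(a_i-1,0)\leq q$, which is a different combinatorial object.

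The paper's argument is structurally different and this difference matters. It walks along the tree path $Q$ from $t$ to $t'$, identifies the topmost nodes $s^i_j$ of each color $i$ lying on $Q$ (where $P$ is forced to enter $X(s^i_j)$), and then shows via \cref{cl:same-color-far-away} that consecutive same-color $s^i_j$'s cost at least $\delta$ of path weight. Summing against the budget $\mainR+\delta$ bounds the counts $a_i$, and the sequence $(a_1,\dots,a_k)$ uniquely determines $t'$. For the second, polynomial, bound the paper refines this: it uses \cref{cl:intersections} to show that for a fixed color only $k$ sets of that color can intersect a given $X(\alpha_a)$, and then encodes $t'$ by a short alternating sequence $(\alpha_0,\beta_1,\alpha_1,\dots)$ of bounded length. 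Your sketch of a "trace of $P_{t'}$" with "at most $3q+3$ symbols from an alphabet of size $2q+k+1$" leaves every step of this unmotivated: neither the symbol alphabet, the length bound, nor the injectivity of the encoding are argued, and none of the specific structure from \cref{cl:intersections} appears. As written, both bounds are asserted rather than proved, and the first one's injectivity argument fails outright; a correct proof needs to track the tree path from $t$ to $t'$ rather than the distance profile seen from $\beta(t)$.
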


    \begin{subproof}
        Consider such a $t'$. 
        Let $Q$ be the path from $t$ to $t'$ in $T$.
        For $i \in [k]$, let $t_i$ be the topmost node of color $\leq i$ in $V(Q)$ which is the closest to the root; set $t_i = t$ if there is no such node. As $t'$ is a topmost node, $t_k = t'$.
        Set $t_0 = t$ and note that $t=t_0,t_1,\ldots,t_k=t'$ appear in this order
        on $Q$ (some of these nodes may be equal). 
        
        For $i \in [k]$, let $S_i$ be the set of topmost nodes of color $i$ between $t_{i-1}$ and $t_i$ (inclusive) in $Q$ and let $a_i = |S_i|$.
        Let $S = S_1 \cup \ldots \cup S_k$.
        %

        Let $P$ be a shortest path from $X$ to $X(t') \cap \beta(t')$ in $H - \left(\bigcup_{Z \triangleleft X(t')} Z\right)$. $P$ has length at most $\mainR + \delta$.
        
        We claim that for every $s \in S$, we have $\beta(s) \cap V(P) \subseteq X(s)$.
        Let $s \in S$, $i = \mathrm{col}(X(s))$ and consider $x \in \beta(s) \cap V(P)$.
        Consider $Z \in \mathcal{F}$ of minimum color such that $x \in Z$, and let $t_Z = \mathrm{top}(Z)$. Since $x \in V(P)$ then $Z \trianglerighteq X(t')$. 
        Note that $t_Z$ is an ancestor of $s$ so $t_Z$ and $t'$ are comparable and therefore $t'$ is an ancestor of $t_Z$, which implies $t_Z \in V(Q)$.
        Since $x \in \beta(s)$ and $s$ is a topmost node of color $i$ then $\mathrm{col}(x) \leq i$ so $\mathrm{col}(Z) \leq i$.
        Thus, $t_Z$ is a topmost node of color $\leq i$. Since $t_Z$ is an ancestor of $s$, it is a proper ancestor of $t_{i-1}$ so $t_Z$ has color $\geq i$, thus $t_Z$ has color $i$, so $\mathrm{col}(Z) = i$. 
        If $Z \neq X(s)$ then $Z$ was created before $X(s)$ so $x \in \beta(s)$ was colored with color $i$ before $X(s)$ (of color $i$) was created, which is impossible.
        Therefore, $Z = X(s)$ and $x \in X(s)$.

        For $i \in [k]$, denote by $s^i_1, \ldots, s^i_{a_i}$ the nodes in $S_i$ in the order in which they appear in $Q$ (from $t$ to $t'$).
        Then, $Q$ visits the $s^i_j$ by lexicographic order of $(i, j)$.
        Let $x$ be the endpoint of $P$ in $X$. 
        Then, $x \in X \subseteq \dom(t)$ so there exists a descendant $t''$ of $t$ such that $x \in \beta(t'')$.
        Since $P$ is a path in $H$ between a vertex in $\beta(t'')$ and a vertex in $\beta(t')$ then $P$ visits every $\beta(q)$ for $q \in V(Q)$, hence every $\beta(s)$ for $s \in S$. If $s^i_j \in S$, denote by $v^i_j$ the first vertex in $V(P) \cap \beta(s^i_j) \subseteq X(s^i_j)$.

        By construction of $H$, $P$ visits the $v^i_j$ by lexicographically increasing $(i, j)$.
        Let $v^i_j$ and $v^{i'}_{j'}$ be two vertices visited by $P$ such that $(i,j)$ is lexicographically earlier
        than $(i',j')$. We claim that if $X(s^i_j) \cap X(s^{i'}_{j'}) = \emptyset$, then the subpath of $P$
        between $v^i_j$ and $v^{i'}_{j'}$ is longer than $\delta$. 
        

        We have $X(s^i_j), X(s^{i'}_{j'}) \in \mathcal{F}$, $\mathrm{col}(X(s^i_j)) = i$, $\mathrm{col}(X(s^{i'}_{j'})) = i'$, $i \leq i'$, $s^i_j = \mathrm{top}(X(s^i_{j}))$, $s^{i'}_{j'} = \mathrm{top}(X(s^{i'}_{j'}))$ and $s^{i'}_{j'}$ is a proper ancestor of $s^i_j$. 
        Let $P'$ be the subpath of $P$ between $v^i_j$ and $v^{i'}_{j'}$.
        By minimality of $s^{i'}_{j'}$, $P'$ is a path in $H[\mathrm{dom}(s^{i'}_{j'})]$ between $X(s^{i'}_{j'}) \cap \beta(s^{i'}_{j'})$ and $X(s^i_{j})\cap \beta(s^{i}_{j})$.
        Furthermore, if $P'$ visits some $Z \in \mathcal{F}$ such that $\mathrm{top}(Z)$ is a proper ancestor of $s^{i'}_{j'}$ and $\mathrm{col}(Z) < i'$ then $\mathrm{top}(Z)$ would be a proper ancestor of $t_{i'-1}$ and therefore by definition of $t_{i'-1}$, $\mathrm{top}(Z)$ would be a proper ancestor of $t'$, which in turn would imply $Z \triangleleft X(t')$, contradicting that $P$ is a path in $H - \left(\bigcup_{Z \triangleleft X(t')} Z\right)$.
        Therefore, $P'$ does not visit any $Z \in \mathcal{F}$ such that $\mathrm{top}(Z)$ is a proper ancestor of $s^{i'}_{j'}$ and $\mathrm{col}(Z) < i'$.
        Therefore, by \cref{cl:same-color-far-away}, $P'$ has length $> \delta$.
        
        Since for every $i \in [k]$ and $j < a_i$, the sets $X(s^i_j)$ and $X(s^i_{j+1})$ are disjoint (because they have the same color), we have
        \[ \mainR + \delta \geq \mathrm{length}(P) \geq \sum_{i, a_i \geq 1} (a_i - 1) \delta. \]
        Furthermore, one of these inequalities is strict, hence 
        \[ \sum_{i, a_i \geq 1} (a_i - 1) \leq \lceil \mainR/\delta \rceil. \]
        The number of sequences $(a_1, \ldots, a_k)$ of non-negative integers satisfying this condition is \[ \sum_{j=0}^k \binom{k}{j} \binom{\lceil \mainR/\delta \rceil+j}{j} \leq 2^k \cdot \binom{k+\lceil \mainR/\delta \rceil}{k} \].

        Since the sequence $(a_1, \ldots, a_k)$ uniquely identifies $t'$, this proves the first promised bound.
        For the second bound, we need to refine the sequence of the nodes $s_j^i$ further.

        For a node $s_j^i$, let $f(i,j) = (i',j')$ be the lexicographically minimum tuple strictly greater than $(i,j)$
        such that $X(s_j^i) \cap X(s_{j'}^{i'}) = \emptyset$, and $f(i,j) = \bot$ if no such tuple $(i',j')$ exists. 
        Furthermore, let $g(i,j)$ be the direct predecessor of $f(i,j)$ in the lexicographic order (and $g(i,j) = \bot$ if $f(i,j) = \bot$).
        Observe that if $j < a_i$, then $f(i,j) = (i,j+1)$ and $g(i,j) = (i,j)$, but $f(i,a_i)$ may be different than $(i+1,1)$.  

        Define a sequence $\alpha_0,\alpha_1,\ldots,\alpha_\ell$ with
        $\alpha_0 = (0,1)$, $\alpha_{a+1} = f(\alpha_a)$, $f(\alpha_{\ell - 1}) \neq \bot$ and $f(\alpha_\ell) = \bot$. 
        The definition of $f$ and \cref{cl:same-color-far-away} together imply that
        if $f(i,j) = (i',j')$, then the subpath of $P$ between $v_j^i$ and $v_{j'}^{i'}$ has length
        more than $\delta$ (with the same argument as previously).
        It follows that $\ell \leq \lceil \mainR/\delta \rceil$. 
        Furthermore, for $0 \leq a < \ell$, let $\beta_{a+1} = g(\alpha_a)$. 
        
        For brevity, by $s(\alpha_a)$, $X(\alpha_a)$, and $\mathrm{col}(\alpha_a)$ we denote $s_j^i$, $X(s_j^i)$, and $\mathrm{col}(X(s_j^i))$ where $\alpha_a = (i,j)$ and similarly
        we define $s(\beta_a)$, $X(\beta_a)$, and $\mathrm{col}(\beta_a)$. Note that 
        \[ \mathrm{col}(\alpha_0) \leq \mathrm{col}(\beta_1) \leq \mathrm{col}(\alpha_1) \leq \ldots \leq \mathrm{col}(\beta_\ell) \leq \mathrm{col}(\alpha_\ell). \]
        As $\ell \leq \lceil \mainR/\delta \rceil$, the number of choices of $\ell$ and of
        the sequence $\mathrm{col}(\alpha_0), \mathrm{col}(\beta_1), \mathrm{col}(\alpha_1), \ldots, \mathrm{col}(\beta_\ell), 
        \mathrm{col}(\alpha_{\ell})$ is bounded by 
        \[ \sum_{\ell = 0}^{\lceil \mainR/\delta \rceil} \binom{2\ell+1 +k}{k}. \]
        Observe that for $0 \leq a < \ell$ and $\alpha_a$, it holds that
        $X(\alpha_a) \cap X(g(\alpha_a)) \neq \emptyset$;
        by \cref{cl:intersections}, for every fixed value of $\mathrm{col}(\beta_{a+1})$, there are at most
        $k$ options for a set of $\mathcal{F}$ of this color that intersects $X(\alpha_a)$. 
        Given $\beta_{a+1}$, the value $\alpha_{a+1}$ is defined
        uniquely: $\alpha_{a+1}$ is the successor of $\beta_{a+1}$ in the lexicographic order.

        Therefore, the number of choices for $\ell$ and the sequence $s(\alpha_0), s(\beta_1), s(\alpha_1), \ldots, s(\beta_\ell), s(\alpha_\ell)$ is bounded by
        \[ \sum_{\ell = 0}^{\lceil \mainR/\delta \rceil} \binom{2\ell+1 +k}{k} \cdot k^\ell \leq 
        \binom{2 \lceil \mainR/\delta \rceil +k + 1}{k} \cdot (k+1)^{\lceil \mainR/\delta \rceil} \leq (2 \lceil \mainR/\delta \rceil + k + 1)^{3\lceil \mainR/\delta \rceil + 1}. \]
        Finally, given $s(\alpha_\ell)$, there are at most $k^2$ choices for $t'$ as the topmost node
        of color at least $\mathrm{col}(\alpha_\ell)$ that is an ancestor of $s(\alpha_\ell)$ such that $X(t') \cap X(\alpha_\ell) \neq \emptyset$. This gives the second promised bound.
    \end{subproof}

    Putting together the previous two claims and noting that for every $t' \in V(T)$ 
    there are at most $k$ sets $Y'_\ast \in \mathcal{P}_\ast$ with $\mathrm{top}(Y'_\ast) = t'$
    yields the desired bound.
\end{proof}

\bibliographystyle{plain}
\bibliography{biblio.bib}

\end{document}